%% LyX 2.1.4 created this file.  For more info, see http://www.lyx.org/.
%% Do not edit unless you really know what you are doing.
\documentclass[12pt]{article}
\usepackage[latin9]{inputenc}
\usepackage{geometry}
\geometry{verbose,tmargin=2cm,bmargin=2cm}
\usepackage{float}
\usepackage{rotfloat}
\usepackage{mathtools}
\usepackage{amsmath}
\usepackage{amsthm}
\usepackage{amssymb}
\usepackage{graphicx}
\usepackage{setspace}
\usepackage{esint}
\usepackage[authoryear]{natbib}
\usepackage{nicefrac}
\usepackage{verbatim}
\onehalfspacing

\makeatletter

%%%%%%%%%%%%%%%%%%%%%%%%%%%%%% LyX specific LaTeX commands.
%% Because html converters don't know tabularnewline
\providecommand{\tabularnewline}{\\}

%%%%%%%%%%%%%%%%%%%%%%%%%%%%%% Textclass specific LaTeX commands.
  \theoremstyle{plain}
%  \newtheorem{lem}{\protect\lemmaname}
%  \theoremstyle{plain}
%  \newtheorem{cor}{\protect\corollaryname}

%%%%%%%%%%%%%%%%%%%%%%%%%%%%%% User specified LaTeX commands.
\usepackage[toc,page]{appendix}

\usepackage{latexsym}
\usepackage{amsthm}\usepackage{amsfonts}\usepackage{graphicx}\usepackage{epsfig}
\usepackage{bm}
\newcommand*{\patchAmsMathEnvironmentForLineno}[1]{%
      \expandafter\let\csname old#1\expandafter\endcsname\csname #1\endcsname
      \expandafter\let\csname oldend#1\expandafter\endcsname\csname end#1\endcsname
      \renewenvironment{#1}%
         {\linenomath\csname old#1\endcsname}%
         {\csname oldend#1\endcsname\endlinenomath}}%
    \newcommand*{\patchBothAmsMathEnvironmentsForLineno}[1]{%
      \patchAmsMathEnvironmentForLineno{#1}%
      \patchAmsMathEnvironmentForLineno{#1*}}%
    \AtBeginDocument{%
    \patchBothAmsMathEnvironmentsForLineno{equation}%
    \patchBothAmsMathEnvironmentsForLineno{align}%
    \patchBothAmsMathEnvironmentsForLineno{flalign}%
    \patchBothAmsMathEnvironmentsForLineno{alignat}%
    \patchBothAmsMathEnvironmentsForLineno{gather}%
    \patchBothAmsMathEnvironmentsForLineno{multline}%
    }
\usepackage[mathlines,displaymath]{lineno}
%\linespread{1}
%\linenumbers

%\addtolength{\textwidth}{1.2in}
%\addtolength{\oddsidemargin}{-0.5in}
%\addtolength{\textheight}{1.6in}
%\addtolength{\topmargin}{-0.8in}

\def\dispmuskip{\thinmuskip= 3mu plus 0mu minus 2mu \medmuskip=  4mu plus 2mu minus 2mu \thickmuskip=5mu plus 5mu minus 2mu}
\def\textmuskip{\thinmuskip= 0mu                    \medmuskip=  1mu plus 1mu minus 1mu \thickmuskip=2mu plus 3mu minus 1mu}
\def\beq{\dispmuskip\begin{equation}}    \def\eeq{\end{equation}\textmuskip}
\def\beqn{\dispmuskip\begin{displaymath}}\def\eeqn{\end{displaymath}\textmuskip}
\def\bea{\dispmuskip\begin{eqnarray}}    \def\eea{\end{eqnarray}\textmuskip}
\def\bean{\dispmuskip\begin{eqnarray*}}  \def\eean{\end{eqnarray*}\textmuskip}

\newcommand{\eps}{\epsilon}

\newcommand{\wt}{\widetilde}

                        % natural e
%\def\B{\{0,1\}}
\def\B{{\cal B}}
                         % Expectation

                         % Probability

 				% identity matrix

\def\d{\delta}

\def\M{{\cal M}}
\def\N{{\mathbb N}}
\def\R{{\mathbb R}}
\def\X{{\cal X}}
\def\Y{{\cal Y}}

\def\({\big ( }
\def\){\big ) }
\def\transp{{\tiny {\rm T}}}
\def\PHS{PHS}
\def\CPMMH{CPMMH}
\def\corrPMMHPG{CPHS}
\def\CPHS{CPHS}
\def\PGBS{PGBS}
\def\PMMH{PMMH}
\def\PMCMC{PMCMC}
\def\ov{\overline}
\def\d{{\rm d}}

\usepackage[mathlines,displaymath]{lineno}
%\linenumbers
\usepackage{authblk}
\makeatother

\newtheorem{assumption}{Assumption}
\newtheorem{lemma}{Lemma}
\newtheorem{corollary}{Corollary}
\newtheorem{proposition}{Theorem}

\newtheorem{definition}{Definition}
\newcommand{\bs}{\boldsymbol}
\floatstyle{ruled}
\newfloat{algorithm}{tbp}{loa}
\providecommand{\algorithmname}{Algorithm}
\floatname{algorithm}{\protect\algorithmname}
\renewcommand{\eqref}[1]{Eq.~(\ref{#1})}

\newcommand{\myquad}{\hspace{5pt}}

\def\({\Big ( }
\def\){\Big )}

%\usepackage{xr}
%\externaldocument{CPMCMC_DG_supplement_FINAL}

\sloppy

\begin{document}
\include{def}
%\title{On Scalable Particle Markov Chain Monte Carlo}
\title{The Correlated Particle Hybrid Sampler for State Space Models.}

%\author{David Gunawan,  Christopher K. Carter
%and Robert Kohn \\
%School of Economics \\
%University of New South Wales }

\author[1,3]{David Gunawan}
\author[2,3]{Chris Carter}
\author[2,3]{Robert Kohn}
\affil[1]{University of Wollongong}
\affil[2]{University of New South Wales}
\affil[3]{ACEMS}
\date{}
\maketitle
\begin{abstract}
Particle Markov Chain Monte Carlo (PMCMC)  is a general computational approach to Bayesian inference for general state space models. Our article scales up PMCMC in terms of the number of observations and parameters by generating the parameters that are highly correlated  with the states with the states \lq integrated out\rq{} in a pseudo marginal step; the rest of the parameters are generated conditional on the states. The novel contribution of our article 
is to make the pseudo-marginal step  much more efficient by positively correlating the numerator and denominator in the Metropolis-Hastings acceptance probability. This is done in a novel way by expressing the target density of the PMCMC
in terms of the basic uniform or  normal random numbers used in the sequential Monte Carlo algorithm, instead of the standard way in terms of state particles. 
We also show that the new sampler combines and generalizes two separate particle MCMC approaches: particle Gibbs and the correlated pseudo marginal Metropolis-Hastings. 
We investigate the performance of the hybrid sampler  empirically by applying it to univariate and multivariate stochastic volatility models
having both a large number of parameters and a large number of latent states and show that it is much more efficient than competing PMCMC methods. 
\end{abstract}
\textbf{Keywords:} Correlated pseudo-marginal Metropolis-Hastings; Factor stochastic volatility model; Particle Gibbs sampler.

\section{Introduction\label{sec:Introduction}}
Our article proposes a novel particle Markov chain Monte Carlo (PMCMC) sampler for the general class of state space models defined in section~\ref{sec: Preliminaries}. The sampler generates parameters that are highly correlated with the states, or are difficult to generate conditional on the states, using a pseudo marginal step with the states \lq integrated out\rq{}, while the rest of the parameters are generated by a particle Gibbs step.  \citet{Mendes2020} shows that such an approach allows the sampler to scale up in terms of the number of parameters. Our main contribution is to make the pseudo marginal step much more efficient than in \citet{Mendes2020} as explained below. 

In a seminal paper, \citet{andrieuetal2010} 
propose two PMCMC methods
for  state space models. The first is the particle marginal Metropolis-Hastings
(PMMH), where the parameters are generated with the unobserved states
\lq integrated out\rq. The second is particle Gibbs (PG) 
which generates the parameters conditional on the states. The basic idea of PMCMC
methods is to define a target distribution on an augmented space that
includes all of the parameters and the particles generated by a sequential
Monte Carlo (SMC) algorithm and has the joint posterior density of the parameters
and states as a marginal density.

\citet{Mendes2020} propose the particle hybrid sampler (\PHS) that combines the two approaches in \citet{andrieuetal2010}.
Parameters that are efficiently generated by conditioning on the states are generated using a PG step(s); all the other parameters are generated by particle marginal Metropolis-Hastings steps.
For example, the  particle marginal Metropolis-Hastings is suitable for parameters that are highly correlated with the states, or parameters that very expensive to generate efficiently conditional on the states.

\citet{Deligiannidis2018}  improve the efficiency of the PMMH sampler by proposing
the correlated pseudo-marginal (\CPMMH) method, which correlates the
random numbers used in constructing  the estimated likelihood
at the current and proposed values of the parameters.
They show that by inducing a high correlation in successive iterations between those
random numbers, it is necessary to increase the number of particles
$N$ in proportion to $T^{\nicefrac{k}{k+1}}$, where $T$ is the
number of observations and $k$ is the state dimension. The computational
complexity of the correlated pseudo-marginal method is $O\left(T^{\nicefrac{2k+1}{k+1}}\right)$,
up to a logarithmic factor, compared to $O\left(T^{2}\right)$ for the standard PMMH sampler.
This shows that the correlated pseudo-marginal method can scale up with the number of observations  more easily than  standard pseudo-marginal methods, as long as the state dimension $k$ is not too large. A number of other papers, including \citet{yildirim2018} and \citet{andrieu2018utility}, also extend the literature on PMCMC methods. 

Our article builds on the correlated pseudo-marginal (\CPMMH) method of \citet{Deligiannidis2018} and the particle hybrid sampler (\PHS{}) sampler 
of \citet{Mendes2020} and proposes a novel PMCMC sampler for state space models, which we call the correlated  particle hybrid sampler (\corrPMMHPG). It improves on both the CPMMH and \PHS{} to allow for further improvements in scaling up particle MCMC in terms of the number of 
parameters and the length of the series. The \corrPMMHPG{} expresses the target density of the PMCMC in terms of  the basic uniform or standard normal random numbers used in the sequential Monte Carlo algorithm, rather than the state particles, as done in the rest of the literature. This novel representation allows the estimated likelihoods in the numerator and denominator of the acceptance probability of the MH step of the PMMH to be correlated as in the \CPMMH{}, while still allowing the PMMH and PG steps to be combined in a hybrid form. Section~\ref{S: corr PMCMC} of our article  
also shows that the PG and \CPMMH{} samplers are special cases of the \corrPMMHPG{} sampler. While our \CPHS{} and the \PHS{} of \citet{Mendes2020} similarly combine PG and PMMH sampling, they technically involve quite different constructions of the target density. In particular, it is the use of the basic random numbers in the target density of the \CPHS{} that allows us to correlate the numerator and denominator in the PMMH step.  

%a sample of
%daily US stock returns to estimate 

We illustrate the \corrPMMHPG{} empirically using a univariate stochastic volatility model
with leverage and a multivariate factor stochastic volatility model with
leverage using real and simulated datasets. 
We use these univariate and 
multivariate models to compare the performance of our sampling scheme to several
competing sampling schemes currently available in the literature. 

\citet{Kim:1998} estimate the univariate SV model without leverage by  modelling the log of the squared returns and approximating  the $\chi^2_1$ distribution of the innovations by a seven component mixture of normals; the paper then corrects the approximation using importance sampling, which makes their approach simulation consistent.
Current popular MCMC estimation approaches for estimating the factor stochastic volatility model without leverage proposed by \citet{Chib2006}, \citet{Kastner:2017}, and \citet{Kastner2019} follow the approach proposed by \citet{Kim:1998}, but are neither exact nor flexible. The first drawback of the approaches for estimation of the factor SV models mentioned above is that they do not correct for their mixture of normals approximations  and so their estimators are not simulation consistent, with the approximation errors increasing with the dimension of the state vector.  Section~5.2 of \citet{gunawan2022flexible} reports that the MCMC sampler of \citet{Kastner:2017} can give different results to the CPHS\footnote{Note that  \citet{gunawan2022flexible} refer to the CPHS as PHS.}
The second drawback of the factor SV estimators using 
mixture of normals approximations is that they are inflexible because they require different approximations when applied to different factor SV models. For example, the factor stochastic volatility with leverage requires approximating the joint distribution of outcome and volatility innovations by a ten-component mixture of bivariate normal distributions as in \citet{Omori2007}. The CPHS is simulation consistent and it does not need to make such approximations and is simulation consistent in the sense that as the number of MCMC samples tends to infinity, the PHS iterates converge to the true posterior distributions of the states and parameters.

We also consider a factor stochastic volatility model in which the log-volatilities of the idiosyncratic errors follow a GARCH
diffusion continuous volatility time process which does not have a closed form transition density \citep{Kleppe2010,Chib2004}. It is well known that the Gibbs-type MCMC and PMCMC samplers, including the MCMC sampler of \citet{Kastner:2017} and particle Gibbs of \citet{andrieuetal2010}, perform poorly for generating parameters of a diffusion model \citep{Stramer2011}. We show in section \ref{results multivariate examples} that the particle Gibbs performs poorly in estimating the parameters of the GARCH diffusion model.

The rest of the article is organized as follows. Section~\ref{sec: Preliminaries}
outlines the basic state space model,
the sequential Monte Carlo algorithm for this model, and the  backward simulation method.
Section~\ref{S: corr PMCMC} introduces the \corrPMMHPG{}, its invariant distribution,
the constrained conditional sequential Monte Carlo algorithm, which is an important component of the \corrPMMHPG{}, and shows that the PG and the CPMMH samplers are special cases of the \corrPMMHPG{}. \ \ 
Sections \ref{S: univariate example SV model} and \ref{Multivariate example} compare the performance of the \corrPMMHPG{} to
competing PMCMC methods for estimating univariate and multivariate stochastic volatility models.
Section \ref{sec:discussion} discusses how the \corrPMMHPG{} can estimate some other factor stochastic volatility models that would be very difficult to estimate by existing methods.
The paper  has an online supplement containing some further technical and empirical results.

%; it also discusses some variants of the PHS that may be useful in practice
%\bibliographystyle{apalike}
%\bibliography{references_v1,pfmcmc}
%\end{document}
\begin{comment}
%The paper has three appendices: Appendix \ref{S: proofs} contains the proofs to the formal results in the paper; Appendix \ref{S: algorithms} gives some of the algorithms that are referred to in the body of the paper. 
\end{comment}

\section{Preliminaries\label{sec: Preliminaries}}
This section introduces the state space model,
sequential Monte Carlo (SMC) and backward simulation. We use the colon notation for collections of variables, i.e.
$a_{t}^{r:s}=\left(a_{t}^{r},...,a_{t}^{s}\right)$ and for $t\leq u$,
$a_{t:u}^{r:s}=\left(a_{t}^{r:s},...,a_{u}^{r:s}\right)$.

\subsection{State Space Models\label{sub:State-Space-Models}}
We consider the stochastic process $\{(X_t, Y_t), t \geq 1\}$ with parameter $\theta$.
The $Y_t$ are the observations and the $X_t$ form the latent state space process.
The density of $X_1 $ is
$f_1^\theta(x_1), $ the density of $X_t $ given $X_{1:t-1} = x_{1:t-1}, Y_{1:t-1} = y_{1:t-1}$ is
$f_t^\theta(x_t|x_{t-1},y_{t-1})$ ($t \geq 2$), and the density of $Y_t $ given
$X_{1:t} = x_{1:t}, Y_{1:t-1} = y_{1:t-1}$ is $g_t^\theta(y_t|x_t)$.

We assume that the parameter $\theta \in \Theta$,
 where $\Theta $ is a subset of $\R^{d_{\theta}}$, and $p(\theta)$ is the prior for
 $\theta$. The $X_t$ are $\X$ valued and the $Y_t$ are $\Y$ valued and  $g_t^\theta$ and $f_t^\theta$ are densities
with respect to dominating measures which we write as $\d x $ and $\d y$. The dominating
measures are frequently taken to be the Lebesgue measure if $\X\in\B\left(\R^{d_{x}}\right)$
and $\Y\in\B\left(\R^{d_{y}}\right)$, where $\B\left(A\right)$ is
the Borel $\sigma$-algebra generated by the set $A$. Usually $\X=\R^{d_{x}}$
and $\Y=\R^{d_{y}}$.

The joint density function of $\left(x_{1:T},y_{1:T}\right)$ is
\begin{align}\label{eq: joint density}
p\left(x_{1:T},y_{1:T}|\theta\right)=f_{1}^{\theta}\left(x_{1}\right)g_{1}^{\theta}\left(y_{1}|x_{1}\right)\prod_{t=2}^{T}f_{t}^{\theta}\left(x_{t}|x_{t-1},y_{t-1}\right)g_{t}^{\theta}\left(y_{t}|x_{t}\right).
\end{align}
The likelihood $p(y_{1:T}|\theta)$ is $\prod_{t=1}^{T}Z_{t}\left(\theta\right)$,
where $Z_{1}\left(\theta\right)=p\left(y_{1}|\theta\right)$ and $Z_{t}\left(\theta\right)=p\left(y_{t}|y_{1:t-1},\theta\right)$
for $t\geq2$. By using Bayes rule, we can express the joint posterior
density of $\theta$ and $X_{1:T}$ as
\[
p\left(x_{1:T},\theta|y_{1:T}\right)=\frac{p\left(x_{1:T},y_{1:T}|\theta\right)p\left(\theta\right)}{\overline{Z}_{T}},
\]
where the marginal likelihood of $y_{1:T}$ is $\overline{Z}_{T}=\int_{\Theta}\prod_{t=1}^T Z_{t}\left(\theta\right)p\left(\theta\right)d\theta=p\left(y_{1:T}\right)$.
Section \ref{SV with leverage} discusses an example of popular univariate state space models.
\begin{comment}
\bibliographystyle{apalike}
\bibliography{references_v1,pfmcmc}
%\end{document}
\end{comment}
\subsection{Example: the univariate SV model with leverage  \label{SV with leverage}}
%\subsubsection{\label{SV with leverage}}
We consider the univariate stochastic volatility model
with leverage discussed and motivated by \cite{Harvey1996},
\begin{equation}\label{eq: univ SV with leverage}
\begin{aligned}
y_t& = \exp(x_t/2) \epsilon_t, \\ %\label{eq: univ sv with lev obs eqn}
x_1 & \sim N \begin{pmatrix}  \mu, \nicefrac{\tau^2}{1-\phi^2} \end{pmatrix}, \\% \label{eq: univ sv initial condns}
x_{t+1}& = \mu + \phi (x_t - \mu) + \eta_t, \quad ( t \geq 1) \\%  \label{eq: univ sv state trans}\\
\begin{pmatrix}
\eps_t\\\eta_t
\end{pmatrix} & \sim N \begin{pmatrix}  \begin{pmatrix} 0 \\ 0   \end{pmatrix}, &  \begin{pmatrix} 1 &  \rho\tau  \\\rho\tau & \tau^2\end{pmatrix} \end{pmatrix};\\
% \label{eq: univ sv eps eta}
\end{aligned}
\end{equation}
$x_t$ is the latent volatility process  and the sequence $(\eps_t,\eta_t)_{1:T}$  is independent. 
Hence, the observation densities and the state transition densities are: 
\begin{equation}  \label{eq: univ sv state trans eqn}
\begin{aligned}
%x_{t+1} & \sim N(\mu + \phi (x_t - \mu) + \rho \tau \exp(-x_t/2) y_t , \tau^2 (1-\rho^2) \).
g_t^\theta(y_t|x_t) & = N\(0, \exp(x_t)\), \myquad (t \geq 1) \\
f_1^\theta(x_1)& = N\(\mu, \nicefrac{\tau^2} {1- \phi^2}\), \\
f_t^\theta(x_{t+1}|x_t,y_t) & = N(\mu + \phi (x_t - \mu) + \rho \tau \exp(-x_t/2) y_t , \tau^2 (1-\rho^2) \) \myquad (t \geq 1).\\
\end{aligned}
\end{equation}
The persistence parameter $\phi$ is restricted to 
$|\phi| <1$ to ensure that the volatility is stationary. 
The correlation $\rho$ between $\epsilon_t$ and $\eta_t$ also needs to lie between $\pm 1$.
When $\rho = 0$ in \eqref{eq: univ SV with leverage}, we obtain the standard volatility model without leverage.

%\subsubsection{The continuous time GARCH diffusion models \label{GARCHdiffusionexample}}

%Many stochastic volatility diffusion models do not have a closed form transition density,
%e.g., the continuous time  GARCH diffusion process \citep{Chib2004,  Kleppe2010}.  
%The continuous time  GARCH diffusion process $\{x_{t}\}_{t\ge 1}$ satisfies
%\begin{equation}
%\d x_{t}=\left\{ \alpha\left(\mu-\exp\left(x_{t}\right)\right)\exp\left(-x_{t}\right)-\frac{\tau^{2}}%{2}\right\} \d t+\tau \d W_{t},\label{eq:GARCHtransitiondensity},
%\end{equation}
%where the $W_{t}$ are independent Wiener processes. It is 
%necessary to estimate such state space models using  an approximation such as the Euler discretization.
%The Euler scheme approximates the evolution of the log-volatilities ${x}_{t}$ in %\eqref{eq:GARCHtransitiondensity} by placing $M-1$ evenly spaced points between times $t$ and $t+1$.
%We denote the intermediate volatility components by $x_{t,1},...,x_{t,M-1}$, and it is convenient to set %$x_{t,0}=x_{t}$ and $x_{t,M}=x_{t+1}$. The equation for the Euler evolution, starting at $x_{t,0}$ is (see, %for example, \cite{Stramer2011}, pg. 234)
%\begin{align}
%x_{t,j+1}|x_{t,j}\sim N\left(x_{t,j}+\left\{ \alpha\left(\mu-\exp\left(x_{t,j}\right)\right)\exp\left(-x_{t,j}\right)-\frac{\tau^{2}}{2}\right\} \delta,\tau^{2}\delta\right)
%\label{eq:GARCH Euler transitiondensity}
%\end{align}
%for $j=0,...,M-1$, where $\delta = 1/M$.

\subsection{(Correlated) Sequential Monte Carlo\label{SS: smc}}
Exact filtering and smoothing are only available for some special
cases, such as the linear Gaussian state space model. For non-linear and
non-Gaussian state space models, a sequential Monte Carlo (SMC) algorithm
can be used to approximate the joint filtering densities.

Unless stated otherwise, upper case letters indicate random variables
and lower case letters indicate the corresponding values of these
random variables. e.g., $A_{t}^{j}$ and $a_{t}^{j}$, $X_{t}$ and
$x_{t}$.

We use SMC to approximate the joint filtering densities $\left\{ p\left(x_{t}|y_{1:t},\theta\right):t=1,2,...,T\right\} $
sequentially using $N$ particles, i.e., weighted samples $\left\{x_{t}^{1:N},\overline{w}_{t}^{1:N}\right\}$,
drawn from some proposal densities
$m_{1}^{\theta}\left(x_{1}\right) =  m_{1}\left(x_{1}|Y_{1}=y_{1},\theta\right)$ and\newline
$m_{t}^{\theta}\left(x_{t}|x_{1:t-1}\right)=  m_{t}\left(x_{t}|X_{1:t-1}=x_{1:t-1},Y_{1:t}=y_{1:t},\theta\right)$ for $t \geq 2$.

%$\mathcal{S}_{t}^{\theta}=\left\{x_{1:t}\in\boldsymbol{\chi}^t: \pi_t(x_{1:t}|\theta)  >0\right\}$ and \newline
%$\mathcal{Q}_{t}^{ \theta}=\left\{ x_{1:t}\in\boldsymbol{\chi}^t: \pi_{t-1}(x_{1:t-1}|\theta) m_{t}^{ \theta }\left( x_{t}| x_{1:t-1}\right)>0\right\}$.

\begin{definition}\label{def: define weights}
Let
\begin{align*}
w_1^i & := \frac{g_1^\theta(y_1|x_1^i)f_1^\theta(dx_1^i) }{ m_1^\theta(dx_1^i)},
w_t^i  := \frac{g_t^\theta(y_t|x_t^i)f_t^\theta(dx_{t}^i|x_{t-1}^{a_{t-1}^i},y_{t-1}) }{ m_t^\theta(dx_t^i|x_{t-1}^{a_{t-1}^i} )} \myquad
\text{for $t\geq 2$} \myquad \text{and} \myquad
\ov w_t^i  := w_t^i /\sum_{j=1}^N w_t^j .
\end{align*}
\end{definition}

\
The SMC for $t=2, \dots, T$ is implemented  by  using the resampling scheme \newline
$\M(a_{t-1}^{1:N}|\ov w_{t-1}^{1:N},x_{t-1}^{1:N})$, which depends on
$\ov w_{t-1}^{1:N}$ and $x_{t-1}^{1:N}$ for a given $t$. The argument $a_{t-1}^{1:N}$  means that $X_{t-1}^{A_{t-1}^i} = x_{t-1}^{a_{t-1}^i}$
is the ancestor of $X_t^i = x_t^i $. Section \ref{assumptionSMC} of the online supplement gives the assumptions for the proposal densities and resampling scheme.

%We follow \cite{andrieuetal2010} and assume that
%\begin{assumption}\label{ass: resampling scheme}
%For any $k=1,...,N$ and $t=2,..,T$, the resampling
%scheme $\mathcal{M}\left(a_{t-1}^{1:N}|\bar{w}_{t-1}^{1:N}, x_{t-1}^{1:N}\right)$
%satisfies $\Pr\left(A_{t-1}^{k}=j|\bar{w}_{t-1}^{1:N}\right) = \bar{w}_{t-1}^{j}$.\label{assu:resampling}
%\end{assumption}
%Assumption \ref{assu:resampling} is used in Section~\ref{sub:Target-Distributions univariate} to prove Theorem~\ref{lemma: target distn} 
%and is satisfied by all the popular resampling schemes, e.g.,
%multinomial, systematic and  residual resampling. We refer to
%\citet{doucetal2005} for a comparison between resampling schemes
%and \citet{doucetetal2000,merveetal2001,Scharth:2016} for the choice
%of proposal densities.

%The $V_{xt}^i$ are $\mathfrak{V}_x$ valued.
\begin{definition}\label{def: Xt and Vx}
We define $V_{xt}^i$ as the vector random variable used to generate
$X_{t}^i$ given $\theta$ and $x_{t-1}^{a_{t-1}^i}$, where $a_{t-1}^i $ is the ancestor index of $X_t^i$, as defined above.
We write,
\begin{align}\label{eq: def cal X}
X_{1}^i=\mathfrak{X}(V_{x1}^i; \theta) \myquad \text{and} \myquad  \myquad X_{t}^i=\mathfrak{X}(V_{xt}^i; \theta, x_{t-1}^{a_{t-1}^i})
\myquad \text{for $t \geq 2$}.
\end{align}
Denote the distribution of  $V_{xt}^i$ as $\psi_{xt}\left(\cdot\right)$. Common choices for $\psi_{xt}\left(\cdot\right)$ are iid $U\left(0,1\right)$
or iid  $N\left(0,1\right)$ random variables. The function $\mathfrak{X}$ maps the random variable $V_{xt}^i$ to $X_{t}^i$.
\end{definition}

We now discuss how Definition \ref{def: Xt and Vx} is implemented for the univariate
SV model with leverage using the bootstrap filter. At time $t=1$,
we first generate $N$ random variables $v_{x1}^{i}\sim N\left(0,1\right)$,
and define 
\begin{equation}
x_{1}^{i}=\sqrt{\frac{\tau^{2}}{1-\phi^{2}}}v_{x1}^{i}+\mu,\label{V_x example SV1}
\end{equation}
for $i=1,...,N$. At time $t>1$, we generate $N$ random variables
$v_{xt}^{i}\sim N\left(0,1\right)$, and define 
\begin{equation}
x_{t}^{i}=\mu+\phi\left(x_{t-1}^{a_{t-1}^{i}}-\mu\right)+\rho\tau\exp\left(-\frac{x_{t-1}^{a_{t-1}^{i}}}{2}\right)
y_{t-1}+\sqrt{\tau^{2}\left(1-\rho^{2}\right)}v_{xt}^{i},\label{V_x example SVt}
\end{equation}
for $i=1,...,N$.

\begin{definition}\label{def: definition V}
For $t \geq 2$, we define ${V}_{A,t-1}^{1:N}$ as the
vector of random variables used to generate the ancestor indices
$A_{t-1}^{1:N}$ using the resampling scheme $\mathcal{M}\left(\cdot |\bar{w}_{t-1}^{1:N}, x_{t-1}^{1:N}\right)$
and define $\psi_{A,t-1}\left(\cdot\right)$ as the distribution of $V_{A,t-1}^{i}$.
This defines the mapping $A_{t-1}^{1:N}=\mathfrak{   A}\(V_{A,t-1}^{1:N} ; \overline{w}_{t-1}^{1:N},x_{t-1}^{1:N}\)$. The function $\mathfrak{A}$ maps the random variable $V_{A,t-1}^{1:N}$ to $A_{t-1}^{1:N}$ given $\overline{w}_{t-1}^{1:N}$ and $x_{t-1}^{1:N}$.
Common choices for $\psi_{A,t-1}\left(\cdot\right)$ are iid $U\left(0,1\right)$
or iid  $N\left(0,1\right)$ random variables.
\end{definition}

%^{1:N}

We now discuss how Definition~\ref{def: definition V} is implemented for the simple multinomial
resampling scheme.
In the usual resampling scheme, an empirical cumulative distribution function $\widehat{F}_{t-1}^{N}\left(j\right)=\sum_{i=1}^{j}\overline{w}_{t}^{i}$  is first constructed
based on the index of the particle that we want to sample.
We then generate $N$ random variables $v_{A,t-1}^{i}\sim U\left(0,1\right)$,
sample the ancestor indices $a_{t-1}^{i}=\underset{j}{\min}\widehat{F}_{t-1}^{N}\left(j\right)\geq v_{A,t-1}^{i}$,
and then obtain the selected particles $x_{t-1}^{a_{t-1}^{i}}$ for
$i=1,...,N$.
This shows an example of the mapping $A_{t-1}^{1:N}=\mathfrak{   A}\left(v^{1:N}_{A,t-1};\overline{w}_{t-1}^{1:N},x_{t-1}^{1:N}\right)$.
Section~\ref{sub:Conditional-Sequential-Monte Carlo} and 
section \ref{SS: SMC algorithms} of the online supplement show
a more complex mapping function $\mathfrak{A}$, which is 
required by the CPHS.

Next, we define the joint distribution of $\left(V_{x,1:T}^{1:N},V_{A,1:T-1}^{1:N}\right)$ as
\begin{align} \label{eq: joint V}
\psi\left(dV_{x,1:T}^{1:N},dV_{A,1:T-1}^{1:N}\right) :=\prod_{t=1}^{T}\prod_{i=1}^{N}\psi_{xt}\left(dV_{xt}^{i}\right)\prod_{t=1}^{T-1}\prod_{i=1}^{N}\psi_{At}\left(dV^{i}_{At}\right).
\end{align}
The SMC algorithm also provides an unbiased estimate of the likelihood
\begin{align} \label{eq: unbiased likelihood}
\widehat{Z}\left(V_{x,1:T}^{1:N},V_{A,1:T-1}^{1:N},\theta\right) :=\prod_{t=1}^{T}\left(N^{-1}\sum_{i=1}^{N}w_{t}^{i}\right),
\end{align}
which we will often write concisely as $\widehat{Z}\left(\theta\right)$.
%\end{document}

For the Metropolis-within-Gibbs (MWG) steps in part~1 of the CPHS given in section~\ref{sub:Flexible-Correlated-PMMH+PG sampling scheme} to be effective and efficient, we require that the SMC algorithm ensures that the logs of the likelihood estimates
$\textrm{log}\widehat{Z}\left(V_{x,1:T}^{1:N},V_{A,1:T-1}^{1:N},\theta^\ast \right)$ 
and $\textrm{log}\widehat{Z}\left(V_{x,1:T}^{1:N},V_{A,1:T-1}^{1:N}, \theta\right)$ are close when $\theta^\ast$ and $\theta$ are close,
where $\theta^\ast$ is the proposed value of the parameters and $\theta$ is the current value.
This requires implementing the SMC algorithm carefully as a naive implementation introduces discontinuities in
the  logs of the estimated likelihoods due to the resampling steps when $\theta$ and $\theta^\ast$ are even slightly different.
The problem with the simple multinomial resampling described above is that
particles whose indices are close are not necessarily close themselves.
This breaks down the correlation between the logs of the likelihood estimates at the
current and proposed values. 

%creates the discontinuity and

To reduce the variance of the difference in logs of the likelihood estimates \\ 
$\textrm{log}\, \widehat{Z}\left(V_{x,1:T}^{1:N},V_{A,1:T-1}^{1:N},\theta^\ast \right)-\textrm{log}\, \widehat{Z}\left(V_{x,1:T}^{1:N},V_{A,1:T-1}^{1:N}, \theta\right)$ appearing in the MWG acceptance ratio in the one dimensional case, it is necessary to sort the particles from the smallest to largest before resampling the particles. Sorting the particles before resampling helps to ensure that the particles whose indices are close are actually close to each other \citep{Deligiannidis2018}. Hence, the correlation in the logs of likelihood estimates is more likely to be preserved. However, this simple
sorting method does not extend easily to the multivariate case,
because we cannot sort multivariate states in this manner and guarantee
closeness of the particles. \citet{Deligiannidis2018} use the Hilbert sorting method of \citet{Gerber:2015}.
We follow \citet{Choppala2016} and use a simpler and faster Euclidean sorting procedure described in Algorithm~\ref{alg:Multidimensional-Euclidean-Sorting} in section~\ref{Multidimensional Sorting} of the online supplement. 
Algorithm \ref{alg:The-Sequential-Monte carlo algorithm} in section~\ref{SS: SMC algorithms} of the online supplement describes the correlated SMC algorithm that we use,
which is similar to that of \citet{Deligiannidis2018}. 
Algorithm \ref{alg:The-Sequential-Monte carlo algorithm} uses the
multinomial resampling scheme (Algorithm \ref{alg:Multinomial-Resampling-Algorithm}) in section~\ref{SS: SMC algorithms} of the online supplement.

\subsection{Backward simulation\label{SS: backward simulation}}
The \corrPMMHPG{} requires sampling from the particle  approximation of $p\left(x_{1:T}|y_{1:T},\theta\right)$.
Backward simulation is one approach to sampling from the particle approximation. It was 
introduced by \citet{godsilletal2004} and  used in 
\citet{olssonryden2011}; Algorithm \ref{alg:The-backward simulation algorithm} in section~\ref{SS: backward simulation1} of the online supplement describes the backward simulation algorithm.

We denote the selected particles and trajectory by $x_{1:T}^{j_{1:T}}=\left(x_{1}^{j_{1}},...,x_{T}^{j_{T}}\right)$
and $j_{1:T}$, respectively. Backward simulation 
samples the indices $J_{T},J_{T-1},...,J_{1}$ sequentially,
and differs from the ancestral tracing algorithm of \citet{kitagawa1996}, which
only samples one index and traces back its ancestral lineage. Backward simulation is
also more robust to the resampling scheme (multinomial resampling,
systematic resampling, residual resampling, or stratified resampling)
used in the resampling step of the algorithm \citep[see][]{chopinsingh2013}.

\section{The Correlated Particle Hybrid Sampler\label{S: corr PMCMC}}
This section presents the \corrPMMHPG{} for the Bayesian estimation of a state space model. Section \ref{sub:Target-Distributions univariate} presents the target distribution. Section \ref{sub:Flexible-Correlated-PMMH+PG sampling scheme} presents the sampling scheme.
Section \ref{sub:Conditional-Sequential-Monte Carlo} presents the constrained conditional sequential Monte Carlo (CCSMC) algorithm. Sections~\ref{CPMMH algorithm} and \ref{the PG algorithm} discuss the correlated pseudo marginal and the particle Gibbs with backward simulation (PGBS) samplers, respectively.
The methods introduced in this section are used in the univariate models in section~\ref{S: univariate example SV model} and the multivariate models
in section~\ref{Multivariate example}. 

\subsection{Target Distributions\label{sub:Target-Distributions univariate}}
The key idea of CPHS
is to construct a target distribution on an augmented space that includes
all the particles generated by the SMC algorithm and has
the joint posterior density of the latent states and parameters
$p\left(\theta,x_{1:T}|y_{1:T}\right)$ as a marginal.

The \corrPMMHPG{}  targets the distribution
\begin{multline}
\widetilde{\pi}^{N}\left(dv_{x,1:T}^{1:N},dv_{A,1:T-1}^{1:N},j_{1:T},d\theta\right)\coloneqq\\
\frac{p\left(dx_{1:T}^{j_{1:T}},d\theta|y_{1:T}\right)}{N^{T}}\times\frac{\psi\left(dv_{x,1:T}^{1:N},dv_{A,1:T-1}^{1:N}\right)}{m_{1}^{\theta}\left(dx_{1}^{j_{1}}\right)\prod_{t=2}^{T}
\overline{w}_{t-1}^{a_{t-1}^{j_{t}}}m_{t}^{\theta}\left(dx_{t}^{j_{t}}|x_{t-1}^{a_{t-1}^{j_{t}}}\right)}\times\\
\prod_{t=2}^{T}\frac{w_{t-1}^{a_{t-1}^{j_{t}}}f_{t}^{\theta}\left(x_{t}^{j_{t}}|x_{t-1}^{a_{t-1}^{j_{t}}}\right)}
{\sum_{l=1}^{N}w_{t-1}^{l}f_{t}^{\theta}\left(x_{t}^{j_{t}}|x_{t-1}^{l}\right)}, \label{eq:Target distribution}
\end{multline}
where $x_{1:T}^{j_{1:T}}:= \(x_1^{j_1}, x_2^{j_2}, \dots, x_T^{j_T}\)$. For brevity, in \eqref{eq:Target distribution} and below, we write $f_t^\theta(x_t|x_{t-1},y_{t-1})$
as $f_t^\theta(x_t|x_{t-1})$.
We now discuss some properties of the target distribution in \eqref{eq:Target distribution}. They are proved in section~\ref{S: proofs} of the online supplement. 

%The following results obtain some properties of the target distribution and are proved in Appendix~\ref{S: proofs}.
\begin{proposition}\label{lemma: target distn}
The target distribution in \eqref{eq:Target distribution}
has the marginal distribution
\[
\widetilde{\pi}^{N}\left(dx_{1:T}^{j_{1:T}},j_{1:T},d\theta\right)=\frac{p\left(dx_{1:T}^{j_{1:T}},d\theta|y_{1:T}\right)}{N^{T}},
\]
and hence, with some abuse of notation, we write $\widetilde{\pi}^{N}\left(dx_{1:T},d\theta\right)=p\left(dx_{1:T},d\theta|y_{1:T}\right)$.
\end{proposition}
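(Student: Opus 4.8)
The plan is to integrate the target density in \eqref{eq:Target distribution} over the ``auxiliary'' coordinates, namely over all the basic random numbers $v_{x,1:T}^{1:N}$ and $v_{A,1:T-1}$ that do \emph{not} enter the selected trajectory $x_{1:T}^{j_{1:T}}$, while using the change of variables in Definitions~\ref{def: Xt and Vx} and~\ref{def: definition V} to re-express the relevant part of $\psi(dv_{x,1:T}^{1:N}, dv_{A,1:T-1})$ in terms of the particles themselves. Concretely, the maps $\mathfrak{X}(\cdot\,;\theta,\cdot)$ and $\mathfrak{A}(\cdot\,;\bar w,x)$ push $\psi$ forward to the usual SMC sampling distribution: the law of the particle system $(x_{1:T}^{1:N},a_{1:T-1}^{1:N})$ has density $m_1^\theta(dx_1^{1:N})\prod_{t=2}^T \mathcal{M}(a_{t-1}^{1:N}|\bar w_{t-1}^{1:N},x_{t-1}^{1:N})\, m_t^\theta(dx_t^{1:N}|x_{t-1}^{1:N})$. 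This is exactly the normalizing quantity that appears, piece by piece, in the denominators of \eqref{eq:Target distribution}, so the first task is to rewrite $\widetilde\pi^N$ as a density in the particle variables, matching the structure of the standard PMCMC target of \citet{andrieuetal2010}.

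Next I would carry out the marginalization in stages. First, integrate out the coordinates of $v_{x,1:T}^{1:N}$ corresponding to indices $i \neq j_t$ at each time $t$, and the coordinates of $v_{A,1:T-1}$ that generate ancestors $A_{t-1}^k$ for $k$ not on the retained lineage; since those appear only through $\psi$ and through the weight-normalization sums $\sum_l w_{t-1}^l f_t^\theta(\cdot|x_{t-1}^l)$ and $\sum_{j} w_t^j$, one uses Fubini together with Assumption~\ref{assu:resampling} (so that $\Pr(A_{t-1}^k = l | \bar w_{t-1}^{1:N}) = \bar w_{t-1}^l$) to collapse these sums. The key algebraic cancellation is that the third product in \eqref{eq:Target distribution}, $\prod_{t=2}^T \frac{w_{t-1}^{a_{t-1}^{j_t}} f_t^\theta(x_t^{j_t}|x_{t-1}^{a_{t-1}^{j_t}})}{\sum_l w_{t-1}^l f_t^\theta(x_t^{j_t}|x_{t-1}^l)}$, is precisely the backward-simulation kernel, and it is designed to cancel against the resampling weights $\bar w_{t-1}^{a_{t-1}^{j_t}}$ and the proposal densities $m_t^\theta(dx_t^{j_t}|x_{t-1}^{a_{t-1}^{j_t}})$ along the lineage once Definition~\ref{def: define weights} is used to turn $w_t^i m_t^\theta(dx_t^i|\cdot)$ back into $g_t^\theta(y_t|x_t^i)f_t^\theta(dx_t^i|\cdot)$. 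After these cancellations the $T$ factors of $f_1^\theta g_1^\theta \prod_{t\geq 2} f_t^\theta g_t^\theta$ reassemble into $p(dx_{1:T}^{j_{1:T}},d\theta|y_{1:T})$ via \eqref{eq: joint density} and Bayes' rule, and the combinatorial bookkeeping of which indices are free leaves exactly the factor $N^{-T}$ (one factor of $1/N$ per time step for the choice of retained index), giving the claimed marginal $\widetilde\pi^N(dx_{1:T}^{j_{1:T}},j_{1:T},d\theta) = p(dx_{1:T}^{j_{1:T}},d\theta|y_{1:T})/N^T$. Summing over $j_{1:T} \in \{1,\dots,N\}^T$ then removes the $N^{-T}$ and yields the abused-notation statement $\widetilde\pi^N(dx_{1:T},d\theta) = p(dx_{1:T},d\theta|y_{1:T})$.

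The main obstacle I expect is the careful accounting of the change of variables from the basic random numbers to the particles: one has to verify that Assumption~\ref{assu:propstatespace} guarantees the relevant Jacobian/Radon--Nikodym factors are well defined on the support, that the pushforward of $\psi$ under $(\mathfrak{X},\mathfrak{A})$ really is the SMC law with no leftover Jacobian (this is immediate when $\psi_{xt}$ is chosen so that $\mathfrak{X}$ is an inverse-CDF-type map, but needs a clean statement), and — most delicately — that marginalizing the ``off-lineage'' $v$'s genuinely integrates to $1$ \emph{after} the lineage variables have been fixed, which requires being precise about the conditional structure of the resampling draws and the fact that the denominators $\sum_l w_{t-1}^l f_t^\theta(x_t^{j_t}|x_{t-1}^l)$ and $\sum_j w_t^j$ depend on \emph{all} particles, not just those on the lineage. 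This is exactly the place where the backward-sampling factor does its work, so the proof amounts to checking a telescoping cancellation; I would organize it as an induction on $t$ (or a single careful Fubini argument processing $t=T$ down to $t=1$), mirroring the proof of Theorem~2 in \citet{andrieuetal2010} and the conditional-SMC validity argument, but with the extra layer that everything is phrased in the $v$-coordinates rather than directly in the particles.
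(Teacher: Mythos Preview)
Your proposal is correct and matches the paper's approach: the paper records the pushforward identities (that integrating $\psi(dv_{xt}^{j_t})$ over the level set of $\mathfrak{X}$ gives $m_t^\theta(dx_t^{j_t}|\cdot)$, and that integrating $\psi(dv_{A,t-1})$ over the level set of $\mathfrak{A}$ gives $\bar w_{t-1}^{j}$) as a preliminary Lemma, and then carries out exactly the backward-in-time marginalization you describe, at each $t=T,T-1,\dots,1$ integrating out $v_{x,t}^{(-j_t)}$, converting $v_{x,t}^{j_t}$ to $x_t^{j_t}$, and then integrating $v_{A,t-1}$ with a sum over $a_{t-1}^{j_t}$ so that the backward-simulation factor cancels $\bar w_{t-1}^{a_{t-1}^{j_t}}$. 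One small correction to your narrative: the factor $p(dx_{1:T}^{j_{1:T}},d\theta|y_{1:T})/N^T$ already sits explicitly in front of \eqref{eq:Target distribution}, so nothing needs to be ``reassembled'' from $f_t^\theta g_t^\theta$ --- the proof only has to show that the remaining two products integrate to $1$, which is precisely the telescoping cancellation you anticipate (the unpacking of weights via Definition~\ref{def: define weights} is used only for the alternative expression in Theorem~\ref{lemma: alt expression target distn}, not here).
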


The target density in \eqref{eq:Target distribution} involves densities of the basic standard normal and uniform distribution instead of the standard way in terms of densities of state particles. This is done to make the Metropolis-within-Gibbs step in Algorithm \ref{alg:Sampling-Scheme:-The correlated PMMH+PG} more efficient; see section \ref{sub:Flexible-Correlated-PMMH+PG sampling scheme} for further details.  
Theorem \ref{lemma: target distn}  shows that the marginal distribution in \eqref{eq:Target distribution}
has the joint posterior density of the latent states and parameters as the marginal density.
%Corollary \ref{corr: target with j integrated out} is used to derive the acceptance probability in Part 1 in our
%\corrPMMHPG{} (Algorithm \ref{alg:Sampling-Scheme:-The correlated PMMH+PG}).
%Finally, Corollary \ref{cor: j cond on rest} is used by the backward simulation algorithm 
%(Algorithm~\ref{alg:The-backward simulation algorithm} in Appendix~\ref{SS: backward simulation}).

\begin{proposition} \label{lemma: alt expression target distn}
The target distribution in \eqref{eq:Target distribution}
can also be expressed as
%\begin{equations}\label{eq:Targetdistribution2}
\begin{align}\label{eq:Targetdistribution2}
 \widetilde{\pi}^{N}\left(dv_{x,1:T}^{1:N},dv_{A,1:T-1}^{1:N},j_{1:T},d\theta\right)  & =\nonumber\\
\frac{p\left(d\theta\right)\psi\left(dv_{x,1:T}^{1:N},dv_{A,1:T-1}^{1:N}\right)}{p\left(y_{1:T}\right)} \times \prod_{t=1}^{T}\(N^{-1}\sum_{i=1}^N w_t^i \)&
\overline{w}_{T}^{j_{T}}\prod_{t=2}^{T}\frac{w_{t-1}^{j_{t-1}}f_{t}^{\theta}\left(x_{t}^{j_{t}}|x_{t-1}^{j_{t-1}}\right)}
{\sum_{l=1}^{N}w_{t-1}^{l}f_{t}^{\theta}\left(x_{t}^{j_{t}}|x_{t-1}^{l}\right)}.
\end{align}
%\end{equations}
\end{proposition}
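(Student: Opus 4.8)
The plan is to start from the definition of $\widetilde{\pi}^{N}$ in \eqref{eq:Target distribution} and manipulate it algebraically into the form \eqref{eq:Targetdistribution2}, treating the two displays as the same measure written in two different ways. The only genuinely new ingredient needed is a bookkeeping identity relating the ancestor-indexed weights along the chosen trajectory to the normalised weights and the likelihood estimate $\widehat{Z}_T(\theta) = \prod_{t=1}^T (N^{-1}\sum_i w_t^i)$; everything else is rearranging the same factors.

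First I would expand $p(dx_{1:T}^{j_{1:T}}, d\theta \mid y_{1:T})$ using Bayes' rule as $p(d\theta)\, p(dx_{1:T}^{j_{1:T}}, y_{1:T} \mid \theta) / p(y_{1:T})$, and then substitute the factorisation \eqref{eq: joint density} for the joint state-observation density evaluated along the trajectory $x_{1:T}^{j_{1:T}}$. This produces $f_1^\theta(x_1^{j_1}) g_1^\theta(y_1|x_1^{j_1}) \prod_{t=2}^T f_t^\theta(x_t^{j_t}|x_{t-1}^{j_t}) g_t^\theta(y_t|x_t^{j_t})$ in the numerator (writing $x_{t-1}^{j_t}$ as shorthand for $x_{t-1}^{a_{t-1}^{j_t}}$ as in the paper's convention, but I will need to be careful to distinguish $a_{t-1}^{j_t}$ from $j_{t-1}$ — see the obstacle below). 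Next I would combine these $f$ and $g$ factors with the $m_t^\theta(dx_t^{j_t}|\cdot)$ terms sitting in the denominator of the middle factor of \eqref{eq:Target distribution}: by Definition~\ref{def: define weights}, each pair $g_t^\theta(y_t|x_t^{j_t}) f_t^\theta(dx_t^{j_t}|\cdot)/m_t^\theta(dx_t^{j_t}|\cdot)$ collapses to the unnormalised weight $w_t^{j_t}$ (and $w_1^{j_1}$ for $t=1$). At this point the expression reads, schematically,
\[
\frac{p(d\theta)\,\psi(dv_{x,1:T}^{1:N}, dv_{A,1:T-1})}{N^T\, p(y_{1:T})}\;\cdot\;\frac{w_1^{j_1}\prod_{t=2}^T w_t^{j_t}}{\prod_{t=2}^T \overline{w}_{t-1}^{a_{t-1}^{j_t}}}\;\cdot\;\prod_{t=2}^T\frac{w_{t-1}^{a_{t-1}^{j_t}} f_t^\theta(x_t^{j_t}|x_{t-1}^{j_t})}{\sum_l w_{t-1}^l f_t^\theta(x_t^{j_t}|x_{t-1}^l)}.
\]

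The core of the argument is then the identity $\prod_{t=1}^T w_t^{j_t} = \widehat{Z}_T(\theta) \cdot N^T \cdot \prod_{t=1}^T \overline{w}_t^{j_t}$, which follows immediately from $\overline{w}_t^i = w_t^i/\sum_j w_t^j$ and $\widehat{Z}_T(\theta) = \prod_t (N^{-1}\sum_i w_t^i)$. Using this to substitute for the numerator $w_1^{j_1}\prod_{t=2}^T w_t^{j_t}$ cancels the $N^T$, produces the factor $\prod_{t=1}^T(N^{-1}\sum_i w_t^i)$ appearing in \eqref{eq:Targetdistribution2}, and leaves a product of normalised weights $\prod_{t=1}^T \overline{w}_t^{j_t}$ in the numerator. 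The telescoping step is to pair $\overline{w}_{t}^{j_{t}}$ in this product against $\overline{w}_{t-1}^{a_{t-1}^{j_t}}$ in the denominator: peeling off $\overline{w}_T^{j_T}$ (which has no matching denominator term) and then matching $\overline{w}_t^{j_t}$ with $\overline{w}_t^{a_t^{j_{t+1}}}$ shifts the index, and after writing $w_{t-1}^{a_{t-1}^{j_t}}/(\text{its normaliser})$ back in terms of $w$ the ratio becomes $w_{t-1}^{j_{t-1}} f_t^\theta(x_t^{j_t}|x_{t-1}^{j_{t-1}}) / \sum_l w_{t-1}^l f_t^\theta(x_t^{j_t}|x_{t-1}^l)$, which is exactly the last factor of \eqref{eq:Targetdistribution2}.

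The main obstacle is precisely this index reconciliation: in \eqref{eq:Target distribution} the trajectory is traced through the ancestor map, so the natural index at time $t-1$ is $a_{t-1}^{j_t}$, whereas \eqref{eq:Targetdistribution2} is written with the "flattened" trajectory indices $j_{1:T}$ satisfying the ancestral consistency $j_{t-1} = a_{t-1}^{j_t}$. I would handle this either by adopting the convention (as the paper implicitly does when it writes $x_{1:T}^{j_{1:T}}$) that $j_{1:T}$ denotes the full ancestral lineage of $j_T$, so that $x_{t-1}^{a_{t-1}^{j_t}} = x_{t-1}^{j_{t-1}}$ identically, or else by summing/relabelling over the ancestor indices explicitly. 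Once that identification is fixed, the telescoping and the weight identity are routine, and no appeal to Assumptions~\ref{assu:propstatespace} or \ref{assu:resampling} is needed — this proposition is a pure algebraic restatement, in contrast to Proposition~\ref{lemma: target distn} whose marginalisation genuinely uses the resampling property.
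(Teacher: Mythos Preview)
Your overall strategy --- expand the posterior, regroup into weights, telescope --- matches the paper's, but the resolution you propose for the ``index reconciliation obstacle'' is precisely where the argument breaks. In this paper the target \eqref{eq:Target distribution} is the backward-simulation target of \citet{olssonryden2011}, not the ancestral-tracing target of \citet{andrieuetal2010}. The indices $j_{1:T}$ are sampled by backward simulation (Algorithm~\ref{alg:The-backward simulation algorithm}), so in general $j_{t-1}\neq a_{t-1}^{j_t}$; adopting the convention that $j_{1:T}$ is the ancestral lineage of $j_T$ would change the target density, not merely relabel it. Consequently your expansion of $p(dx_{1:T}^{j_{1:T}},y_{1:T}\mid\theta)$ must produce transitions $f_t^\theta(x_t^{j_t}\mid x_{t-1}^{j_{t-1}})$, not $f_t^\theta(x_t^{j_t}\mid x_{t-1}^{a_{t-1}^{j_t}})$, which means the ratio $g_t f_t/m_t$ you form does \emph{not} collapse directly to $w_t^{j_t}$ (the SMC weight is built with the ancestor $a_{t-1}^{j_t}$, cf.\ Definition~\ref{def: define weights}). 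Your telescoping step, which cancels $\overline{w}_{t}^{j_{t}}$ against $\overline{w}_{t}^{a_{t}^{j_{t+1}}}$, likewise fails for the same reason.

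The paper's proof keeps the two index systems separate throughout and uses two observations that make the mismatch harmless. First, it \emph{swaps} the two $f_t$ factors: the $f_t^\theta(x_t^{j_t}\mid x_{t-1}^{a_{t-1}^{j_t}})$ sitting in the numerator of the backward-simulation ratio is paired with $g_t^\theta(y_t\mid x_t^{j_t})/m_t^\theta(x_t^{j_t}\mid x_{t-1}^{a_{t-1}^{j_t}})$ to form the genuine weight $w_t^{j_t}$, while the $f_t^\theta(x_t^{j_t}\mid x_{t-1}^{j_{t-1}})$ coming from the model factorisation is deposited back into the ratio, yielding exactly the $f_t^\theta(x_t^{j_t}\mid x_{t-1}^{j_{t-1}})$ that appears in \eqref{eq:Targetdistribution2}. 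Second, the remaining factor $w_{t-1}^{a_{t-1}^{j_t}}/\overline{w}_{t-1}^{a_{t-1}^{j_t}}=\sum_{i}w_{t-1}^{i}$ is \emph{independent of the value of} $a_{t-1}^{j_t}$, so it contributes cleanly to $\prod_{t=1}^{T-1}\sum_i w_t^i$ without any need to identify $a_{t-1}^{j_t}$ with $j_{t-1}$. After that, your identity $\prod_t w_t^{j_t}=N^T\widehat Z_T(\theta)\prod_t\overline{w}_t^{j_t}$ (applied only to extract $\overline{w}_T^{j_T}$ and shift the remaining $w_{t-1}^{j_{t-1}}$ into the ratio) finishes the job. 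So the missing idea is this $f_t$-swap together with the ancestor-agnostic cancellation; once you have those, no ancestral-consistency convention is needed.
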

The target distribution in \eqref{eq:Targetdistribution2} is expressed in terms of the estimated likelihood $\widehat{Z}\left(v_{x,1:T}^{1:N},v_{A,1:T-1}^{1:N},\theta\right)=\prod_{t=1}^{T}\(N^{-1}\sum_{i=1}^N w_t^i \)$.
Corollary \ref{corr: target with j integrated out} below and Theorem \ref{lemma: alt expression target distn} are used to derive the acceptance probability in Part 1 in our
\corrPMMHPG{} (Algorithm \ref{alg:Sampling-Scheme:-The correlated PMMH+PG}).

\begin{corollary} \label{corr: target with j integrated out}
By integrating $j_{1:T}$ out of the target distribution in \eqref{eq:Targetdistribution2}
we obtain
\begin{align}
\widetilde{\pi}^{N}\left(dv_{x,1:T}^{1:N},dv_{A,1:T-1}^{1:N},d\theta\right) & = \nonumber \\
\frac{p\left(d\theta\right)\psi\left(dv_{x,1:T}^{1:N},dv_{A,1:T-1}^{1:N}\right)}{p\left(y_{1:T}\right)} & \prod_{t=1}^{T}\left(N^{-1}\sum_{i=1}^{N}w_{t}^{i}\right),
\end{align}
\end{corollary}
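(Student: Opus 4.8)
The goal is to derive Corollary~\ref{corr: target with j integrated out} from the alternative expression~\eqref{eq:Targetdistribution2} in Theorem~\ref{lemma: alt expression target distn} by summing over the discrete indices $j_{1:T} \in \{1,\dots,N\}^T$. The plan is to show that the factor depending on $j_{1:T}$, namely
\[
\overline{w}_{T}^{j_{T}}\prod_{t=2}^{T}\frac{w_{t-1}^{j_{t-1}}f_{t}^{\theta}\left(x_{t}^{j_{t}}|x_{t-1}^{j_{t-1}}\right)}{\sum_{l=1}^{N}w_{t-1}^{l}f_{t}^{\theta}\left(x_{t}^{j_{t}}|x_{t-1}^{l}\right)},
\]
sums to one over all $j_{1:T}$, since everything else in~\eqref{eq:Targetdistribution2} is independent of $j_{1:T}$. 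This is a telescoping-type argument carried out by summing the indices in a carefully chosen order.

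First I would observe that the backward weights $\widetilde{w}_{t-1|T}^{\,i} := w_{t-1}^{i} f_{t}^{\theta}(x_{t}^{j_{t}}|x_{t-1}^{i}) / \sum_{l=1}^{N} w_{t-1}^{l} f_{t}^{\theta}(x_{t}^{j_{t}}|x_{t-1}^{l})$ are, for each fixed $j_t$, a probability vector in the index $i$: they are nonnegative and sum to one over $i$. The key structural point is that the summand factorizes so that $j_{t-1}$ appears only in the $t$-th backward factor (as the upper index $i=j_{t-1}$) and in no earlier factor. So I would sum the indices from $j_1$ upward: summing over $j_1 \in \{1,\dots,N\}$ collapses the $t=2$ backward factor to $1$ because $\sum_{j_1} \widetilde{w}_{1|T}^{\,j_1} = 1$; this leaves an expression of the same form but with the product running from $t=3$ to $T$. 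Iterating, summing over $j_2$ collapses the $t=3$ factor, and so on, until after summing $j_1,\dots,j_{T-1}$ we are left with just $\sum_{j_T} \overline{w}_T^{j_T} = 1$ by Definition~\ref{def: define weights}. Hence the entire $j_{1:T}$-sum equals $1$, and~\eqref{eq:Targetdistribution2} reduces to the claimed expression.

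I would write this cleanly by induction on the number of indices already summed out, stating the inductive claim that for each $k = 0, 1, \dots, T-1$,
\[
\sum_{j_1,\dots,j_k} \overline{w}_T^{j_T}\prod_{t=2}^{T}\frac{w_{t-1}^{j_{t-1}}f_{t}^{\theta}(x_{t}^{j_{t}}|x_{t-1}^{j_{t-1}})}{\sum_{l}w_{t-1}^{l}f_{t}^{\theta}(x_{t}^{j_{t}}|x_{t-1}^{l})} = \overline{w}_T^{j_T}\prod_{t=k+2}^{T}\frac{w_{t-1}^{j_{t-1}}f_{t}^{\theta}(x_{t}^{j_{t}}|x_{t-1}^{j_{t-1}})}{\sum_{l}w_{t-1}^{l}f_{t}^{\theta}(x_{t}^{j_{t}}|x_{t-1}^{l})},
\]
where the base case $k=0$ is vacuous and the inductive step is exactly the observation that $\sum_{j_{k+1}} \widetilde{w}_{k+1|T}^{\,j_{k+1}} = 1$ (here one must note that $x_{k+1}^{j_{k+1}}$ is a dummy variable being summed over particle values, and the weight normalization holds for each value, so the sum over the index collapses). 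After the induction, the remaining sum over $j_T$ of $\overline{w}_T^{j_T}$ equals $1$.

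The only mildly delicate point — and the one I'd be careful about — is bookkeeping: making sure that when I sum over $j_{k+1}$, this index truly appears in exactly one backward factor (namely the $t=k+2$ factor, in the upper slot $w_{k+1}^{j_{k+1}} f_{k+2}^\theta(x_{k+2}^{j_{k+2}}|x_{k+1}^{j_{k+1}})$) and nowhere else, in particular that no $\widehat Z_T$ or $\psi$ factor secretly depends on it. This follows from the explicit form of~\eqref{eq:Targetdistribution2}: $\prod_{t=1}^T (N^{-1}\sum_i w_t^i)$ and $\psi(\cdot)$ and $p(d\theta)/p(y_{1:T})$ carry no $j$ dependence at all. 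There is no real obstacle here; it is a routine telescoping sum, and I would simply present the induction above, then substitute back into~\eqref{eq:Targetdistribution2} to obtain the stated identity. $\qed$
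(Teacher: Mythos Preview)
Your proposal is correct and follows essentially the same approach as the paper: the paper's proof simply states that the result follows from Theorem~\ref{lemma: alt expression target distn} by summing over $j_{1:T}$ the factor $\overline{w}_{T}^{j_{T}}\prod_{t=2}^{T} w_{t-1}^{j_{t-1}}f_{t}^{\theta}(x_{t}^{j_{t}}|x_{t-1}^{j_{t-1}})/\sum_{l}w_{t-1}^{l}f_{t}^{\theta}(x_{t}^{j_{t}}|x_{t-1}^{l})$, without spelling out the telescoping. Your inductive argument makes explicit exactly why this sum equals one, so your write-up is a more detailed version of the same idea.
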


\begin{corollary}\label{cor: j cond on rest}
The conditional distribution $\widetilde{\pi}^{N}\left(j_{1:T}|v_{x,1:T}^{1:N},v_{A,1:T-1}^{1:N},\theta\right)$
is
\begin{equation}
\widetilde{\pi}^{N}\left(j_{1:T}|v_{x,1:T}^{1:N},v_{A,1:T-1}^{1:N},\theta\right)=\overline{w}_{T}^{j_{T}}\prod_{t=2}^{T}\frac{w_{t-1}^{j_{t-1}}
f_{t}^{\theta}\left(x_{t}^{j_{t}}|x_{t-1}^{j_{t-1}}\right)}{\sum_{l=1}^{N}w_{t-1}^{l}f_{t}^{\theta}\left(x_{t}^{j_{t}}|x_{t-1}^{l}\right)}.\label{eq:conditional distribution j}
\end{equation}
\end{corollary}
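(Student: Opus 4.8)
The statement to prove is that the conditional distribution of the indices $j_{1:T}$ given the basic random numbers $(v_{x,1:T}^{1:N}, v_{A,1:T-1})$ and $\theta$ is given by \eqref{eq:conditional distribution j}. The natural approach is to divide: the conditional density is the ratio of the joint target $\widetilde{\pi}^{N}(dv_{x,1:T}^{1:N}, dv_{A,1:T-1}, j_{1:T}, d\theta)$ from \eqref{eq:Targetdistribution2} to the marginal $\widetilde{\pi}^{N}(dv_{x,1:T}^{1:N}, dv_{A,1:T-1}, d\theta)$ obtained in Corollary \ref{corr: target with j integrated out}. Both of these are stated earlier in the excerpt, so the proof is essentially a one-line cancellation once those are in hand.

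First I would write down the ratio explicitly. The numerator, from \eqref{eq:Targetdistribution2}, is
\[
\frac{p(d\theta)\,\psi(dv_{x,1:T}^{1:N}, dv_{A,1:T-1})}{p(y_{1:T})} \times \prod_{t=1}^{T}\Bigl(N^{-1}\sum_{i=1}^N w_t^i\Bigr) \times \overline{w}_{T}^{j_{T}}\prod_{t=2}^{T}\frac{w_{t-1}^{j_{t-1}}f_{t}^{\theta}(x_{t}^{j_{t}}|x_{t-1}^{j_{t-1}})}{\sum_{l=1}^{N}w_{t-1}^{l}f_{t}^{\theta}(x_{t}^{j_{t}}|x_{t-1}^{l})},
\]
and the denominator, from Corollary \ref{corr: target with j integrated out}, is
\[
\frac{p(d\theta)\,\psi(dv_{x,1:T}^{1:N}, dv_{A,1:T-1})}{p(y_{1:T})}\prod_{t=1}^{T}\Bigl(N^{-1}\sum_{i=1}^{N}w_{t}^{i}\Bigr).
\]
All factors not depending on $j_{1:T}$ — namely the prior term, the $\psi$ term, the marginal likelihood, and the product of average weights $\prod_{t=1}^T(N^{-1}\sum_i w_t^i)$ — appear identically in both, so they cancel. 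What survives is precisely $\overline{w}_{T}^{j_{T}}\prod_{t=2}^{T}\frac{w_{t-1}^{j_{t-1}}f_{t}^{\theta}(x_{t}^{j_{t}}|x_{t-1}^{j_{t-1}})}{\sum_{l=1}^{N}w_{t-1}^{l}f_{t}^{\theta}(x_{t}^{j_{t}}|x_{t-1}^{l})}$, which is the claimed formula. The only subtlety worth a sentence is that the $x_t^i$ appearing in these weights are deterministic functions of $(v_{x,1:t}^{1:N}, v_{A,1:t-1}, \theta)$ through the mappings $\mathfrak{X}$ and $\mathfrak{A}$ of Definitions \ref{def: Xt and Vx} and \ref{def: definition V}, so conditioning on the basic random numbers and $\theta$ fixes all the weights and transition densities; hence the surviving expression is genuinely a function of $j_{1:T}$ and the conditioning variables only.

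I do not anticipate a real obstacle here: the result is an immediate corollary of the two preceding results via Bayes' rule, and the main thing to be careful about is bookkeeping — confirming that every $j$-free factor in \eqref{eq:Targetdistribution2} is matched by the corresponding factor in Corollary \ref{corr: target with j integrated out}, and noting that the expression is automatically normalized (summing \eqref{eq:conditional distribution j} over $j_{1:T}$ gives $1$, which can be checked by summing over $j_1, \dots, j_{T-1}$ first using the structure of the denominators, or simply invoked since it is a ratio of a joint to its own marginal). I would close with the one-line remark that this is exactly the backward-simulation kernel of \citet{godsilletal2004}, which is why Algorithm \ref{alg:The-backward simulation algorithm} targets the correct conditional.
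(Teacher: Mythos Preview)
Your proposal is correct and follows exactly the paper's own approach: the paper's proof simply states that the result follows from Theorem~\ref{lemma: alt expression target distn} and Corollary~\ref{corr: target with j integrated out}, which is precisely the ratio-and-cancel argument you spell out. Your additional remarks about the $x_t^i$ being deterministic functions of the conditioning variables and about normalization are sound elaborations, but the core one-line idea is identical.
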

Corollary \ref{cor: j cond on rest} is used by backward simulation algorithm in
Algorithm \ref{alg:The-backward simulation algorithm} in section~\ref{SS: backward simulation1} of the online supplement.

\subsection{The CPHS \label{sub:Flexible-Correlated-PMMH+PG sampling scheme}}
We now outline a sampling scheme for the state space model
that allows the user to generate those parameters that can be drawn efficiently conditional on the states by a particle Gibbs step(s); all the other parameters are drawn in a Metropolis-within-Gibbs step(s) by conditioning on the basic uniform or standard normal random variables; e.g., parameters that are highly correlated with states, or parameters whose generation is expensive when conditioning on the states.
For simplicity, let $\theta\coloneqq\left(\theta_{1},\theta_{2}\right)$  partition the parameter vector into 2 components where each component may be a vector.
The \CPHS{} is given in Algorithm~\ref{alg:Sampling-Scheme:-The correlated PMMH+PG}; it generates the vector of parameters $\theta_{1}$ using the Metropolis-within-Gibbs step and the vector of parameters $\theta_{2}$  using the PG step. The components $\theta_{1}$ and $\theta_{2}$ can be further partitioned and sampled separately in multiple Metropolis-within-Gibbs steps and multiple PG steps, respectively.

We note that the correlated pseudo-marginal method of \citet{Deligiannidis2018}
correlates the random numbers, $(v^{1:N}_{x,1:T},v^{1:N}_{A,1:T-1})$ and $(v^{\ast,1:N}_{x,1:T},v^{\ast,1:N}_{A,1:T-1})$  used in constructing the estimators of the
likelihood at the current and proposed values of the parameters. This correlation is set very close to 1  to
reduce the variance of the difference in the logs of the estimated likelihoods $\log\, \widehat{Z}\left(\theta^{*},v^{\ast,1:N}_{x,1:T},v^{\ast,1:N}_{A,1:T-1}\right)-\log\, \widehat{Z}\left(\theta, v^{1:N}_{x,1:T},v^{1:N}_{A,1:T-1}\right)$
appearing in the MH acceptance ratio.
It is easy to see that part~1 (MWG sampling) of our sampling scheme is a special case of the correlated pseudo-marginal step with the correlation set to 1. We condition
on the same set of random numbers $\left(V_{x,1:T}^{1:N},V_{A,1:T-1}^{1:N}\right)$
that is generated using CCSMC in part~4 of 
Algorithm~\ref{alg:Sampling-Scheme:-The correlated PMMH+PG}.
That is, in our scheme, we deal with
$\log\, \widehat{Z}\left(\theta_{1}^{*},\theta_{2}, \left(V_{x,1:T}^{1:N},V_{A,1:T-1}^{1:N}\right)\right)-\log\, \widehat{Z}\left(\theta_{1},\theta_{2}, \left(V_{x,1:T}^{1:N},V_{A,1:T-1}^{1:N}\right)\right),$
which is the difference in the logs of the estimated likelihoods at the proposed and current values of the parameters.
The basic random numbers $\left(V_{x,1:T}^{1:N},V_{A,1:T-1}^{1:N}\right)$ are fixed in the Metropolis-within-Gibbs step (part 1),
but they are updated in part~4 of Algorithm \ref{alg:Sampling-Scheme:-The correlated PMMH+PG}.
On the other hand, Part 1 (PMMH sampling) of PHS proposed by \citet{Mendes2020} is a special case of the correlated pseudo-marginal step with the correlation set to $0$.

\begin{algorithm}[H]
\caption{ The Correlated Particle Hybrid Sampler (\corrPMMHPG{}).\label{alg:Sampling-Scheme:-The correlated PMMH+PG}}

Given initial values for $V_{x,1:T}^{1:N}$, $V_{A,1:T-1}^{1:N}$, $J_{1:T}$,
and $\theta$
\begin{enumerate}
\item [Part 1:] MWG sampling. 
\begin{enumerate}
\item Sample $\theta_{1}^{*}\sim q_{1}\left(\cdot|v_{x,1:T}^{1:N},v_{A,1:T-1}^{1:N},\theta_{2},\theta_{1}\right)$
\item Run the sequential Monte Carlo algorithm and evaluate $\widehat{Z}\left(v^{1:N}_{x,1:T},v_{A,1:T-1}^{1:N},\theta_{1}^{*},\theta_{2}\right)$.
\item Accept the proposed values $\theta_{1}^{*}$ with probability
\begin{align}\label{eq: MH ratio1}
\alpha\left(\theta_{1};\theta_{1}^{*}|v_{x,1:T}^{1:N},v_{A,1:T-1}^{1:N},\theta_{2}\right) & =\notag \\
1\land\frac{\widehat{Z}\left(v_{x,1:T}^{1:N},v_{A,1:T-1}^{1:N},\theta_{1}^{*},\theta_{2}\right)p\left(\theta_{1}^{*}|\theta_{2}\right)}
{\widehat{Z}\left(v_{x,1:T}^{1:N},v_{A,1:T-1}^{1:N},\theta_{1},\theta_{2}\right)p\left(\theta_{1}|\theta_{2}\right)} & \times\frac{q_{1}\left(\theta_{1}|v_{x,1:T}^{1:N},v_{A,1:T-1}^{1:N},\theta_{2},\theta_{1}^{*}\right)}
{q_{1}\left(\theta_{1}^{*}|v_{x,1:T}^{1:N},v_{A,1:T-1}^{1:N},\theta_{2},\theta_{1}\right)}.
\end{align}

\end{enumerate}
\item [Part 2:] Sample $J_{1:T}=j_{1:T}\sim\widetilde{\pi}^{N}\left(\cdot|v_{x,1:T}^{1:N},v_{A,1:T-1}^{1:N},\theta\right)$
given in \eqref{eq:conditional distribution j} using the
backward simulation algorithm (Algorithm~\ref{alg:The-backward simulation algorithm} in section \ref{SS: backward simulation1} of the online supplement)

\item [Part 3:] PG sampling. 
\begin{enumerate}
\item Sample $\theta_{2}^{*}\sim q_{2}\left(\cdot|x_{1:T}^{j_{1:T}},j_{1:T},\theta_{1},\theta_{2}\right)$
\item Accept the proposed values $\theta_{2}^{*}$ with probability
\begin{align}
\alpha\left(\theta_{2};\theta_{2}^{*}|x_{1:T}^{j_{1:T}},j_{1:T},\theta_{1}\right)\nonumber \\
= & 1\land\frac{\widetilde{\pi}^{N}\left(\theta_{2}^{*}|x_{1:T}^{j_{1:T}},j_{1:T},\theta_{1}\right)}{\widetilde{\pi}^{N}\left(\theta_{2}|x_{1:T}^{j_{1:T}},j_{1:T},
\theta_{1}\right)}\frac{q_{2}\left(\theta_{2}|x_{1:T}^{j_{1:T}},j_{1:T},\theta_{1},\theta_{2}^{*}\right)}
{q_{2}\left(\theta_{2}^{*}|x_{1:T}^{j_{1:T}},j_{1:T},\theta_{1},\theta_{2}\right)}. \label{eq: MH ratio 2}
\end{align}
\end{enumerate}
\item [Part 4:] Sample $\left(V_{x,1:T}^{1:N},V_{A,1:T-1}^{1:N}\right)$ from $\widetilde{\pi}^{N}\left(\cdot|x_{1:T}^{j_{1:T}},j_{1:T},\theta\right)$
using the constrained conditional sequential Monte Carlo algorithm (Algorithm \ref{alg:The-conditional Sequential-Monte carlo algorithm})
and evaluate $\widehat{Z}\left(v_{x,1:T}^{1:N},v_{A,1:T-1}^{1:N},\theta\right)$.
\end{enumerate}
\end{algorithm}
Part~1 (MWG sampling): from corollary~\ref{corr: target with j integrated out}, we can show that 
\begin{align*}
&\frac{\widetilde{\pi}^{N}\left(v_{x,1:T}^{1:N},v_{A,1:T-1}^{1:N},\theta_1^\ast|\theta_{2}\right)}
{\widetilde{\pi}^{N}\left(v_{x,1:T}^{1:N},v_{A,1:T-1}^{1:N},\theta_1|\theta_{2}\right)}
 = \frac{\widehat{Z}\left(v_{x,1:T}^{1:N},v_{A,1:T-1}^{1:N},\theta_1^{*},\theta_{2}\right)p\left(\theta_1^\ast|\theta_{2}\right)}
{\widehat{Z}\left(v_{x,1:T}^{1:N},v_{A,1:T-1}^{1:N},
\theta_1,\theta_{2}\right)p\left(\theta_1|\theta_{2}\right)},
\end{align*}
where
$\widehat{Z}\left(v_{x,1:T}^{1:N},v_{A,1:T-1}^{1:N},\theta\right)$ is defined in \eqref{eq: unbiased likelihood}.
This leads to the acceptance probability in \eqref{eq: MH ratio1}.
Part~2 follows from Corollary~\ref{cor: j cond on rest}. Using theorem ~\ref{lemma: target distn},
we can show that the Metropolis-Hastings in \eqref{eq: MH ratio 2} (Part 3: PG sampling) simplifies to 
\begin{equation}
\frac{p\left(y_{1:T}|x_{1:T}^{j_{1:T}},\theta_{2}^{*},\theta_{1}\right)
p\left(x_{1:T}^{j_{1:T}}|\theta_{2}^{*},\theta_{1}\right)p\left(\theta_{2}^{*}|\theta_{1}\right)}{p\left(y_{1:T}|x_{1:T}^{j_{1:T}},\theta_{2},\theta_{1}\right)
p\left(x_{1:T}^{j_{1:T}}|\theta_{2},\theta_{1}\right)p\left(\theta_{2}|\theta_{1}\right)}
\times\frac{q_{2}\left(\theta_{2}|x_{1:T}^{j_{1:T}},j_{1:T},\theta_{1},\theta_{2}^{*}\right)}{q_{2}\left(\theta_{2}^{*}|x_{1:T}^{j_{1:T}},j_{1:T},\theta_{1},\theta_{2}\right)}.
\end{equation}
Part~4 updates the basic random numbers $v_{x,1:T}^{1:N}$ and $v_{A,1:T-1}^{1:N}$ using constrained conditional sequential Monte Carlo (section \ref{sub:Conditional-Sequential-Monte Carlo})
and follows from \eqref{eq:Target distribution} (the target), theorem~\ref{lemma: target distn} and Definitions~\ref{def: Xt and Vx} and \ref{def: definition V}.

Algorithm \ref{alg:Sampling-Scheme:-The full correlated PMMH+PG} in section \ref{SS: The full correlated PMMH+PG} of the supplementary material gives a more general sampling scheme than Algorithm \ref{alg:Sampling-Scheme:-The correlated PMMH+PG}
with Part 1 having a Metropolis-within-Gibbs proposal that potentially moves the basic uniform and standard normal variables as well as the parameter of interest.
Possible Metropolis-within-Gibbs proposals in Part 1 include the correlated pseudo-marginal approach in \citet{Deligiannidis2018}, the block pseudo-marginal approach in \citet{Tran:2016}, and the standard pseudo-marginal approach in \citet{Andrieu:2009}.
The advantage of Algorithm \ref{alg:Sampling-Scheme:-The correlated PMMH+PG} over Algorithm \ref{alg:Sampling-Scheme:-The full correlated PMMH+PG} in section \ref{SS: The full correlated PMMH+PG} of the online supplement
is that Part 1 has fewer tuning parameters that need to be chosen by the user
and is computationally more efficient because the basic uniform and standard normal variables are fixed and not generated. 

\subsection{Constrained conditional sequential Monte Carlo \label{sub:Conditional-Sequential-Monte Carlo}}
This section discusses the constrained conditional sequential Monte Carlo (CCSMC) algorithm  (Algorithm~\ref{alg:The-conditional Sequential-Monte carlo algorithm}),
which is used in Part~4 of the~\corrPMMHPG{} (Algorithm~\ref{alg:Sampling-Scheme:-The correlated PMMH+PG}).
The CCSMC algorithm is a sequential Monte Carlo algorithm in which a particle
trajectory $x_{1:T}^{j_{1:T}}=\left(x_{1}^{j_{1}},...,x_{T}^{j_{T}}\right)$
and the associated sequence of indices $j_{1:T}$ are kept unchanged, which means that some elements of $V_{x,1:T}^{1:N}$ and $V_{A,1:T-1}^{1:N}$
are constrained.  It is a constrained version of the conditional SMC sampler in \cite{andrieuetal2010} that follows from the density of all the random variables that are generated
by sequential Monte Carlo algorithm conditional on $\left(x_{1:T}^{j_{1:T}},j_{1:T},\theta,y_{1:T}\right)$,
\begin{equation}
\frac{\psi\left(dv_{x,1:T}^{1:N},dv_{A,1:T-1}^{1:N}\right)}{m_{1}^{\theta}\left(dx_{1}^{j_{1}}\right)\prod_{t=2}^{T}\overline{w}_{t-1}^{a_{t-1}^{j_{t}}}m_{t}^{\theta}\left(dx_{t}^{j_{t}}|x_{t-1}^{a_{t-1}^{j_{t}}}\right)},
\end{equation}
which appears in the augmented target density in \eqref{eq:Target distribution}.

%modified for the target distribution in 
%\citet{olssonryden2011} and

The CCSMC algorithm takes the number of particles $N$, the parameters $\theta$, and the reference trajectory $x_{1:T}^{j_{1:T}}$ as inputs.
It produces the set of particles $x_{1:T}^{1:N}$, ancestor indices $a_{1:T-1}^{1:N}$, and weights $w_{1:T}^{1:N}$; 
it also produces the random variables used to propagate state particles $V_{x,1:T}^{1:N}$ and the random numbers used in the resampling steps $V_{A,1:T-1}^{1:N}$.
Both sets of random variables are used as the inputs of the sequential Monte Carlo algorithm in part~(1b) of 
Algorithm~\ref{alg:Sampling-Scheme:-The correlated PMMH+PG}. 

\begin{algorithm}[H]
\caption{The constrained conditional sequential Monte Carlo algorithm \label{alg:The-conditional Sequential-Monte carlo algorithm} }

Inputs: $N$, $\theta$, $x_{1:T}^{j_{1:T}}$, and $j_{1:T}$

Outputs: $x_{1:T}^{1:N}$, $a_{1:T-1}^{1:N}$, $w_{1:T}^{1:N}$, $V_{x,1:T}^{1:N}$, and $V_{A,1:T-1}^{1:N}$

Fix $X_{1:T}^{j_{1:T}}=x_{1:T}^{j_{1:T}}$, $A_{1:T-1}^{J}=j_{1:T-1}$,
and $J_{T}=j_{T}$.
\begin{enumerate}
\item For $t=1$

\begin{enumerate}
\item Sample $v_{x1}^{i}\sim\psi_{x1}\left(\cdot\right)$ and set $X_{1}^{i}=x_1^{i} = \mathfrak{X}\left(v_{x1}^{i}; \theta, \cdot\right)$
for $i=1,...,N\setminus\left\{ j_{1}\right\} $.
\item Obtain $v_{x1}^{j_{1}}$ such that $x_{1}^{j_{1}}=\mathfrak{X}\left(v_{x1}^{j_{1}};\theta,\cdot\right)$.
\item Compute the importance  weights $w_{1}^{i}=\frac{f_{1}^{\theta}\left(x_{1}^{i}\right)g_{1}^{\theta}\left(y_{1}|x_{1}^{i}\right)}{m_{1}^{\theta}\left(x_{1}^{i}\right)}$,
for $i=1,...,N$,
and normalize $\overline{w}_{1}^{i}= w_{1}^{i}/\sum_{j=1}^{N}w_{1}^{j}$.
\end{enumerate}

\item For $t\geq2$

\begin{enumerate}
%\item If $x_{t-1}$ has dimension $d$ greater than 1, then map the $x_{t-1}^{i}$
%into scalars $z^{i}=h\left(x_{t-1}^{i}\right)$ for $i=1,...,N$ using
%the inverse of the Hilbert map. Otherwise, set $z^{i}=x_{t-1}^{i}$.
%\item
%\begin{enumerate}

\item Sort the particles $x_{t-1}^{i}$ using the simple sorting procedure of \citet{Choppala2016} and obtain the sorted index $\zeta_{i}$ for $i=1,...,N$ and the sorted particles and weights $\wt x_{t-1}^i = x_{t-1}^{\zeta_i} $ and $\wt {\ov w}^i_{t-1} = \ov w_{t-1}^{\zeta_i}$, for $i=1, \dots, N$.
%\item  Sort the $z^i$ from smallest to largest to obtain $\wt z^i, i=1,\dots, N$.
%\item
%Define  the one to one mapping $\zeta:\{1, \dots, N\} \rightarrow \{1, \dots, N\}$ such that $\wt z^i = z^{\zeta_i}$, breaking up ties in the $z_i$ in some systematic way.
%Let $\zeta^{-1}$ be the inverse map of $\zeta$.
%\item
%Define $\wt x_{t-1}^i = x_{t-1}^{\zeta_i} $ and $\wt {\ov w}^i_{t-1} = \ov w_{t-1}^{\zeta_i}$, for $i=1, \dots, N$.
%\end{enumerate}

\item Use a  constrained sampling algorithm, for example, the constrained multinomial sampler
(Algorithm~\ref{alg:Multinomial-Resampling-Algorithm for CCSMC} in section \ref{constrainedresampling} of the online supplement),
\begin{enumerate} \item Generate the random variables used in the resampling step $v_{At-1}^{1:N}$ and obtain the ancestor indices based on the sorted particles $\widetilde{A}_{t-1}^{1:N\setminus\left(j_{t}\right)}$.
%\item Set $A_{t-1}^{1:N}=\widetilde{A}_{t-1}^{\zeta^{-1}_{1:N}}$ to obtain the ancestor indices in the original order of the particles.
\item Obtain the ancestor indices in the original order of the particles $A_{t-1}^{1:N\setminus\left(j_{t}\right)}$.
\end{enumerate}
\item Sample $v_{xt}^{i}\sim\psi_{xt}\left(\cdot\right)$ for $i=1,...,N\setminus\left\{ j_{t}\right\}$
and obtain $v_{xt}^{j_{t}}$ such that $x_{t}^{j_{t}}=\mathfrak{X}(v_{xt}^{j_{t}};\theta,x_{t-1}^{a_{t-1}^{j_{t}}})$.
\item Set $x_{t}^{i}=\mathfrak{X}\left(v_{xt}^{i};\theta,x_{t-1}^{a_{t-1}^{i}}\right)$
for $i=1,...,N\setminus\left\{ j_{t}\right\} $.
\item Compute the importance weights,
\begin{align*}
    w_{t}^{i} & =\frac{f_{t}^{\theta}\left(x_{t}^{i}|x_{t-1}^{a_{t-1}^{i}},y_{t-1}\right)g_{t}^{\theta}
   \left(y_{t}|x_{t}^{i}\right)}{m_{t}^{\theta}\left(x_{t}^{i}|x_{t-1}^{a_{t-1}^{i}}\right)},
\quad \text{for} \quad i=1,...,N,
\end{align*}
and  normalize the $\overline{w}_{t}^{i}$.
\end{enumerate}
\end{enumerate}
\end{algorithm}

At $t=1$, step (1a) samples the basic random numbers $v_{x1}^{i}$ from the standard normal distribution $\N(0,1)$ for $i=1,...,N$ and obtains the set of particles $x_{1}^{1:N}$ using \eqref{V_x example SV1}, except for the reference particles $x_{1}^{j_{1}}$. We obtain the basic random number  $v_{x1}^{j_{1}}$ associated with the reference particle  $x_{1}^{j_{1}}$ in step (1b) using 
\begin{align*}
\quad v_{x1}^{j_{1}} & =\bigg ( \left ( 1-\phi^{2}\right)/\tau^{2}\bigg)^{\frac12}  \left(x_{1}^{j_{1}}-\mu\right), 
\end{align*}
for the univariate SV model with leverage using the bootstrap filter and compute the weights of all the particles in step (1c).  

Section~\ref{SS: smc} shows that when the parameters $\theta$ 
change, the resampling step creates discontinuities and breaks 
the relationship between the likelihood terms at the current and 
proposed values. Step (2a) sorts the particles from 
smallest to largest using the simple sorting procedure of 
\citet{Choppala2016} and obtains the sorted particles and weights.
The particles are then resampled using constrained multinomial resampling given in Algorithm~\ref{alg:Multinomial-Resampling-Algorithm for CCSMC} in section~\ref{constrainedresampling} of the online supplement
and the ancestor index $A_{1:T-1}^{1:N\setminus\left(j_{t}\right)}$ in the original order of the particles in step~(2b) is obtained.
Step (2c) samples the basic random numbers $v_{xt}^{i}$ from the standard normal distribution $\N(0,1)$ for $i=1,...,N$ and obtains the set of particles $x_{t}^{1:N}$
using \eqref{V_x example SVt}, except for the reference particles $x_{t}^{j_{t}}$.
We obtain the basic random number  $v_{xt}^{j_{t}}$ associated with the reference particle  $x_{t}^{j_{t}}$ in step (2d) using
\begin{align*}
\quad  v_{xt}^{j_{t}} & = \frac{ \bigg ( x_{t}^{j_{t}}-\mu-\phi\left(x_{t-1}^{a_{t-1}^{j_{t}}}-
\mu\right)-\rho\tau\exp\left(-x_{t-1}^{a_{t-1}^{j_{t}}}/2\right)y_{t-1}
\bigg )}   { \bigg (\tau^{2}\left(1-\rho^{2}\right)\bigg )^{\frac12}}
\end{align*}
for the univariate SV model with leverage and
the bootstrap filter and compute the weights of all the 
particles in step (2e).  

%This defines the mapping 
%$A_{t-1}^{1:N}=\mathfrak{ A}\(v_{A,t-1}^{1:N} ; \overline{w}_{t-1}^{1:N},x_{t-1}^{1:N}\)$ for $t=2,...,T$.

\subsection{Correlated pseudo marginal Metropolis-Hastings (CPMMH) Algorithm \label{CPMMH algorithm}}

This section shows that the correlated pseudo marginal Metropolis-Hastings
(\CPMMH) algorithm of \citet{Deligiannidis2018} is a special case
of Algorithm~\ref{alg:Sampling-Scheme:-The correlated PMMH+PG} in section~\ref{sub:Flexible-Correlated-PMMH+PG sampling scheme}. This approach is useful for generating
parameters that are highly correlated with the state variables. Algorithm~\ref{alg:The-Correlated-Pseudo Marginal} is the \CPMMH{} algorithm. 
One iteration of the \CPMMH{} is now discussed. 

\begin{algorithm}[H]
\caption{The correlated pseudo marginal Metropolis-Hastings of \citet{Deligiannidis2018}
\label{alg:The-Correlated-Pseudo Marginal}}

Given initial values for $V_{x,1:T}^{1:N}$, $V_{A,1:T-1}^{1:N}$, $J_{1:T}$,
and $\theta$
\begin{enumerate}
\item Sample $\left(\theta^{*},v_{x,1:T}^{*,1:N},v_{A,1:T-1}^{*,{1:N}}\right)\sim q\left(\theta^{*},\d v_{x,1:T}^{*,1:N},\d v_{A,1:T-1}^{*,1:N}|v_{x,1:T}^{1:N},v_{A,1:T-1}^{1:N},\theta\right)$.
\item Run the sequential Monte Carlo algorithm 
(Algorithm~\ref{alg:The-Sequential-Monte carlo algorithm}), estimate the likelihood $\widehat{Z}\left(\theta^{*},v_{x,1:T}^{*,1:N},v_{A,1:T-1}^{*,1:N}\right)$,
and sample $J_{1:T}^{*}=j_{1:T}^{*}$ using backward simulation
(Algorithm~\ref{alg:The-backward simulation algorithm} in section~\ref{SS: backward simulation1} of the online supplement).
\item Accept the proposed values $\left(\theta^{*},v_{x,1:T}^{*,1:N},v_{A,1:T-1}^{*,1:N},j_{1:T}^{*}\right)$
with probability
\begin{multline}
\alpha\left(\theta,v_{x,1:T}^{,1:N},v_{A,1:T-1}^{1:N},j_{1:T};\theta^{*},v_{x,1:T}^{*,1:N},v_{A,1:T-1}^{*,1:N},j_{1:T}^{*}\right)=\\
1\land\frac{\widehat{Z}\left(v_{x,1:T}^{*,1:N},v_{A,1:T-1}^{*,1:N},\theta^{*}\right)p\left(\theta^{*}\right)\psi\left(dv_{x,1:T}^{*,1:N},dv_{A,1:T-1}^{*,1:N}\right)}{\widehat{Z}\left(v_{x,1:T}^{1:N},v_{A,1:T-1}^{1:N},\theta\right)p\left(\theta\right)\psi\left(\d v_{x,1:T}^{1:N},\d v_{A,1:T-1}^{1:N}\right)}\\
\times\frac{q\left(\theta,\d v_{x,1:T}^{1:N},\d v_{A,1:T-1}^{1:N}|v_{x,1:T}^{*,1:N},v_{A,1:T-1}^{*,1:N},\theta^{*}\right)}{q\left(v_{x,1:T}^{*,1:N},v_{A,1:T-1}^{*,1:N},\theta^{*}|\theta,\d v_{x,1:T}^{1:N},\d v_{A,1:T-1}^{1:N}\right)}.\label{eq:acceptanceprobCPMMH}
\end{multline}
\end{enumerate}
\end{algorithm}
  
Step~1 of  Algorithm~\ref{alg:The-Correlated-Pseudo Marginal} generates the proposal
for parameter $\theta$ and the random numbers $v_{x,1:T}^{*,1:N}$
and $v_{A,1:T-1}^{*,1:N}$ from the proposal density 
\begin{multline}
q\left(v_{x,1:T}^{*,1:N},v_{A,1:T-1}^{*,1:N},\theta^{*}|\theta,dv_{x,1:T}^{1:N},dv_{A,1:T-1}^{1:N}\right)=\\
q_{\theta}\left(\theta^{*}|\theta,v_{x,1:T}^{1:N},v_{A,1:T-1}^{1:N}\right)q\left(v_{x,1:T}^{*,1:N},v_{A,1:T-1}^{*,1:N}|\theta,v_{x,1:T}^{1:N},v_{A,1:T-1}^{1:N}\right).
\label{proposalCPMMH}
\end{multline}
Following~\citet{Deligiannidis2018}, the proposal for the random
variables $v_{x,1:T}^{*,1:N}$ is
\begin{equation}
v_{x,t}^{*i}=\rho_x v_{x,t}^{i}+\sqrt{1-\rho_x^{2}}\epsilon_{x,t}^{*i},
\end{equation}
for $t=1,...,T$ and $i=1,...,N$, where $\epsilon_{x,t}^{*i}\sim\N\left(0,1\right)$
and $\rho_x\in\left(-1,1\right)$ is the correlation coefficient which is set very close to 1. For the uniform random numbers, $v_{A,1:T-1}^{1:N}$,
the transformation to normality is first applied to obtain $z_{t}^{i}=\Phi^{-1}\left(v_{A,t}^{i}\right)$
and set $z_{t}^{*i}=\rho_x z_{t}^{i}+\sqrt{1-\rho_x^{2}}\epsilon_{A,t}^{*i}$,
where $\epsilon_{A,t}^{*i}\sim\N\left(0,1\right)$. We then obtain
$v_{A,t}^{i*}=\Phi\left(z_{t}^{*i}\right)$ for $t=1,...,T-1$ and
$i=1,...,N$. Step~2 runs the sequential Monte Carlo algorithm
(Algorithm \ref{alg:The-Sequential-Monte carlo algorithm}) in section~\ref{SS: SMC algorithms} of the online supplement, estimates the likelihood $\widehat{Z}\left(\theta^{*},v_{x,1:T}^{*,1:N},v_{A,1:T-1}^{*,1:N}\right)$
evaluated at the proposed value $\left(\theta^{*},v_{x,1:T}^{*,1:N},v_{A,1:T-1}^{*,1:N}\right)$,
and samples the index $j_{1:T}^{*}$ using backward simulation
(Algorithm \ref{alg:The-backward simulation algorithm}) in section~\ref{SS: backward simulation1} of the online supplement. Finally, the proposed values $\left(\theta^{*},v_{x,1:T}^{*,1:N},v_{A,1:T-1}^{*,1:N},j_{1:T}^{*}\right)$
are accepted with the probability in \eqref{eq:acceptanceprobCPMMH}. Using the proposal in \eqref{proposalCPMMH}, the acceptance probability simplifies to
\begin{multline}
\alpha\left(\theta,v_{x,1:T}^{,1:N},v_{A,1:T-1}^{1:N},j_{1:T};\theta^{*},v_{x,1:T}^{*,1:N},v_{A,1:T-1}^{*,1:N},j_{1:T}^{*}\right)=\\
1\land\frac{\widehat{Z}\left(v_{x,1:T}^{*,1:N},v_{A,1:T-1}^{*,1:N},\theta^{*}\right)p\left(\theta^{*}\right)}{\widehat{Z}\left(v_{x,1:T}^{1:N},v_{A,1:T-1}^{1:N},\theta\right)p\left(\theta\right)}\times\frac{q_{\theta}\left(\theta|\theta^{*},v_{x,1:T}^{*,1:N},v_{A,1:T-1}^{*,1:N}\right)}{q_{\theta}\left(\theta^{*}|\theta,v_{x,1:T}^{1:N},v_{A,1:T-1}^{1:N}\right)}.\label{eq:acceptanceprobCPMMH-1}
\end{multline}
If the proposed values are accepted, then  $\theta=\theta^{*}$,
$v_{x,t}^{i}=v_{x,t}^{*i}$ for all $t=1,...,T$, $v_{A,t}^{i}=v_{A,t}^{*i}$
for all $t=1,...,T-1$, $i=1,...,N$, and the selected
particles are updated to $x_{1:T}^{j_{1:T}}=x_{1:T}^{j_{1:T}^{*}}$. 
 
%The CPMMH algorithm described in this section targets the posterior density of the parameters and the latent states, $p(\theta,x_{1:T}|y_{1:T})$, while the CPMMH described in \citet{Deligiannidis2018} targets the posterior density of $\theta$ only\footnote{isn't it trivial to generate the states after you have the parameters. We have discussed this before for another paper}. The steps to sample the indices $j_{1:T}$ can be ignored if the target is the posterior density of the parameters only. 

\subsection{Particle Gibbs with backward simulation (PGBS) algorithm \label{the PG algorithm}}

This section shows that the particle Gibbs of \citet{andrieuetal2010}
is also a special case of the Algorithm~\ref{alg:Sampling-Scheme:-The correlated PMMH+PG} in section~\ref{sub:Flexible-Correlated-PMMH+PG sampling scheme}, with a few
modifications. The basic idea of the PG algorithm is to generate the parameters
conditional on the states and the states conditional on the parameters.

%It is useful to generate a subset of parameters that is not highly
%correlated with the states, especially when the subset is large.

The particle Gibbs of \citet{andrieuetal2010} uses ancestral tracing \citep{kitagawa1996} to sample the particle approximation of $p(x_{1:T}|y_{1:T},\theta)$. Ancestral tracing samples an index $J=j$ with probability $\overline{w}^j_{T}$, and then traces back its ancestral lineage $b^{j}_{1:T}$, $(b^{j}_{T}=j,b^{j}_{t-1}=a^{b^{j}_t}_{t-1})$ and selects the particle $x^{j}_{1:T}=(x^{b^{j}_1}_1,...,x^{b^{j}_T}_T)$. \citet{Whiteley2010}, \citet{lindstenschon2012a} and \citet{Lindsten2013} suggest that ancestor sampling leads to poor mixing of the Markov chain for the latent states. They use backward simulation to sample from 
$p(x_{1:T}|y_{1:T},\theta)$ and show that the mixing of the Markov chain improves appreciably. We call their sampler the particle Gibbs with backward simulation (\PGBS). Backward simulation is discussed in section~\ref{SS: backward simulation} and Algorithm~\ref{alg:The-backward simulation algorithm} in section~\ref{SS: backward simulation1} of the online supplement.

The  PG algorithm uses the standard conditional sequential
Monte Carlo (CSMC) (Algorithm~\ref{alg:The-conditional Sequential-Monte carlo algorithm standard PG} in 
section~\ref{CSMC_algorithm} of the online supplement) of \citet{andrieuetal2010}  instead of constrained
conditional sequential Monte Carlo (Algorithm~\ref{alg:The-conditional Sequential-Monte carlo algorithm}). 
We note that the CSMC algorithm is a special case of the CCSMC algorithm. It does not need to (1)~obtain $v^{j_t}_{x,t}$ for $t=1,...,T$ (steps 1b and 2c of the CCSMC algorithm) and 
(2)~sort the particles before resampling (step 2a). The constrained multinomial resampling is replaced by the conditional multinomial resampler in 
Algorithm~\ref{alg:Multinomial-Resampling-Algorithm for standard CSMC} in section~\ref{CSMC_algorithm} of the online supplement). The conditional multinomial resampling scheme takes the particles ${x}_{t-1}^{1:N}$ and weights ${\overline{w}}_{t-1}^{1:N}$ as the inputs and produces the ancestor indices ${A}_{t-1}^{1:N}$. The first step computes the cumulative weights of the particles; the second step generates the ancestor indices ${A}_{t-1}^{1:N}$.  

We now discuss a single iteration of the \PGBS{} algorithm (Algorithm~\ref{alg:The-Particle-Gibbs}). 
The first step of  Algorithm~\ref{alg:The-Particle-Gibbs} generates the proposal for the parameter $\theta$. The proposal is accepted with the probability
in  \eqref{eq:acceptanceprobabilityPG}. These two steps are conditioned
on the selected particles $x_{1:T}^{j_{1:T}}$ from the previous iteration. Using theorem~\ref{lemma: target distn},
we can show that the Metropolis-Hastings ratio in \eqref{eq:acceptanceprobabilityPG} simplifies to 
\begin{equation}
\frac{p\left(y_{1:T}|x_{1:T}^{j_{1:T}},\theta_{2}^{*},\theta_{1}\right)
p\left(x_{1:T}^{j_{1:T}}|\theta_{2}^{*},\theta_{1}\right)p\left(\theta_{2}^{*}|\theta_{1}\right)}{p\left(y_{1:T}|x_{1:T}^{j_{1:T}},\theta_{2},\theta_{1}\right)
p\left(x_{1:T}^{j_{1:T}}|\theta_{2},\theta_{1}\right)p\left(\theta_{2}|\theta_{1}\right)}
\times\frac{q_{2}\left(\theta_{2}|x_{1:T}^{j_{1:T}},j_{1:T},\theta_{1},\theta_{2}^{*}\right)}{q_{2}\left(\theta_{2}^{*}|x_{1:T}^{j_{1:T}},j_{1:T},\theta_{1},\theta_{2}\right)}.
\end{equation}
Step 3 is the CSMC algorithm (Algorithm \ref{alg:The-conditional Sequential-Monte carlo algorithm standard PG} in 
section~\ref{CSMC_algorithm} of the online supplement) that generates $N-1$
new particles while keeping a particle trajectory $x_{1:T}^{j_{1:T}}=\left(x_{1}^{j_{1}},x_{2}^{j_{2}},...,x_{T}^{j_{T}}\right)$
and the associated indices $j_{1:T}$ fixed. 
Step~4 samples the new
index vector $j_{1:T}=\left(j_{1},...,j_{T}\right)$ using Algorithm \ref{alg:The-backward simulation algorithm} in section \ref{SS: backward simulation1} of the online supplement to carry out backward simulation. 

\begin{algorithm}[H]
\caption{The Particle Gibbs with backward simulation (PGBS) algorithm \label{alg:The-Particle-Gibbs}}

Given initial values for $V_{x,1:T}^{1:N}$, $V_{A,1:T-1}^{1:N}$, $J_{1:T}$,
and $\theta$. 
\begin{enumerate}
\item Sample $\theta\sim q\left(\cdot|x_{1:T}^{j_{1:T}},j_{1:T}\right)$.
\item Accept the proposed values $\theta^{*}$ with probability
\begin{multline}
\alpha\left(\theta;\theta^{*}|x_{1:T}^{j_{1:T}},j_{1:T}\right)=\\
1\land\frac{\widetilde{\pi}^{N}\left(\theta^{*}|x_{1:T}^{j_{1:T}},j_{1:T}\right)}{\widetilde{\pi}^{N}\left(\theta|x_{1:T}^{j_{1:T}},j_{1:T}\right)}\times\frac{q\left(\theta|x_{1:T}^{j_{1:T}},j_{1:T},\theta^{*}\right)}{q\left(\theta^{*}|x_{1:T}^{j_{1:T}},j_{1:T},\theta\right)}.\label{eq:acceptanceprobabilityPG}
\end{multline}
\item Sample $\left(X_{1:T}^{1:N},A_{1:T-1}^{1:N}\right)$ from $\widetilde{\pi}\left(\cdot|x_{1:T}^{j_{1:T}},j_{1:T},\theta\right)$
using the conditional sequential Monte Carlo algorithm 
(Algorithm~\ref{alg:The-conditional Sequential-Monte carlo algorithm standard PG} in section~\ref{CSMC_algorithm} of the online supplement).
\item Sample $J_{1:T}=j_{1:T}\sim\widetilde{\pi}^{N}\left(\cdot|x_{1:T}^{j_{1:T}},j_{1:T},\theta\right)$
using the backward simulation algorithm (Algorithm \ref{alg:The-backward simulation algorithm} in section~\ref{SS: backward simulation1} of the online supplement). 
\end{enumerate}
\end{algorithm}

\section{Univariate Example: The univariate stochastic volatility model with leverage \label{S: univariate example SV model}}
This section applies the \corrPMMHPG{} defined in
section \ref{sub:Flexible-Correlated-PMMH+PG sampling scheme} to the univariate SV model with leverage.

\subsection{Preliminaries\label{SS: preliminaries}}

To define our measure of the inefficiency of a sampler 
that takes computing time into account, we first define the 
integrated autocorrelation time (IACT) for a univariate 
function $\psi(\theta)$ of $\theta$ as
\[
\textrm{IACT}({\psi})=1+2\sum_{j=1}^{\infty}\rho_{j,\psi},
\]
where $\rho_{j,\psi}$ is the $j$th autocorrelation 
of the iterates of $\psi(\theta)$ in the MCMC after 
the chain has converged. We use the CODA package of
\citet{Plummer2006} to estimate the IACT values of the 
parameters.
A low value of the IACT estimate suggests that the 
Markov chain mixes well. Our measure of the inefficiency
of a sampler for a given parameter $\theta$ based on 
$\textrm{IACT}(\psi)$ is
the time normalised variance (TNV),
$
\textrm{TNV}(\psi) = \textrm{IACT}(\psi)\times\textrm{CT},
$
where CT is the computing time in seconds per iteration of the MCMC. The estimate of TNV  is the estimate of the IACT times the computing time.
The relative time normalized variance of the sampler for $\psi$ (${\rm RTNV}(\psi)$) is the TNV relative to the TNV for the \corrPMMHPG{}.

For a given sampler and for a model with a large number of parameters, let $\textrm{IACT}_{\textrm{MAX}}(\theta)$ and $\textrm{IACT}_{\textrm{MEAN}}(\theta)$
be the maximum and mean of the  IACT values over all the
parameters in the model. In addition, let $\textrm{IACT}_{\textrm{MAX}}(x_{1:T})$ and $\textrm{IACT}_{\textrm{MEAN}}(x_{1:T})$ be the maximum and mean of the IACT values over the latent variables $x_{1:T}$.

\subsection{Empirical Results \label{Univariate Empirical Results}}
%We use the following notation to describe the sampling schemes used in the univariate example.
%The uncorrelated pseudo-marginal sampler is $\textrm{PMMH}\left(\cdotp\right)$, the correlated pseudo-marginal sampler is $\textrm{CPMMH}\left(\cdotp\right)$, 
%and the particle Gibbs sampler is $\textrm{PG}\left(\cdotp\right)$. 

We use the following notation from \citet{Mendes2020},
$\textrm{PMMH}\left(\theta\right)$
means using PMMH to sample the parameter vector $\theta$;
$\textrm{PHS}\left(\theta_{1};\theta_{2}\right)$
means sampling $\theta_{1}$ in the PMMH step and $\theta_{2}$ in
the PG step; and $\textrm{PG}\left(\theta\right)$ means sampling
$\theta$ using the PG sampler.
Finally, $\textrm{\corrPMMHPG}\left(\theta_{1};\theta_{2}\right)$ means sampling $\theta_{1}$ in the MWG step in part~1 of Algorithm \ref{alg:Sampling-Scheme:-The correlated PMMH+PG}
and $\theta_{2}$ in the PG step.
%+\textrm{PG}\left(\theta_{2}\right)$

Our approach for determining which parameters to estimate by MWG and which to estimate by  PG in the sampling scheme
is to first run the
PG algorithm for all the parameters to identify which parameters
have large IACT's. We then generate these parameters in the MWG step.

{\bf Priors:} %\label{SSS: priors univ SV}}
We now specify the prior distributions of the parameters.  We follow \citet{Kim:1998}
and choose the prior for the persistence parameter  $\phi $
as  $\left(\phi+1\right)/2\sim\textrm{Beta}\left(a_{0},b_{0}\right)$,  with $a_{0}=100$ and $b_{0}=1.5$, i.e.,

\begin{align*} %\label{eq: prior phi}
p\left(\phi\right)=\frac{1}{2B\left(a_{0},b_{0}\right)}\left(\frac{1+\phi}{2}\right)^{a_{0}-1}\left(\frac{1-\phi}{2}\right)^{b_{0}-1}.
\end{align*}
The prior for $\tau$ is the half Cauchy, i.e.,
$p\left(\tau\right)\propto\frac{I(\tau > 0 )} {1+\tau^{2}}$. The
prior for $p\left(\mu\right)\propto1$.
We reparametrize $\rho = \tanh(\xi)$ and put a flat prior on $\xi$.
We note that because of the large sample size, the results are insensitive
to these prior choices. 

%\footnote{note that the division of the parameters}

We compare the performance of the following samplers:
(I)~\corrPMMHPG $(\left(\rho,\tau^{2};\mu,\phi\right)$,
(II)~the particle Gibbs with backward simulation approach of \citet{Whiteley2010} $\left(\textrm{PGBS}\left(\mu,\tau^{2},\rho,\phi\right)\right)$,
(III)~the particle Gibbs with data augmentation approach of
\citet{Fearnhead2016} $\left(\textrm{PGDA}\left(\mu,\tau^{2},\rho,\phi\right)\right)$, 
(IV)~the correlated pseudo-marginal sampler of \citet{Deligiannidis2018} $\left(\textrm{CPMMH}\left(\mu,\tau^{2},\rho,\phi\right)\right)$,
and (V)~the $\textrm{PHS}\left(\rho,\tau^{2};\mu,\phi\right)$ sampler of \citet{Mendes2020}.
The tuning parameters of the PGDA sampler are set optimally according to \citet{Fearnhead2016}. The correlation parameter $\rho_x$ in the \CPMMH{} is set to 0.999.
We apply these methods to a sample of daily US food
industry stock returns data obtained from the Kenneth French website,
using a sample from December 11th, 2001 to the 11th of November 2013,
a total of $3001$ observations, $\phi=0.98$, $\tau^{2}=0.10$, $\rho=-0.45$, and $\mu=-0.42$.

%and to simulated data with $T=6000$ observations
%$\textrm{PMMH}\left(\rho,\tau^{2}\right)+\textrm{PG}\left(\mu,\phi\right)$
%\footnote{I think I called it the PHS; need to check. Then the notation will be \PHS $(\theta_1, \theta)2)$}

{\bf Empirical Results:}
We use $N=20,50,100$ particles for the \corrPMMHPG{} and CPMMH sampler, $N=200,500,1000$ particles for the PGBS sampler,
and $N=1000,2000,5000$ particles for the PGDA sampler and the 
PHS.
In this example, we use the 
bootstrap particle filter to sample the particles for all samplers and the adaptive random walk Metropolis-Hastings of \citet{robertsrosenthal2009}
for the MWG step in the \corrPMMHPG{}, the \PMMH{}
step in the \PHS{} sampler, and the \CPMMH{}  sampler as the 
proposal density for the parameters.
The particle filter and the parameter samplers
are implemented in Matlab. 
We ran the five sampling schemes for $15000$ iterations,
discarding the initial
$5000$ iterations as warmup.

%\footnote{make that the \PHS{} samplers}

Table \ref{tab:Inefficiency-factor-of univariate SV leverage example-T3000} shows
the IACT, TNV and RTNV estimates for the parameters and the latent volatilities in the univariate SV model with leverage estimated
using the five different samplers for the US food
industry stock returns data with $T=3000$ observations. 

\begin{sidewaystable}[H]
\caption{Univariate SV model with leverage for the five samplers. Sampler I:
$\textrm{CPHS}\left(\rho,\tau^{2};\mu,\phi\right)$, Sampler II: $\textrm{PGBS}\left(\mu,\tau^{2},\phi,\rho\right)$,
Sampler III: $\textrm{PGDA}\left(\mu,\tau^{2},\phi,\rho\right)$,
Sampler IV: $\textrm{ CPMMH}\left(\mu,\tau^{2},\phi,\rho\right)$,
and Sampler V: $\textrm{PHS}\left(\tau^{2},\rho;\mu,\phi\right)$
for US stock returns data with $T=3000$. Time is the time in seconds for one iteration of the algorithm. The RTNV is the TNV relative to the CPHS with $N=50$. The IACT, TNV, and RTNV are defined in section \ref{SS: preliminaries}.   \label{tab:Inefficiency-factor-of univariate SV leverage example-T3000}}

\centering{}%
\begin{tabular}{c|c|c|c|c|c|c|c|c|c|c|c|c|c|c|c|}
\hline 
{\footnotesize{}Param} & \multicolumn{3}{c|}{{\footnotesize{}I}} & \multicolumn{3}{c|}{{\footnotesize{}II}} & \multicolumn{3}{c|}{{\footnotesize{}III}} & \multicolumn{3}{c|}{{\footnotesize{}IV}} & \multicolumn{3}{c|}{{\footnotesize{}V}}\tabularnewline
\hline 
{\footnotesize{}$N$} & {\footnotesize{}20} & {\footnotesize{}50} & {\footnotesize{}100} & {\footnotesize{}200} & {\footnotesize{}500} & {\footnotesize{}1000} & {\footnotesize{}1000} & {\footnotesize{}2000} & {\footnotesize{}5000} & {\footnotesize{}20} & {\footnotesize{}50} & {\footnotesize{}100} & {\footnotesize{}1000} & {\footnotesize{}2000} & {\footnotesize{}5000}\tabularnewline
\hline 
{\footnotesize{}$\widehat{\textrm{IACT}}(\phi)$} & {\footnotesize{}$17.63$} & {\footnotesize{}$16.44$} & {\footnotesize{}$12.45$} & {\footnotesize{}177.48} & {\footnotesize{}191.48} & {\footnotesize{}100.60} & {\footnotesize{}155.84} & {\footnotesize{}61.34} & {\footnotesize{}43.18} & {\footnotesize{}114.64} & {\footnotesize{}22.81} & {\footnotesize{}18.67} & {\footnotesize{}18.81} & {\footnotesize{}12.95} & {\footnotesize{}10.27}\tabularnewline
{\footnotesize{}$\widehat{\textrm{IACT}}(\mu)$} & {\footnotesize{}$1.27$} & {\footnotesize{}$1.27$} & {\footnotesize{}$1.02$} & {\footnotesize{}1.26} & {\footnotesize{}1.08} & {\footnotesize{}1.20} & {\footnotesize{}91.75} & {\footnotesize{}42.88} & {\footnotesize{}38.29} & {\footnotesize{}109.06} & {\footnotesize{}31.27} & {\footnotesize{}46.93} & {\footnotesize{}1.06} & {\footnotesize{}1.09} & {\footnotesize{}1.14}\tabularnewline
{\footnotesize{}$\widehat{\textrm{IACT}}(\tau^{2})$} & {\footnotesize{}$34.08$} & {\footnotesize{}$24.68$} & {\footnotesize{}$22.18$} & {\footnotesize{}536.84} & {\footnotesize{}656.13} & {\footnotesize{}407.15} & {\footnotesize{}117.31} & {\footnotesize{}46.73} & {\footnotesize{}40.83} & {\footnotesize{}73.53} & {\footnotesize{}44.69} & {\footnotesize{}22.48} & {\footnotesize{}33.14} & {\footnotesize{}20.21} & {\footnotesize{}16.75}\tabularnewline
{\footnotesize{}$\widehat{\textrm{IACT}}_{\textrm{MAX}}(\rho)$} & {\footnotesize{}$23.56$} & {\footnotesize{}$14.68$} & {\footnotesize{}$12.03$} & {\footnotesize{}284.07} & {\footnotesize{}281.14} & {\footnotesize{}270.43} & {\footnotesize{}91.57} & {\footnotesize{}38.83} & {\footnotesize{}25.94} & {\footnotesize{}141.27} & {\footnotesize{}20.54} & {\footnotesize{}17.99} & {\footnotesize{}25.17} & {\footnotesize{}15.77} & {\footnotesize{}12.10}\tabularnewline
\hline 
{\footnotesize{}$\widehat{\textrm{IACT}}_{\textrm{MAX}}(\theta)$} & {\footnotesize{}$34.08$} & {\footnotesize{}$24.68$} & {\footnotesize{}$22.18$} & {\footnotesize{}$536.84$} & {\footnotesize{}656.13} & {\footnotesize{}407.15} & {\footnotesize{}155.84} & {\footnotesize{}61.34} & {\footnotesize{}43.18} & {\footnotesize{}$141.27$} & {\footnotesize{}$44.69$} & {\footnotesize{}$46.93$} & {\footnotesize{}33.14} & {\footnotesize{}20.21} & {\footnotesize{}16.75}\tabularnewline
{\footnotesize{}$\widehat{\textrm{TNV}}_{\textrm{MAX}}(\theta)$} & {\footnotesize{}$8.52$} & {\footnotesize{}$8.39$} & {\footnotesize{}$11.09$} & {\footnotesize{}$85.89$} & {\footnotesize{}$229.65$} & {\footnotesize{}$252.43$} & {\footnotesize{}$236.88$} & {\footnotesize{}$180.95$} & {\footnotesize{}$270.74$} & {\footnotesize{}$21.19$} & {\footnotesize{}$8.04$} & {\footnotesize{}$11.73$} & {\footnotesize{}$37.12$} & {\footnotesize{}$39.41$} & {\footnotesize{}$61.98$}\tabularnewline
{\footnotesize{}$\widehat{\textrm{RTNV}}_{\textrm{MAX}}(\theta)$} & {\footnotesize{}$1.02$} & {\footnotesize{}$1$} & {\footnotesize{}$1.32$} & {\footnotesize{}$10.20$} & {\footnotesize{}$27.37$} & {\footnotesize{}$30.09$} & {\footnotesize{}$28.23$} & {\footnotesize{}$21.57$} & {\footnotesize{}$32.27$} & {\footnotesize{}$2.53$} & {\footnotesize{}$0.96$} & {\footnotesize{}$1.40$} & {\footnotesize{}$4.42$} & {\footnotesize{}$4.70$} & {\footnotesize{}$7.39$}\tabularnewline
{\footnotesize{}$\widehat{\textrm{IACT}}_{\textrm{MEAN}}(\theta)$} & {\footnotesize{}$19.13$} & {\footnotesize{}$14.27$} & {\footnotesize{}$11.92$} & {\footnotesize{}249.91} & {\footnotesize{}282.46} & {\footnotesize{}194.85} & {\footnotesize{}114.12} & {\footnotesize{}47.45} & {\footnotesize{}37.06} & {\footnotesize{}$109.62$} & {\footnotesize{}$29.83$} & {\footnotesize{}$26.52$} & {\footnotesize{}19.55} & {\footnotesize{}12.50} & {\footnotesize{}10.07}\tabularnewline
{\footnotesize{}$\textrm{TNV}_{\textrm{MEAN}}(\theta)$} & {\footnotesize{}$4.78$} & {\footnotesize{}$4.85$} & {\footnotesize{}$5.96$} & {\footnotesize{}$39.99$} & {\footnotesize{}$98.86$} & {\footnotesize{}$120.81$} & {\footnotesize{}$173.46$} & {\footnotesize{}$139.98$} & {\footnotesize{}$232.37$} & {\footnotesize{}$16.44$} & {\footnotesize{}$5.37$} & {\footnotesize{}$6.63$} & {\footnotesize{}$21.90$} & {\footnotesize{}$24.38$} & {\footnotesize{}$37.26$}\tabularnewline
{\footnotesize{}$\textrm{RTNV}_{\textrm{MEAN}}(\theta)$} & {\footnotesize{}$0.99$} & {\footnotesize{}$1$} & {\footnotesize{}$1.23$} & {\footnotesize{}$8.25$} & {\footnotesize{}$20.38$} & {\footnotesize{}$24.91$} & {\footnotesize{}$35.76$} & {\footnotesize{}$28.86$} & {\footnotesize{}$47.91$} & {\footnotesize{}$3.39$} & {\footnotesize{}$1.11$} & {\footnotesize{}$1.37$} & {\footnotesize{}$4.52$} & {\footnotesize{}$5.03$} & {\footnotesize{}$7.68$}\tabularnewline
\hline 
{\footnotesize{}$\widehat{\textrm{IACT}}_{\textrm{MAX}}$$\left(x_{1:T}\right)$} & {\footnotesize{}$34.08$} & {\footnotesize{}$24.69$} & {\footnotesize{}$22.18$} & {\footnotesize{}17.08} & {\footnotesize{}20.93} & {\footnotesize{}11.80} & {\footnotesize{}58.99} & {\footnotesize{}25.90} & {\footnotesize{}15.24} & {\footnotesize{}419.10} & {\footnotesize{}334.76} & {\footnotesize{}387.20} & {\footnotesize{}4.42} & {\footnotesize{}2.92} & {\footnotesize{}2.43}\tabularnewline
{\footnotesize{}$\textrm{TNV}_{\textrm{MAX}}$$\left(x_{1:T}\right)$} & {\footnotesize{}8.52} & {\footnotesize{}8.39} & {\footnotesize{}11.09} & {\footnotesize{}2.73} & {\footnotesize{}7.33} & {\footnotesize{}7.32} & {\footnotesize{}89.66} & {\footnotesize{}76.41} & {\footnotesize{}95.55} & {\footnotesize{}62.87} & {\footnotesize{}60.26} & {\footnotesize{}96.80} & {\footnotesize{}4.95} & {\footnotesize{}5.69} & {\footnotesize{}8.99}\tabularnewline
{\footnotesize{}$\textrm{RTNV}_{\textrm{MAX}}$$\left(x_{1:T}\right)$} & {\footnotesize{}1.02} & {\footnotesize{}1} & {\footnotesize{}1.32} & {\footnotesize{}0.33} & {\footnotesize{}0.87} & {\footnotesize{}0.87} & {\footnotesize{}10.69} & {\footnotesize{}9.11} & {\footnotesize{}11.39} & {\footnotesize{}7.49} & {\footnotesize{}7.18} & {\footnotesize{}11.54} & {\footnotesize{}0.59} & {\footnotesize{}0.68} & {\footnotesize{}1.07}\tabularnewline
{\footnotesize{}$\widehat{\textrm{IACT}}_{\textrm{MEAN}}$$\left(x_{1:T}\right)$} & {\footnotesize{}3.91} & {\footnotesize{}$2.32$} & {\footnotesize{}$1.81$} & {\footnotesize{}1.91} & {\footnotesize{}1.87} & {\footnotesize{}1.57} & {\footnotesize{}14.08} & {\footnotesize{}7.92} & {\footnotesize{}5.93} & {\footnotesize{}83.75} & {\footnotesize{}30.29} & {\footnotesize{}15.83} & {\footnotesize{}1.22} & {\footnotesize{}1.13} & {\footnotesize{}1.12}\tabularnewline
{\footnotesize{}$\textrm{TNV}_{\textrm{MEAN}}$$\left(x_{1:T}\right)$} & {\footnotesize{}0.98} & {\footnotesize{}0.79} & {\footnotesize{}0.91} & {\footnotesize{}0.31} & {\footnotesize{}0.65} & {\footnotesize{}0.97} & {\footnotesize{}21.40} & {\footnotesize{}23.36} & {\footnotesize{}37.18} & {\footnotesize{}12.56} & {\footnotesize{}5.45} & {\footnotesize{}3.96} & {\footnotesize{}1.37} & {\footnotesize{}2.20} & {\footnotesize{}4.14}\tabularnewline
{\footnotesize{}$\textrm{RTNV}_{\textrm{MEAN}}$$\left(x_{1:T}\right)$} & {\footnotesize{}1.24} & {\footnotesize{}1} & {\footnotesize{}1.15} & {\footnotesize{}0.39} & {\footnotesize{}0.82} & {\footnotesize{}1.23} & {\footnotesize{}27.10} & {\footnotesize{}29.57} & {\footnotesize{}47.06} & {\footnotesize{}15.09} & {\footnotesize{}6.90} & {\footnotesize{}5.01} & {\footnotesize{}1.73} & {\footnotesize{}2.78} & {\footnotesize{}5.24}\tabularnewline
\hline 
{\footnotesize{}Time} & {\footnotesize{}$0.25$} & {\footnotesize{}$0.34$} & {\footnotesize{}$0.50$} & {\footnotesize{}$0.16$} & {\footnotesize{}$0.35$} & {\footnotesize{}$0.62$} & {\footnotesize{}$1.52$} & {\footnotesize{}$2.95$} & {\footnotesize{}$6.27$} & {\footnotesize{}$0.15$} & {\footnotesize{}$0.18$} & {\footnotesize{}$0.25$} & {\footnotesize{}$1.12$} & {\footnotesize{}$1.95$} & {\footnotesize{}$3.70$}\tabularnewline
\hline 
\end{tabular}
\end{sidewaystable}

The table shows that:
(1) The PGBS sampler has large IACT values for parameters, $\tau^{2}$ and $\rho$,
and that putting the two parameters in the MWG step of the \corrPMMHPG{}, and the PMMH step of the PHS
improves the mixing significantly. The  
\CPMMH{} sampler has similar IACT values for parameters $\tau^{2}$ and $\rho$ to \corrPMMHPG{}  and PHS. 
(2)~Increasing the number of particles in  PGBS does not improve the IACT of the parameters $\phi$, $\tau^2$, and $\rho$.
(3)~PGDA requires more than $N=1000$ particles to improve the IACT of the parameters $\phi$, $\tau^2$, and $\rho$ compared to PGBS.
Increasing the number of particles in PGDA improves the mixing of the Markov chains.
(4)~In terms of $\textrm{TNV}_{\textrm{MAX}}$, the \corrPMMHPG{} with only $N=50$ particles
is 10.20, 27.37, 30.09 times better than PGBS  with $N=200,500,1000$ particles,
respectively, is 28.23, 21.57, 32.27 times better than the PGDA sampler with $N=1000,2000,5000$ particles, and is 4.42, 4.70, and 7.39 times better
than \PHS{} with $N=1000,2000,5000$ particles for estimating the parameters of the univariate SV model with leverage. Similar conclusions hold for $\textrm{TNV}_{\textrm{MEAN}}$. 
(5)~The performance of the correlated PMMH sampler is comparable to the \corrPMMHPG{} for this model. In terms of $\textrm{TNV}_{\textrm{MEAN}}$, the \corrPMMHPG{} with $N=50$ particles is 3.39, 1.11, and 1.37 times better than the correlated PMMH sampler with $N=20,50,100$ particles. Similar conclusions hold for $\textrm{TNV}_{\textrm{MAX}}$.
(6)~In terms of $\textrm{TNV}_{\textrm{MAX}}$, the \corrPMMHPG{} with $N=50$ particles is 10.69, 9.11, and 11.39 times better than the PGDA sampler with $N=1000,2000,5000$ particles, and is 7.49, 7.18, and 11.54 better than the CPMMH sampler with $N=20,50,100$ particles for estimating the latent volatilities. The performance of the PHS sampler with $N=1000,2000$ particles and the PGBS sampler with $N=200,500,1000$ particles are slightly better than the \corrPMMHPG{} for estimating latent volatilities.
(7)~It is clear that the \corrPMMHPG{} performs well for estimating the parameters and the latent volatilities for the univariate SV model with leverage for estimating the latent volatilities. The performance of \corrPMMHPG{} is comparable to the CPMMH for estimating the parameters and is comparable to the \PGBS{} for estimating the latent volatilities. 
Section~\ref{additionalunivariateSV} of the online supplement gives further results for univariate stochastic volatility model with leverage with different number of observations $T=2000,4000,...,20000$ observations. The example shows that the \CPHS{} is much better than CPMMH for estimating both the model parameters and the latent volatilities.

%XXX\footnote{In table~\ref{tab:Inefficiency-factor-of univariate SV leverage example-T6000} sort out the naming of the various samplers. I think this is the standard PHS and PHS above is the CPHS, but check}

We now consider the univariate SV model with leverage and a large
number of covariates. The measurement equation is 
\begin{equation}
y_{t}=w_{t}^{\top}\beta+\exp\left(x_{t}/2\right)\epsilon_{t},\;\textrm{where}\;\;\epsilon_{t}\sim\N\left(0,1\right).
\end{equation}
We compare the performance of the following samplers: 
(I) $\textrm{\corrPMMHPG}\left(\rho,\tau^{2};\mu,\phi,\beta\right)$,
(II) $\textrm{PGBS}\left(\mu,\tau^{2},\rho,\phi,\beta\right)$,
(III) $\textrm{CPMMH}\left(\mu,\tau^{2},\rho,\phi,\beta\right)$. We apply the methods to simulated data with $T=6000$ observations, $\phi=0.98$, $\tau^2=0.1$, $\rho=-0.2$, $\mu=-0.42$, $\beta_k$ is generated from normal distribution with mean zero and standard deviation 0.1, for all $k=1,...,50$. The covariates generated randomly from $\N(0,1)$. The prior for $\beta_k$ is $\N(0,1)$ for $k=1,...,50$. 

\begin{table}[H]
\caption{Univariate SV model with leverage and 50 covariates. Sampler I: $\textrm{CPHS}\left(\rho,\tau^{2};\mu,\phi,\beta\right)$,
Sampler II: $\textrm{PGBS}\left(\mu,\tau^{2},\phi,\rho,\beta\right)$, Sampler
III: $\textrm{CPMMH}\left(\mu,\tau^{2},\phi,\rho,\beta\right)$ for simulated
data with $T=6000$. Time is the time in seconds for one iteration of the algorithm. The RTNV is the TNV relative to the CPHS with $N=100$. The IACT, TNV, and RTNV are defined in section \ref{SS: preliminaries}. \label{UnivariateSVModelwithCovariates}}

\centering{}%
\begin{tabular}{c|c|c|c|c|c|}
\hline 
Param & \multicolumn{1}{c|}{I} & \multicolumn{1}{c|}{II} & \multicolumn{3}{c|}{III}\tabularnewline
\hline 
$N$ & 100 & 1000 & 100 & 500 & 1000\tabularnewline
\hline 
$\widehat{\textrm{IACT}}(\phi)$ & 5.20 & 39.45 & 223.78 & 99.50 & 149.60\tabularnewline
$\widehat{\textrm{IACT}}(\mu)$ & 1.07 & 1.00 & 139.16 & 149.02 & 99.30\tabularnewline
$\widehat{\textrm{IACT}}(\tau^{2})$ & 12.84 & 179.30 & 97.15 & 134.03 & 99.78\tabularnewline
$\widehat{\textrm{IACT}}(\rho)$ & 11.63 & 102.60 & 133.71 & 97.44 & 102.45\tabularnewline
$\widehat{\textrm{IACT}}_{\textrm{MEAN}}\left(\beta\right)$ & 1.56 & 1.54 & 98.13 & 88.97 & 85.19\tabularnewline
$\widehat{\textrm{IACT}}_{\textrm{MAX}}\left(\beta\right)$ & 1.78 & 1.83 & 146.72 & 121.53 & 111.90\tabularnewline
\hline 
$\widehat{\textrm{IACT}}_{\textrm{MAX}}(\theta)$ & 12.84 & 179.30 & 223.78 & 149.02 & 149.60\tabularnewline
$\widehat{\textrm{TNV}}_{\textrm{MAX}}(\theta)$ & 14.12 & 227.71 & 114.13 & 269.73 & 477.22\tabularnewline
$\widehat{\textrm{RTNV}}_{\textrm{MAX}}(\theta)$ & 1 & 16.13 & 8.08 & 19.10 & 33.80\tabularnewline
$\widehat{\textrm{IACT}}_{\textrm{MEAN}}(\theta)$ & 2.01 & 7.40 & 101.85 & 91.27 & 87.23\tabularnewline
$\textrm{TNV}_{\textrm{MEAN}}(\theta)$ & 2.21 & 9.40 & 51.94 & 165.20 & 278.26\tabularnewline
$\textrm{RTNV}_{\textrm{MEAN}}(\theta)$ & 1 & 4.25 & 23.50 & 74.75 & 125.91\tabularnewline
\hline 
$\widehat{\textrm{IACT}}_{\textrm{MAX}}$$\left(x_{1:T}\right)$ & 4.21 & 2.17 & 127.76 & 68.63 & 25.73\tabularnewline
$\textrm{TNV}_{\textrm{MAX}}$$\left(x_{1:T}\right)$ & 4.63 & 2.76 & 65.16 & 124.22 & 82.08\tabularnewline
$\textrm{RTNV}_{\textrm{MAX}}$$\left(x_{1:T}\right)$ & 1 & 0.60 & 14.07 & 26.83 & 17.73\tabularnewline
$\widehat{\textrm{IACT}}_{\textrm{MEAN}}$$\left(x_{1:T}\right)$ & 1.21 & 1.04 & 10.83 & 7.70 & 7.37\tabularnewline
$\textrm{TNV}_{\textrm{MEAN}}$$\left(x_{1:T}\right)$ & 1.33 & 1.32 & 5.52 & 13.94 & 23.51\tabularnewline
$\textrm{RTNV}_{\textrm{MEAN}}$$\left(x_{1:T}\right)$ & 1 & 0.99 & 4.15 & 10.48 & 17.68\tabularnewline
\hline 
Time & 1.10 & 1.27 & 0.51 & 1.81 & 3.19\tabularnewline
\hline 
\end{tabular}
\end{table}

Table \ref{UnivariateSVModelwithCovariates} reports the estimated IACT, TNV, and RTNV values for the parameters and the latent volatilities in the univariate SV model with leverage and covariates estimated using $3$ different samplers. The table suggests the following: (1) The \corrPMMHPG{} is much better than the CPMMH and the PGBS samplers for estimating the parameters in the univariate SV model. This example shows very clearly how the flexibility of \corrPMMHPG{} can be used to obtain good results. The vector of parameters $\beta$ are high-dimensional and not highly correlated with the states, so it is important to generate them in a PG step. Both $\tau^2$ and $\rho$ are generated in an MWG step because they are highly correlated with the states. In general, it is useful to generate the parameters that are highly correlated with the states using a MWG step. If there is a subset of parameters that is not highly correlated with the states, then it is better to generate them using a PG step that conditions on the states, especially when the number of parameters is large. In addition, using PG{} is preferable  in general to MWG because it is easier to obtain better proposals within a PG framework. (2) The CPMMH performs much worse than \corrPMMHPG{} because the adaptive random walk is inefficient in high dimensions. 

%\footnote{This paragraph needs rewriting}  

%xx\footnote{In figure \ref{fig:Left:-The-plots figure RTNV mean and RTNV max}, define $T$ and state estimator used for the figure}

%xx\footnote{In table~\ref{UnivariateSVModelwithCovariates}, all entries should be explained, or the explanation referred to a previous entry.
%What do the entries corresponding to the parameters mean?}
%\begin{verbatim}
%\end{verbatim}

\section{Multivariate example \label{Multivariate example}}
This section applies the \corrPMMHPG{} to a model having a large number of observations,
a large number of parameters, and a large number of latent states.
Section~\ref{S: factor SV model} discusses the multivariate factor stochastic volatility model with leverage.
Section~\ref{results multivariate examples} compares the performance of the \corrPMMHPG{} to competing PMCMC methods
to estimate a multivariate factor stochastic volatility model with leverage using real datasets having a medium and large number of observations.

\subsection{The factor stochastic volatility model\label{S: factor SV model}}
The factor SV model is often used to parsimoniously model a vector of returns; 
see, e.g., \cite{Chib2006, Kastner:2017,Kastner2019}. However, estimating time-varying multivariate factor SV models 
can be very
challenging because the likelihood involves computing an integral over a very high dimensional latent state and the number of parameters in the model can be very large.
We use this complex model to illustrate the \corrPMMHPG{}.

Suppose that $P_{t}$ is a $S\times1$ vector of daily
stock prices and define $y_{t}\coloneqq\log P_{t}-\log P_{t-1}$
as the log-return of the stocks in period $t$. 
We model $ y_{t}$ as the factor SV model
\begin{align}
y_{t}= \beta f_{t}+V_{t}^{\frac{1}{2}} \epsilon_{t},\:\left(t=1,...,T\right),\label{eq:factor model}
\end{align}
where $ f_{t}$ is a $K\times1$ vector of latent factors
(with $K \ll S$), $ \beta$ is a $S\times K$ factor
loading matrix of the unknown parameters.
Section~\ref{S: sampling factor loading matrix}  of the supplement discusses parametrization and identification issues for the factor loading matrix $ \beta$
and the latent factors $ f_t$.

We assume that:  the $ \epsilon_{t}\sim N\left(0,I\right)$; the idiosyncratic error volatility matrix $ V_t$ is diagonal,
with diagonal elements $ \exp(h_{st})$; the log volatility processes $\{h_{st},  t \geq 1\} $ are independent for
$s=1, \dots, S$, and each follows an independent autoregressive process of the form
\begin{equation} \label{eq: h eqn for factor sv}
\begin{aligned}
h_{s1} & \sim N\begin{pmatrix} \mu_{\epsilon s} , \nicefrac{\tau^2_{\epsilon s}}{1 - \phi^2_{\epsilon s}}  \end{pmatrix},  \\
h_{s,t+1} & = 
\mu_{\epsilon s} + \phi_{\epsilon s} \(h_{st} -\mu_{\epsilon s} \)  + \eta_{\epsilon st}, \\
{\rm with} \quad
\begin{pmatrix} \epsilon_{st} \\ \eta_{\epsilon st} \end{pmatrix} & \sim  N \begin{pmatrix} \begin{pmatrix} 0 \\ 0 \end{pmatrix}
\begin{pmatrix} 1 & \rho_{\epsilon s} \tau_{\epsilon s}\\ \rho_{\epsilon s} \tau_{\epsilon s} & \tau^2_{\epsilon s}
\end{pmatrix}
\end{pmatrix}.
\end{aligned}
\end{equation}

%\footnote{\begin{itemize}
%    \item use section instead of Section throughout. \item In addition, you refer to where the process is described, but then you also describe it here.\item 
%    can you motivate this choice?
%\end{itemize} }
%, and hence can handle diffusions that
%do not admit closed form transition densities.

We also consider that the log-volatilities $h_{s,t}$ follow a GARCH
diffusion continuous time volatility process, which 
does not have a closed form transition
density.  The continuous time GARCH diffusion process $\left\{ h_{s,t}\right\} _{t\geq1}$
is \citep{Kleppe2010,Chib2004} 
\begin{equation}
dh_{s,t}=\left\{ \alpha_{\epsilon,s}\left(\mu_{\epsilon,s}-\exp\left(h_{s,t,j}\right)\right)\exp\left(-h_{s,t,j}\right)-\frac{\tau_{\epsilon,s}^{2}}{2}\right\} dt+\tau_{\epsilon,s}dW_{s,t},
\end{equation}
where the $W_{s,t}$ are independent Wiener processes. We use the GARCH diffusion model to show that the CPHS can be used to estimate state space models that do not have closed form transition densities. In this example, the CPHS is applied to an Euler approximation
of the diffusion process driving the log volatilities.
The Euler scheme places
$M-1$ evenly spaced points between times $t$ and $t+1$. We denote
the intermediate volatility components by $h_{s,t,1},...,h_{s,t,M-1}$
and set $h_{s,t,0}=h_{s,t}$ and $h_{s,t,M}=h_{s,t+1}$. The equation
for the Euler evolution, starting at $h_{s,t,0}$ is 
\begin{equation}
h_{s,t,j+1}|h_{s,t,j}\sim N\left(h_{s,t,j}+\left\{ \alpha_{\epsilon,s}\left(\mu_{\epsilon,s}-\exp\left(h_{s,t,j}\right)\right)\exp\left(-h_{s,t,j}\right)-\frac{\tau_{\epsilon,s}^{2}}{2}\right\} \delta,\tau_{\epsilon,s}^{2}\delta\right),
\end{equation}
for $j=0,...,M-1$, where $\delta=1/M$.

The factors $f_{kt}, k=1, \dots, K$ are assumed independent with $ f_t \sim N (  f_t; 0 ,  D_t) $; $  D_t$ is a diagonal matrix
with $k$th diagonal element $\exp(\lambda_{kt})$. Each  log volatility $\lambda_{kt}, k=1, \dots, K$ is assumed to follow an independent autoregressive process of the form
\begin{equation} \label{eq: lambda eqn for factor sv}
\begin{aligned}
\lambda_{k1} & \sim N\begin{pmatrix}   0 , \frac{\tau^2_{f k}}{1 - \phi^2_{f k}}  \end{pmatrix}, \quad
\lambda_{k,t+1}  = \phi_{f k } \lambda_{k t }   +  \eta_{f k t}, \quad  \eta_{f k t} \sim N\begin{pmatrix}
 0,\tau^2_{fk}\end{pmatrix} \quad (t\geq 1). \\
\end{aligned}
\end{equation}

{\bf Prior specification} %\label{SSS: priors FSV}}
For $s=1,...,S$ and $k=1,...,K$, we
choose the priors for the
persistence parameters  $\phi_{\epsilon s} $ and $\phi_{fk}$, the priors for $\tau_{\epsilon s}, \tau_{f k}, \mu_{\epsilon s}$ and
$\rho_{\epsilon s}$  as in section~\ref{Univariate Empirical Results}. For every unrestricted element of the factor loadings
matrix $\beta$, we follow \citet{Kastner:2017} and choose standard normal distributions. The priors for the GARCH parameters are $\alpha_{\epsilon,s}\sim IG(v_0/2,s_0/2)$, $\tau_{\epsilon,s} \sim IG(v_0/2,s_0/2)$, where $v_0=10$ and $s_0=1$, $p(\mu_{\epsilon,s}) \propto 1$ for $s=1,...,S$. 
These prior densities cover most possible values in practice.

Although the multivariate factor SV model can be written in standard state space form as in section \ref{sub:State-Space-Models},
it is more efficient to take advantage its conditional independence structure and develop a sampling scheme
on multiple independent univariate state space models.

{\bf Conditional Independence and sampling in the factor SV model:} %\label{SS: conditional independence}}
The key to making the estimation of the factor SV model tractable
is that given the values of $\left( y_{1:T}, f_{1:T},\beta \right)$,
the factor model given in \eqref{eq:factor model} separates
into $S+K$ independent components consisting of $K$ univariate SV
models for the latent factors and $S$ univariate SV models with (or
without) leverage for the idiosyncratic errors. That is, given $\left(y_{1:T}, f_{1:T}, \beta \right)$,
we have $S$  univariate SV models with leverage, with $\epsilon_{s t}$ the $t$th \lq observation\rq{} on the $s$th SV model, and we have $K$ univariate SV models without
leverage, with $f_{k t}$ the $t$th observation on the $k$th univariate SV model.
Section~\ref{subsec:Correlated-PMMH+PG-sampling} of the supplement discusses the \corrPMMHPG{} for the factor SV model and makes full use of its conditional independence structure.
Section~\ref{S: deep interweaving}
of the supplement also discusses a deep interweaving strategy for the loading matrix $ \beta$ and the factor $ f_t$ that helps the sampler mix better.

{\bf Target density:}
Section~\ref{subsec:Target-Distributions-for factor SV}  of the supplement gives the target distribution for the factor SV model, which is
a composite of the target densities
of the univariate SV models for the idiosyncratic errors and the factors, together with densities for $\bs \beta$ and $\bs f_{1:T}$.

\subsection{Empirical Study \label{results multivariate examples}}
This section presents empirical results for the multivariate factor SV model with leverage model described in section \ref{S: factor SV model}.

{\em Estimation details:} %\label{subsec:Discussions-of-Empirical results}}
We applied our method to a sample of daily US industry stock returns
data. The data, obtained from the website of Kenneth French, consists
of daily returns for $S=26$ value weighted US industry portfolios. We
use a sample from November 19th, 2009 to 11th of November, 2013 a
total of 1000 observations and a sample from December 11th, 2001 to
the 11th of November, 2013, a total of 3001 observations. The computation
is done in Matlab and is run on 28 CPU-cores of GADI high-performance
computer cluster at the National Computing Infrastructure Australia\footnote{https://nci.org.au/}. 

We do not compare our method to the correlated pseudo-marginal method of \citet{Deligiannidis2018}
which generates the parameters, with the factors and idiosyncratic latent
log volatilities  integrated out for two reasons.
First, using a pseudo-marginal method results in a $S+K= 30 $ dimensional state space model
and  \citet{Mendes2020} show that it is
very hard to preserve the correlation between the logs of the estimated likelihoods at the current and proposed
values for such a model. Thus the correlated pseudo-marginal method
would get stuck unless enough particles are used to ensure that the
variance of log of the estimated  likelihood is close to 1.
\citet{Deligiannidis2018} also discuss the issue of how the correlation between the log of the estimated likelihoods at the current and proposed values decreases as the dimension increases.
Second, the dimension of the parameter space
in the factor SV model is large which makes it difficult to implement the pseudo-marginal sampler
efficiently as it is difficult to obtain good proposals for the parameters because
the first and second derivatives of the likelihood with respect to the parameters
can only be estimated, while the random walk proposal is easy to implement but is very inefficient
in high dimensions.

\citet{Kastner:2017} develop the Gibbs-type MCMC algorithm to estimate multivariate factor stochastic volatility model. They use the approach proposed by \citet{Kim:1998} to approximate the distribution of innovations in the log outcomes by a mixture of normals. Their estimator is not simulation consistent because it does not correct for these approximations. In addition, the Gibbs-type MCMC sampler cannot be used to estimate the factor SV model with the log volatility following diffusion processes with intractable state transition density. It is well known that the Gibbs sampler is inefficient for generating parameters of a diffusion model, in particular the variance parameter \citep{Stramer2011}.

%In addition, we do not compare our method to Gibbs type MCMC sampler of . 

%\footnote{maybe rewrite as ``We report results for a modified version of the PGDA sampler, as ....} 

We report results for a modified version of the PGDA sampler applied to the multivariate factor stochastic volatility model as 
\citet{Mendes2020} shows that the PGDA sampler of \citet{Fearnhead2016} does not work well when the model has many parameters
because the PGDA updates the pseudo observations of the parameters by MCMC and updates all of the latent states and parameters jointly using the particle filter.
We now extend the PGDA method to estimate the multivariate factor SV model and call it the refined PGDA method.
The sampler first generates the factor loading matrix and latent factors by a PG step and then,
conditioning on the latent factors and the factor loading matrix, we obtain $S$ univariate SV models with leverage and $K$ univariate SV models.
Then, for each of the univariate models, we can apply the \citet{Fearnhead2016} approach by updating the pseudo observation of the SV parameters by MCMC
and updating the parameters and the latent states jointly by particle filter. The tuning parameters of the PGDA sampler are set optimally according to \citet{Fearnhead2016}.
We compare the following samplers: 
(I) the $\textrm{\corrPMMHPG}\left({\rho}_{\epsilon},{\tau}_{\epsilon}^{2},{\tau}_{f}^{2};{f}_{1:T},{\beta},{\mu}_{\epsilon},\phi_{\epsilon},\phi_{f}\right)$,
(II) the $\textrm{PGBS}\left({f}_{1:T},{\beta},{\rho}_{\epsilon},{\tau}_{\epsilon}^{2},{\mu}_{\epsilon},\phi_{\epsilon},\phi_{f},{\tau}_{f}^{2}\right)$,
(III)~the refined $\textrm{PGDA}\left({f}_{1:T},{\beta},{\rho}_{\epsilon},{\tau}_{\epsilon}^{2},{\mu}_{\epsilon},\phi_{\epsilon},\phi_{f},{\tau}_{f}^{2}\right)$,
and\newline 
(IV)~the $\textrm{PHS}\left({\rho}_{\epsilon},{\tau}_{\epsilon}^{2},{\tau}_{f}^{2};{f}_{1:T},{\beta},{\mu}_{\epsilon},\phi_{\epsilon},\phi_{f}\right)$.

%\footnote{
%I think we should write the Mendes et al sampler as the PHS{} 
%$({\rho}_{\epsilon},{\tau}_{\epsilon}^{2},{\tau}_{f}^{2}; {f}_{1:T},{\beta},%{\mu}_{\epsilon},\phi_{\epsilon},\phi_{f})$ sampler.}

{\em Empirical Results}
Tables \ref{tab:IACT T=00003D1000 N=00003D100} to \ref{tab:Inefficiency-factor-(IACT) T1000 N1000} in section \ref{S: further empirical results for factor SV model} of the online supplement
show the mean and maximum IACT values for each parameter in
the factor SV model with leverage for $T=1000$ observations, $S=26$ stock returns, and $K=1$ factor.
We compare the \corrPMMHPG{} with
$N=100$ with the PGBS, the refined PGDA, and the PHS
samplers with $N\in\left\{ 100,250,500,1000\right\} $. The tables show that the factor
loading matrix $\beta$ is sampled efficiently
by all samplers with comparable IACT values.
However, the \corrPMMHPG{} has much smaller IACT values than the PGBS sampler
for the $\tau^{2}$, $\phi$, and $\rho$ parameters for all $N$ (number of particles).
The \corrPMMHPG{} has much smaller IACT values for the $\tau^{2}$, $\phi$, and $\rho$ parameters than the
\PHS{}  and the refined PGDA samplers when $N=100$ and $250$ and has comparable IACT values  for the $\tau^{2}$, $\phi$, and $\rho$ parameters when $N=500$
and $1000$.

Table \ref{tab:Comparing-Sampler-I: Empirical Results1} summarises
the estimation result for a factor SV model with leverage with $T=1000$ observations,
$S=26$ stock returns, and $K=1$ factor.  The table shows that: (1) Increasing the number of particles
from $N=100$ to $1000$ does not seem to improve the performance
of the PGBS sampler. This indicates that for the parameters that are
highly correlated with the states, such as the $\tau_{\epsilon}^{2}$,
increasing the number of particles does not improve the mixing of
the parameters. (2) Increasing the number of particles from $N=100$
to $1000$ improves the performance of the PHS and the refined PGDA samplers significantly.
They require at least  $N=1000$ particles to obtain comparable values of $\widehat{\textrm{IACT}}_{\textrm{MEAN}}$
and $\widehat{\textrm{IACT}}_{\textrm{MAX}}$ as the \corrPMMHPG{}.
(3) The \corrPMMHPG{} with $N=100$ particles is much more efficient than the PGBS sampler in terms of both
$\widehat{\textrm{RTNV}}_{\textrm{MEAN}}$ and $\widehat{\textrm{RTNV}}_{\textrm{MAX}}$ for all number of particles $N$.
(4) The \corrPMMHPG{} is much more efficient than the PHS and the refined PGDA samplers for $N=100$ and $250$,
but only slightly more efficient for $N=500$ and $1000$.

We now compare the performance of the \corrPMMHPG{} with
$N=100$ when the number of observations $T=3000$ is large,
with PGBS, the refined PGDA, and the \PHS{}  with $N\in\left\{ 500,1000,2000\right\} $. 
Tables~\ref{tab:Inefficiency-factor-(IACT) N500 T3000} to \ref{tab:Inefficiency-factor-(IACT) N2000 T3001} in section~\ref{S: further empirical results for factor SV model} of the online supplement
show the mean and maximum IACT values for each parameter in
the factor SV model with leverage for $T=3000$ observations, $S=26$ stock returns, $K=4$ factors,
for the \corrPMMHPG{}, the 
\PHS{}, the PGBS and the refined PGDA samplers. 
Similarly to the $T=1000$ case,
the tables show that the factor loading matrix $\beta$ is sampled efficiently by all samplers, with comparable IACT
values. The \corrPMMHPG{} has much smaller IACT values
than the PGBS sampler, the refined PGDA, and the \PHS{} sampler for the $\tau^{2}$, $\phi$,
and $\rho$ parameters for all cases. Note that even with $N=2000$,
the \PHS{}, PGBS, and refined PGDA samplers do not give comparable performance to the \corrPMMHPG{}. 
Table~\ref{tab:Comparing-Sampler-I:Empirical results2} summarises the estimation result for a factor SV model with leverage with $T=3000$ observations,
$S=26$ stock returns, and $K=4$ factors. The table shows that:
(1)~It is necessary to greatly increase the number of 
particles  for both the \PHS{} and refined PGDA samplers to avoid the Markov chains from getting stuck.
It is clear, however, that the \corrPMMHPG{} works with only $N=100$ particles.
This suggests that the \corrPMMHPG{}  scales up better 
than the other two samplers in both the number of observations and in the number of parameters.
(2)~The table clearly shows that the \corrPMMHPG{} with $N=100$ particles is much more efficient than the PHS, the PGBS,
and the refined PGDA samplers with $N=500, 1000, 2000$ particles in terms of both $\widehat{\textrm{RTNV}}_{\textrm{MEAN}}$ and $\widehat{\textrm{RTNV}}_{\textrm{MAX}}$.

%\footnote{I think this is the correlated PHS sampler. DG, please confirm?}

\subsection*{GARCH Diffusion Models}
This section considers the factor stochastic volatility models described
in section \ref{Multivariate example}, where the idiosyncratic log-volatilities follow a GARCH
diffusion continuous volatility time process which do not have closed form state transition densities. The following samplers are compared: 
(I)~The CPHS $\left(\tau_{\epsilon}^{2},\tau_{f}^{2},\mu_{\epsilon},\alpha_{\epsilon};\phi_{f},f_{1:T},\beta\right)$,
(II)~the $\textrm{PG}\left(\tau_{\epsilon}^{2},\tau_{f}^{2},\mu_{\epsilon},\alpha_{\epsilon},\phi_{f},f_{1:T},\beta\right)$, 
and (III) the \PHS $( \tau_{\epsilon}^{2},\tau_{f}^{2},\mu_{\epsilon},\alpha_{\epsilon};   \phi_{f},f_{1:T},\beta)$.
We compare the PHS with $N=100$ with the PG and the PHS
with $N\in\left\{ 500,1000\right\} $. 

%\footnote{I think this is the standard PHS which we will write \PHS $( \tau_{\epsilon}^{2},\tau_{f}^{2},\mu_{\epsilon},\alpha_{\epsilon};   \phi_{f},f_{1:T},\beta)$
%} 
%$\textrm{PMMH}\left(\tau_{\epsilon}^{2},\tau_{f}^{2},\mu_{\epsilon},\alpha_{\epsilon}\right)+\textrm{PG}\left(\phi_{f},f_{1:T},\beta\right)$.
Tables \ref{tab:Inefficiency-factor-(IACT) N500 T3000-diffusion} 
and \ref{tab:Inefficiency-factor-(IACT) N1000 T3000-diffusion-1}
in section~\ref{S: further empirical results for factor SV 
model} of the online supplement report the IACT estimates for 
all the parameters for the factor SV model with the 
idiosyncratic log-volatilities following GARCH diffusion models.
The table clearly shows that the PG sampler has large IACT 
values for the GARCH parameters for both $N=500$ and $1000$ 
particles. Putting those three parameters of the GARCH diffusion model in the PMMH step of the PHS sampler with $N=1000$ particles and in the MWG step of the CPHS sampler with $N=100$ particles  improves the mixing significantly. 
Table~\ref{tab:Comparing-Sampler-I:Diffusion} summarises the estimation results and shows that in terms of $\textrm{TNV}_{\textrm{MAX}}$, the CPHS is $93.90$ and $289.10$ times better than the PG sampler with $N=500$ and $1000$, respectively, and in terms of  $\textrm{TNV}_{\textrm{MEAN}}$, the CPHS is $24.25$ and $61.19$ times better than the PG sampler with $N=500$ and $1000$, respectively. The table also shows that in terms of $\textrm{TNV}_{\textrm{MAX}}$, the CPHS is $8.35$ and $12.07$ times better than the PHS sampler with $N=500$ and $1000$, respectively, and in terms of  $\textrm{TNV}_{\textrm{MEAN}}$, the CPHS is $4.31$ and $7.64$ times better than the PHS sampler with $N=500$ and $1000$, respectively.

\begin{sidewaystable}[H]
\caption{Factor SV model with leverage for US stock return data with ${T=1000}$ observations, $S=26$ stock returns, and $K=1$ factors. Comparing Sampler I: $\textrm{\corrPMMHPG}\left({\rho}_{\epsilon},{\tau}_{\epsilon}^{2},{\tau}_{f}^{2};{f}_{1:T},{\beta},{\mu}_{\epsilon},\phi_{\epsilon},\phi_{f}\right)$;
Sampler II: $\textrm{PHS \ensuremath{\left(\tau_{f}^{2},\tau_{\epsilon}^{2},\rho_{\epsilon};{f}_{1:T},\mu_{\epsilon},\phi_{\epsilon},\phi_{f},\beta\right)}}$;
Sampler III: $\textrm{PGBS}\left({f}_{1:T},\rho_{\epsilon},\mu_{\epsilon},\phi_{\epsilon},\phi_{f},\beta,\tau_{f}^{2},\tau_{\epsilon}^{2}\right)$; Sampler IV: $\textrm{PGDA}\left({f}_{1:T},\rho_{\epsilon},\mu_{\epsilon},\phi_{\epsilon},\phi_{f},\beta,\tau_{f}^{2},\tau_{\epsilon}^{2}\right)$
in terms of Time Normalised Variance. Time
denotes the time taken in seconds per iteration of the method. The
table shows the $\widehat{IACT}_{MAX}$, $\widehat{TNV}_{MAX}$, ...,
$\widehat{RTNV}_{MEAN}$. The IACT, TNV, and RTNV are defined in section \ref{SS: preliminaries}. \label{tab:Comparing-Sampler-I: Empirical Results1}}
 
\centering{}%
\begin{tabular}{|c|c|cccc|cccc|cccc|}
\hline 
 & I  & \multicolumn{4}{c|}{II} & \multicolumn{4}{c|}{III} & \multicolumn{4}{c|}{IV}\tabularnewline
\hline 
$N$ & 100 & 100 & 250 & 500 & 1000 & 100 & 250 & 500 & 1000 & 100 & 250 & 500 & 1000\tabularnewline
$\widehat{\textrm{IACT}}_{\textrm{MAX}}(\theta)$ & 69.00 & 4454.61 & 493.83 & 88.33 & 75.54 & 1966.36 & 2158.37 & 2016.98 & 1781.30 & 6052.27 & 1111.59 & 192.41 & 98.03\tabularnewline
$\widehat{\textrm{TNV}}_{\textrm{MAX}}(\theta)$ & 55.20 & 2316.40 & 340.74 & 83.03 & 109.53 & 845.53 & 1122.35 & 1311.04 & 1799.11 & 2602.48 & 578.03 & 126.99 & 99.99\tabularnewline
$\widehat{\textrm{RTNV}}_{\textrm{MAX}}(\theta)$ & 1 & 41.96 & 6.17 & 1.50 & 1.98 & 15.32 & 20.13 & 23.75 & 32.59 & 47.15 & 10.47 & 2.30 & 1.81\tabularnewline
$\widehat{\textrm{IACT}}_{\textrm{MEAN}}(\theta)$ & 19.47 & 140.20 & 32.78 & 22.32 & 19.98 & 237.77 & 259.93 & 237.88 & 245.48 & 545.18 & 81.01 & 29.25 & 17.78\tabularnewline
$\widehat{\textrm{TNV}}_{\textrm{MEAN}}(\theta)$ & 15.58 & 72.90 & 22.62 & 20.98 & 28.97 & 102.24 & 135.16 & 154.62 & 247.93 & 234.43 & 42.13 & 19.30 & 18.14\tabularnewline
$\widehat{\textrm{RTNV}}_{\textrm{MEAN}}(\theta)$ & 1 & 4.68 & 1.45 & 1.35 & 1.86 & 6.56 & 8.68 & 9.92 & 15.91 & 15.05 & 2.70 & 1.24 & 1.16\tabularnewline
Time & 0.80 & 0.52 & 0.69 & 0.94 & 1.45 & 0.43 & 0.52 & 0.65 & 1.01 & 0.43 & 0.52 & 0.66 & 1.02\tabularnewline
\hline 
\end{tabular}
\end{sidewaystable}

\begin{sidewaystable}[H]
\caption{Factor SV model with leverage for US stock return data with ${T=3001}$ observations, $S=26$ stock returns, and $K=4$ factors. Comparing Sampler I: $\textrm{\corrPMMHPG}\left({\rho}_{\epsilon},{\tau}_{\epsilon}^{2},{\tau}_{f}^{2};{f}_{1:T},{\beta},{\mu}_{\epsilon},\phi_{\epsilon},\phi_{f}\right)$;
Sampler II: $\textrm{PHS \ensuremath{\left(\tau_{f}^{2},\tau_{\epsilon}^{2},\rho_{\epsilon};{f}_{1:T},\mu_{\epsilon},\phi_{\epsilon},\phi_{f},\beta\right)}}$;
Sampler III: $\textrm{PGBS}\left({f}_{1:T},\rho_{\epsilon},\mu_{\epsilon},\phi_{\epsilon},\phi_{f},\beta,\tau_{f}^{2},\tau_{\epsilon}^{2}\right)$; Sampler IV: $\textrm{PGDA}\left({f}_{1:T},\rho_{\epsilon},\mu_{\epsilon},\phi_{\epsilon},\phi_{f},\beta,\tau_{f}^{2},\tau_{\epsilon}^{2}\right)$
in terms of Time Normalised Variance. Time
denotes the time taken in seconds per iteration of the method. The
table shows the $\widehat{IACT}_{MAX}$, $\widehat{TNV}_{MAX}$, ...,
$\widehat{RTNV}_{MEAN}$. The IACT, TNV, and RTNV are defined in section \ref{SS: preliminaries}. \label{tab:Comparing-Sampler-I:Empirical results2}}

\centering{}%
\begin{tabular}{|c|c|ccc|ccc|ccc|}
\hline 
 & I  & \multicolumn{3}{c|}{II} & \multicolumn{3}{c|}{III} & \multicolumn{3}{c|}{IV}\tabularnewline
\hline 
$N$ & 100 & 500 & 1000 & 2000 & 500 & 1000 & 2000 & 500 & 1000 & 2000\tabularnewline
$\widehat{\textrm{IACT}}_{\textrm{MAX}}(\theta)$ & 119.37 & 5003.03 & 2467.29 & 908.42 & 2517.54 & 2233.80 & 1914.02 & 4815.88 & 2269.26 & 1379.56\tabularnewline
$\widehat{\textrm{TNV}}_{\textrm{MAX}}(\theta)$ & 358.11 & 14808.97 & 11349.53 & 7721.57 & 7049.11 & 6768.41 & 9321.28 & 13725.26 & 7261.63 & 7008.06\tabularnewline
$\widehat{\textrm{RTNV}}_{\textrm{MAX}}(\theta)$ & 1 & 41.35 & 31.69 & 21.56 & 19.68 & 18.90 & 26.03 & 38.33 & 20.28 & 19.57\tabularnewline
$\widehat{\textrm{IACT}}_{\textrm{MEAN}}(\theta)$ & 24.02 & 138.21 & 85.71 & 41.14 & 217.58 & 215.24 & 247.47 & 395.05 & 140.35 & 72.56\tabularnewline
$\widehat{\textrm{TNV}}_{\textrm{MEAN}}(\theta)$ & 72.06 & 409.10 & 394.27 & 349.69 & 609.22 & 652.18 & 1205.18 & 1125.89 & 449.12 & 368.60\tabularnewline
$\widehat{\textrm{RTNV}}_{\textrm{MEAN}}(\theta)$ & 1 & 5.68 & 5.47 & 4.85 & 8.45 & 9.05 & 16.72 & 15.62 & 6.23 & 5.12\tabularnewline
Time & 3.00 & 2.96 & 4.60 & 8.50 & 2.80 & 3.03 & 4.87 & 2.85 & 3.20 & 5.08\tabularnewline
\hline 
\end{tabular}
\end{sidewaystable}

\begin{table}
\caption{Factor SV model with GARCH diffusion processes for the idiosyncratic
volatility for US stock return data with $T=3001$ observations, $S=26$
stocks, and $K=4$ factors. Comparing Sampler I: $\textrm{CPHS \ensuremath{\left(\tau_{\epsilon}^{2},\tau_{f}^{2},\mu_{\epsilon},\alpha_{\epsilon};\phi_{f},f_{1:T},\beta\right)}}$,
Sampler II: $\textrm{\textrm{PG}\ensuremath{\left(\tau_{\epsilon}^{2},\tau_{f}^{2},\mu_{\epsilon},\alpha_{\epsilon},\phi_{f},f_{1:T},\beta\right)}}$,
and Sampler III: the $\textrm{PHS}\left(\tau_{\epsilon}^{2},\tau_{f}^{2},\mu_{\epsilon},\alpha_{\epsilon};\phi_{f},f_{1:T},\beta\right)$
in terms of Time Normalised Variance. Time denotes the time taken
in seconds per iteration of the method. The table shows the $\widehat{IACT}_{MAX}$,
$\widehat{TNV}_{MAX}$, ..., $\widehat{RTNV}_{MEAN}$. The IACT, TNV, and RTNV are defined in section \ref{SS: preliminaries}.\label{tab:Comparing-Sampler-I:Diffusion}}

\centering{}%
\begin{tabular}{c|c|cc|cc}
\hline 
 & I & \multicolumn{2}{c|}{II} & \multicolumn{2}{c}{III}\tabularnewline
\hline 
$N$ & 100 & 500 & 1000 & 500 & 1000\tabularnewline
$\widehat{\textrm{IACT}}_{\textrm{MAX}}(\theta)$ & 77.27 & 6610.81 & 9535.97 & 282.65 & 175.38\tabularnewline
$\widehat{\textrm{TNV}}_{\textrm{MAX}}(\theta)$ & 554.80 & 52093.18 & 160395.02 & 4632.63 & 6696.01\tabularnewline
$\widehat{\textrm{RTNV}}_{\textrm{MAX}}(\theta)$ & 1 & 93.90 & 289.10 & 8.35 & 12.07\tabularnewline
$\widehat{\textrm{IACT}}_{\textrm{MEAN}}(\theta)$ & 19.76 & 436.58 & 516.13 & 37.31 & 28.38\tabularnewline
$\widehat{\textrm{TNV}}_{\textrm{MEAN}}(\theta)$ & 141.88 & 3440.25 & 8681.31 & 611.51 & 1083.55\tabularnewline
$\widehat{\textrm{RTNV}}_{\textrm{MEAN}}(\theta)$ & 1 & 24.25 & 61.19 & 4.31 & 7.64\tabularnewline
Time & 7.18 & 7.88 & 16.82 & 16.39 & 38.18\tabularnewline
\hline 
\end{tabular}
\end{table}

\section{Discussion \label{sec:discussion}}
Our article shows how to scale up 
the particle MCMC in terms of the
number of parameters and the number of observations by 
expressing the target density of the PMCMC in terms of the 
basic uniform or standard normal random numbers used in the sequential Monte Carlo algorithm,
rather than in terms of state particles. The parameters that can be drawn efficiently conditional on the particles are generated by a particle Gibbs step(s); all the other parameters are drawn in a Metropolis-within-Gibbs step(s) by conditioning on the basic uniform or standard normal random variables; e.g., parameters that are highly correlated with states, or parameters whose generation is expensive when conditioning on the states. The empirical results show that 
the \corrPMMHPG{} is scalable in the number of parameters and the number of observations and is much more efficient
than the competing MCMC methods. We also show that the particle Gibbs (PG) of \citet{andrieuetal2010} and the correlated pseudo marginal Metropolis-Hastings (CPMMH) algorithm of \citet{Deligiannidis2018} are special cases of the \corrPMMHPG{}. 

The \corrPMMHPG{} is useful when it is necessary  estimate complex statistical models that are very difficult to estimate by existing methods.
For example: complex factor stochastic volatility models where the factors and idiosyncratic errors follow mixture of normal distributions; multivariate financial time series model with recurrent neural network type architectures,
e.g. the Long Short-term Memory Model of \citet{Hochreiter1997} and the statistical recurrent unit of \citet{Oliva2017};
stochastic differential equation mixed effect models, which are used in psychology \citep{Oravecz2011} and biomedical work \citep{Leander2015}.

\section{Acknowledgement} The research of Robert Kohn and David Gunawan was partially supported by an ARC Center of Excellence grant CE140100049.

%\section{Supplementary Material}
%The online supplement to this article contains additional technical details of the factor SV model and further empirical results for both the univariate SV model and the factor SV model.
%It also shows that the \corrPMMHPG{} is ergodic. 

\bibliographystyle{apalike}
\bibliography{references_v1,pfmcmc}
%\end{document}

%\begin{lemma}\label{lemma: prelim1}
%\begin{enumerate}
%\item [(i)]
%\begin{align*}
%m_1^\theta(dx_1^i) & = \int_{ \{v_{x1}^i: \mathfrak{X}(v_{x1}^i; \theta,\cdot) \in dx_1^i\} } %\psi(dv_{x1}^i). 
%\end{align*}
%\item [(ii)]
%For $t \geq 2$,
%\begin{align*}
%m_t^\theta(dx_t^i|x_{t-1}^{a_{t-1}^i}) & = \int_{\big  \{v_{xt}^i: \mathfrak{X}\big (v_{xt}^i; \theta, x_{t-1}^{a_{t-1}^i}\big ) \in  dx_t^i\big \} } \psi(dv_{xt}^i)
%\end{align*}
%\item [(iii)] For $ t \geq 2 $,
%\begin{align*}
%\ov w_{t-1}^j & = \Pr(A_{t-1}^k=j |\ov w_{t-1}^{1:N}) = \int_{ \bigg \{v_{A,t-1}:\bigg ( \mathfrak{A}(v_{A,t-%1}; \ov w_{t-1}^{1:N} , x_{t-1}^{1:N})\bigg )^k =j\bigg \}}  \psi(d v_{A,t-1} )
%\end{align*}
%\end{enumerate}
%\begin{proof}
%The proof of parts (i) and (ii) is straightforward. The proof of part (iii) follows from
%Assumption~\ref{ass: resampling scheme}.
%\end{proof}
%\end{lemma}

%\bibliographystyle{apalike}
%\bibliography{references_v1,pfmcmc}

\pagebreak
\makeatletter
\def\renewtheorem#1{%
  \expandafter\let\csname#1\endcsname\relax
  \expandafter\let\csname c@#1\endcsname\relax
  \gdef\renewtheorem@envname{#1}
  \renewtheorem@secpar
}
\def\renewtheorem@secpar{\@ifnextchar[{\renewtheorem@numberedlike}{\renewtheorem@nonumberedlike}}
\def\renewtheorem@numberedlike[#1]#2{\newtheorem{\renewtheorem@envname}[#1]{#2}}
\def\renewtheorem@nonumberedlike#1{
\def\renewtheorem@caption{#1}
\edef\renewtheorem@nowithin{\noexpand\newtheorem{\renewtheorem@envname}{\renewtheorem@caption}}
\renewtheorem@thirdpar
}
\def\renewtheorem@thirdpar{\@ifnextchar[{\renewtheorem@within}{\renewtheorem@nowithin}}
\def\renewtheorem@within[#1]{\renewtheorem@nowithin[#1]}
\makeatother
\renewtheorem{proposition}{Proposition}
\renewcommand{\thesscheme}{S\arabic{sscheme}}
\renewcommand{\thealgorithm}{S\arabic{algorithm}}
\renewcommand{\theremark}{S\arabic{remark}}
\renewcommand{\theequation}{S\arabic{equation}}
\renewcommand{\thetheorem}{S\arabic{theorem}}
\renewcommand{\thesection}{S\arabic{section}}
\renewcommand{\thepage}{S\arabic{page}}
\renewcommand{\thetable}{S\arabic{table}}
\renewcommand{\thefigure}{S\arabic{figure}}
\renewcommand{\theassumption}{S\arabic{assumption}}
\renewcommand{\theproposition}{S\arabic{proposition}}
\renewcommand{\thelemma}{S\arabic{lemma}}
\setcounter{page}{1}
\setcounter{section}{0}
\setcounter{equation}{0}
\setcounter{algorithm}{0}
\setcounter{proposition}{0}
\setcounter{table}{0}
\setcounter{figure}{0}
\setcounter{assumption}{0}
\def\d{{\rm d}}
\section*{Online Supplementary material}
We use the following notation in the supplement. Eq.~(1), algorithm~1,
and sampling scheme~1, etc, refer to the main paper, while Eq.~(S1),
algorithm~S1, and sampling scheme~S1, etc, refer to the supplement.

\section{Proofs\label{S: proofs}}

Lemma~\ref{lemma: prelim1} is used to prove Theorem~\ref{lemma: target distn}.

\begin{lemma}\label{lemma: prelim1}
\begin{enumerate}
\item [(i)]
\begin{align*}
m_1^\theta(\d x_1^i) & = \int_{ \{v_{x1}^i: \mathfrak{X}(v_{x1}^i; \theta,\cdot) \in \d x_1^i\} } \psi(\d v_{x1}^i). 
\end{align*}
\item [(ii)]
For $t \geq 2$,
\begin{align*}
m_t^\theta(\d x_t^i|x_{t-1}^{a_{t-1}^i}) & = \int_{\big  \{v_{xt}^i: \mathfrak{X}\big (v_{xt}^i; \theta, x_{t-1}^{a_{t-1}^i}\big ) \in  \d x_t^i\big \} } \psi(\d v_{xt}^i).
\end{align*}
\item [(iii)] For $ t \geq 2 $,
\begin{align*}
\ov w_{t-1}^j & = \Pr(A_{t-1}^k=j |\ov w_{t-1}^{1:N}) = \int_{ \bigg \{v_{A,t-1}^{i}:\bigg ( \mathfrak{A}\(v_{A,t-1}^{i}; \ov w_{t-1}^{1:N} , x_{t-1}^{1:N} \)\bigg )^k =j\bigg \}}  \psi(\d v_{A,t-1}^{i} )
\end{align*}
\end{enumerate}
\begin{proof}
The proof of parts~(i) and (ii) is straightforward. 
The proof of part~(iii) follows from
assumption~\ref{ass: resampling scheme}.
\end{proof}
\end{lemma}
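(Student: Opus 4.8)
The plan is to recognize all three identities as statements about pushforward (image) measures under the deterministic generating maps introduced in Definitions~\ref{def: Xt and Vx} and~\ref{def: definition V}. The single underlying principle is this: if a random variable $Z$ is obtained as $Z=\Phi(V)$ for a measurable map $\Phi$ and $V$ has law $\nu$, then the law of $Z$ is the pushforward $\Phi_{\#}\nu$, so for every measurable set $B$ one has $\Pr(Z\in B)=\nu(\{v:\Phi(v)\in B\})$. Parts (i)--(iii) are each the ``infinitesimal'' version of this identity---a density against the dominating measure $dx$ in (i) and (ii), and against counting measure on indices in (iii).

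For parts (i) and (ii) I would argue as follows. By Definition~\ref{def: Xt and Vx}, the map $\mathfrak{X}(\cdot;\theta,x_{t-1}^{a_{t-1}^i})$ is, by construction, precisely the deterministic transform that turns a draw $V_{xt}^i\sim\psi_{xt}$ into a draw from the proposal $m_t^\theta(\cdot|x_{t-1}^{a_{t-1}^i})$ (for $t=1$, from $m_1^\theta$); this is exactly what it means to generate $X_t^i$ from the basic random numbers. Hence the law of $X_t^i$ is the pushforward of $\psi_{xt}$ under $\mathfrak{X}$, i.e.\ for every measurable $B$,
\[
m_t^\theta(B \mid x_{t-1}^{a_{t-1}^i}) = \psi_{xt}\big(\{v : \mathfrak{X}(v;\theta,x_{t-1}^{a_{t-1}^i}) \in B\}\big).
\]
Taking $B$ to be the infinitesimal cell $dx_t^i$ about $x_t^i$ yields exactly the stated identity; the $t=1$ case is identical with the conditioning argument suppressed.

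For part (iii) the same mechanism applies to the resampling map. By Definition~\ref{def: definition V}, $A_{t-1}^{1:N}=\mathfrak{A}(V_{A,t-1};\ov w_{t-1}^{1:N},x_{t-1}^{1:N})$ with $V_{A,t-1}\sim\psi_{A,t-1}$, and $\mathfrak{A}$ is the deterministic transform realizing the resampling scheme $\mathcal{M}(\cdot\mid\ov w_{t-1}^{1:N},x_{t-1}^{1:N})$. Therefore the event $\{A_{t-1}^k=j\}$ pulls back to $\{v_{A,t-1}:(\mathfrak{A}(v_{A,t-1};\ov w_{t-1}^{1:N},x_{t-1}^{1:N}))^k=j\}$, and
\[
\int_{\{v_{A,t-1} : (\mathfrak{A}(v_{A,t-1};\ov w_{t-1}^{1:N},x_{t-1}^{1:N}))^k = j\}} \psi_{A,t-1}(dv_{A,t-1}) = \Pr(A_{t-1}^k = j \mid \ov w_{t-1}^{1:N}).
\]
The claimed value $\ov w_{t-1}^j$ is then immediate from Assumption~\ref{ass: resampling scheme}, which asserts exactly $\Pr(A_{t-1}^k=j\mid\ov w_{t-1}^{1:N})=\ov w_{t-1}^j$.

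The only genuinely delicate point---and the reason the authors call this ``straightforward''---is bookkeeping rather than mathematics: one must check that each generating map is measurable so the pullback sets are well defined, and one must interpret the differential notation $dx_t^i$ and $dv_{xt}^i$ consistently (as densities against the dominating measures of Section~\ref{sub:State-Space-Models}) so that the set-level pushforward identity passes correctly to the density-level identity as written. Neither step requires any computation beyond the definitions and Assumption~\ref{ass: resampling scheme}.
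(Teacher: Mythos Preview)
Your proof is correct and takes essentially the same approach as the paper, which simply declares parts (i) and (ii) ``straightforward'' and part (iii) ``straightforward given Assumption~\ref{ass: resampling scheme}.'' You have merely spelled out the underlying pushforward-measure argument that the paper leaves implicit.
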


\begin{proof} [Proof of Theorem~\ref{lemma: target distn}]
To  prove the theorem we carry out the marginalisation by building on the proof of Theorem~3 of \citet{olssonryden2011}.
Let $V_{x,1:T}^{\left(-j_{1:T}\right)}=\left\{ V_{x1}^{\left(-j_{1}\right)},...,V_{xT}^{\left(-j_{T}\right)}\right\} $.
The marginal of $\widetilde{\pi}^{N}\left(dx_{1:T},j_{1:T},d\theta\right)$
is obtained by integrating it over $\left(v_{A,1:T-1},v_{x,1:T}^{\left(-j_{1:T}\right)}\right)$.
We start by integrating over $v_{xT}^{\left(-J_{T}\right)}$ to get
\begin{align*}
\widetilde{\pi}^{N}\left(\d v_{x,1:T-1}^{1:N},\d v_{x,T}^{J_{T}},dv_{A,1:T-1}^{1:N},j_{1:T},\d\theta\right)\\
=\frac{p\left(\d x_{1:T}^{j_{1:T}},\d\theta|y_{1:T}\right)}{N^{T}}\times & \frac{\psi\left(\d v_{x,1:T-1}^{1:N},\d v_{x,T}^{j_{T}},\d v_{A,1:T-1}^{1:N}\right)}{m_{1}^{\theta}\left(\d x_{1}^{j_{1}}\right)
\prod_{t=2}^{T}\overline{w}_{t-1}^{a_{t-1}^{j_{t}}}m_{t}^{\theta}\left(\d x_{t}^{j_{t}}|x_{t-1}^{a_{t-1}^{j_{t}}}\right)}\\
 & \prod_{t=2}^{T}\frac{w_{t-1}^{a_{t-1}^{j_{t}}}f_{t}^{\theta}\left(x_{t}^{j_{t}}|x_{t-1}^{a_{t-1}^{j_{t}}}\right)}
 {\sum_{l=1}^{N}w_{t-1}^{l}f_{t}^{\theta}\left(x_{t}^{j_{t}}|x_{t-1}^{l}\right)}.
\end{align*}
Now, integrate over $\left\{ v_{xT}^{j_{T}}:\mathfrak{X}\(v^{j_T}_{xT}; \theta, x_{T-1}^{a_{T-1}^{j_T}}
\) \in \d x_{T}^{j_{T}}\right\} $ using part~(ii) of 
lemma~\ref{lemma: prelim1}
to obtain,
\begin{align*}
\widetilde{\pi}^{N}\left(\d v_{x,1:T-1}^{1:N},dx_{T}^{j_{T}},\d v_{A,1:T-1}^{1:N},j_{1:T},\d\theta\right)\\
=\frac{p\left(\d x_{1:T}^{j_{1:T}},\d\theta|y_{1:T}\right)}{N^{T}}\times & \frac{\psi\left(\d v_{x,1:T-1}^{1:N},\d v_{A,1:T-1}^{1:N}\right)}{m_{1}^{\theta}\left(\d x_{1}^{j_{1}}\right)
\prod_{t=2}^{T-1}\overline{w}_{t-1}^{a_{t-1}^{j_{t}}}m_{t}^{\theta}\left(\d x_{t}^{j_{t}}|x_{t-1}^{a_{t-1}^{j_{t}}}\right)\overline{w}_{T-1}^{a_{T-1}^{j_{T}}}}\\
 & \prod_{t=2}^{T}\frac{w_{t-1}^{a_{t-1}^{j_{t}}}f_{t}^{\theta}\left(x_{t}^{j_{t}}|x_{t-1}^{a_{t-1}^{j_{t}}}\right)}{\sum_{l=1}^{N}w_{t-1}^{l}
 f_{t}^{\theta}\left(x_{t}^{j_{t}}|x_{t-1}^{l}\right)}.
\end{align*}
Then, integrate over $\left\{ v_{A,T-1}^{i}: \( \mathfrak{A}(v_{A,T-1}^{i}; \ov w_{T-1}^{1:N},x_{T-1}^{1:N})\)^{j_T}=a_{T-1}^{j_{T}}\right\} $ using part~(iii) of lemma~\ref{lemma: prelim1} and then sum over $a_{T-1}^{j_T}$
to obtain
\begin{align*}
\widetilde{\pi}^{N}\left(\d v_{x,1:T-1}^{1:N},\d x_{T}^{j_{T}},dv_{A,1:T-2}^{1:N},j_{1:T},\d\theta\right)\\
=\frac{p\left(\d x_{1:T}^{j_{1:T}},\d\theta|y_{1:T}\right)}{N^{T}}\times & \frac{\psi\left(\d v_{x,1:T-1}^{1:N},\d v_{A,1:T-2}^{1:N}\right)}{m_{1}^{\theta}\left(\d x_{1}^{j_{1}}\right)
\prod_{t=2}^{T-1}\overline{w}_{t-1}^{a_{t-1}^{j_{t}}}m_{t}^{\theta}\left(\d x_{t}^{j_{t}}|x_{t-1}^{a_{t-1}^{j_{t}}}\right)}\\
 & \prod_{t=2}^{T-1}\frac{w_{t-1}^{a_{t-1}^{j_{t}}}f_{t}^{\theta}
 \left(x_{t}^{j_{t}}|x_{t-1}^{a_{t-1}^{j_{t}}}\right)}{\sum_{l=1}^{N}w_{t-1}^{l}f_{t}^{\theta}\left(x_{t}^{j_{t}}|x_{t-1}^{l}\right)}.
\end{align*}
We repeat this  for $t=T-2,...,2$, to obtain, 
\begin{align*}
\widetilde{\pi}^{N}\left(\d v_{x1}^{1:N},dx_{2:T}^{j_{2:T}},j_{1:T},\d\theta\right)\\
=\frac{p\left(\d x_{1:T}^{j_{1:T}},\d\theta|y_{1:T}\right)}{N^{T}}\times & \frac{\psi\left(\d v_{x1}^{1:N}\right)}{m_{1}^{\theta}\left(\d x_{1}^{j_{1}}\right)
}.
\end{align*}
Finally, integrate over $v_{x1}^{(-j_1)} $ and then integrate over
$\left\{ v_{x1}^{j_{1}}:\mathfrak{X}\(v^{j_1}_{x1}; \theta, x_{T-1}^{a_{T-1}^{j_T}} \) \in \d x_{1}^{j_{1}}\right\} $
using part~(i) of lemma~\ref{lemma: prelim1} to obtain the result. 
\end{proof}

\begin{proof}[Proof of theorem~\ref{lemma: alt expression target distn}]
We have that,
\begin{align*}
& \frac{ p(\d x_{1:T}^{j_{1:T}},\d\theta|y_{1:T})}{m_1^\theta(\d x_1^{j_1})\prod_{t=2}^T {\ov w}_{t-1} ^{a_{t-1}^{j_t} } m_t^\theta(\d x_t^{j_t}|x_{t-1}^{a_{t-1}^{j_t}})} \times \frac{p(y_{1:T})}{p(\d\theta)} \times
\prod_{t=2}^{T}\frac{w_{t-1}^{a_{t-1}^{j_{t}}}f_{t}^{\theta}\left(x_{t}^{j_{t}}|x_{t-1}^{a_{t-1}^{j_{t}}}\right)}
{\sum_{l=1}^{N}w_{t-1}^{l}f_{t}^{\theta}\left(x_{t}^{j_{t}}|x_{t-1}^{l}\right)}\\
& =\frac{g_1^\theta(y_1|x_1^{j_1}) f_1^\theta(\d x_1^{j_1})\prod_{t=2}^T g_t^\theta(y_t|x_t^{j_t})
f_t^\theta(\d x_t^{j_t}|x_{t-1}^{j_{t-1}}) }
{m_1^\theta(\d x_1^{j_1})\prod_{t=2}^T {\ov w}_{t-1} ^{a_{t-1}^{j_t} } m_t^\theta(\d x_t^{j_t}|x_{t-1}^{a_{t-1}^{j_t}})}
\times
\prod_{t=2}^{T}\frac{w_{t-1}^{a_{t-1}^{j_{t}}}f_{t}^{\theta}\left(x_{t}^{j_{t}}|x_{t-1}^{a_{t-1}^{j_{t}}}\right)}
{\sum_{l=1}^{N}w_{t-1}^{l}f_{t}^{\theta}\left(x_{t}^{j_{t}}|x_{t-1}^{l}\right)}\\
& = w_1^{j_1}\prod_{t=1}^{T-1}\(\sum_{i=1}^N w_t^i \)
\prod_{t=2}^T \frac{ g_t^\theta(y_t|x_t^{j_t})
f_t^\theta(\d x_t^{j_t}|x_{t-1}^{a^{j_t}_{t-1}}) }{ m_t^\theta(
\d x_t^{j_t}|x_{t-1}^{a_{t-1}^{j_t}})}
\prod_{t=2}^{T}\frac{f_{t}^{\theta}\left(x_{t}^{j_{t}}|x_{t-1}^{j_{t-1}}\right)}
{\sum_{l=1}^{N}w_{t-1}^{l}f_{t}^{\theta}\left(x_{t}^{j_{t}}|x_{t-1}^{l}\right)}\\
& = \prod_{t=1}^{T}\(\sum_{i=1}^N w_t^i \)\prod_{t=2}^{T}\frac{w_{t-1}^{j_{t-1}}f_{t}^{\theta}\left(x_{t}^{j_{t}}|x_{t-1}^{j_{t-1}}\right)}
{\sum_{l=1}^{N}w_{t-1}^{l}f_{t}^{\theta}\left(x_{t}^{j_{t}}|x_{t-1}^{l}\right)}
{\ov w}_T^{j_T}.
\end{align*}
\end{proof}

\begin{proof}[Proof of corollary~\ref{corr: target with j integrated out}]
The proof follows from theorem~\ref{lemma: alt expression target distn} by summing
 the terms in the target distribution \eqref{eq:Targetdistribution2}
that include $j_{1:T}$, i.e.,
 \begin{align*}
\overline{w}_{T}^{j_{T}}\prod_{t=2}^{T}\frac{w_{t-1}^{j_{t-1}}f_{t}^{\theta}
\left(x_{t}^{j_{t}}|x_{t-1}^{j_{t-1}}\right)}{\sum_{l=1}^{N}w_{t-1}^{l}f_{t}^{\theta}\left(x_{t}^{j_{t}}|x_{t-1}^{l}\right)}.
\end{align*}
\end{proof}
%\end{document}
\begin{proof}[Proof of corollary~\ref{cor: j cond on rest}]
The proof follows from theorem~\ref{lemma: alt expression target distn} and corollary~\ref{corr: target with j integrated out}. 
\end{proof}

\section{Assumptions \label{assumptionSMC}}
This section outlines the assumptions required for the \PMCMC{} algorithms.
For $t\geq1$, we define $\pi_t(x_{1:t}|\theta) := p(x_{1:t}|y_{1:t},\theta)$, $\mathcal{S}_{t}^{\theta} :=\left\{x_{1:t}\in\boldsymbol{\chi}^t: \pi_t(x_{1:t}|\theta)  >0\right\}$ and \newline
$\mathcal{Q}_{t}^{ \theta} :=\left\{ x_{1:t}\in\boldsymbol{\chi}^t: \pi_{t-1}(x_{1:t-1}|\theta) m_{t}^{ \theta }\left( x_{t}| x_{1:t-1}\right)>0\right\}$.
We follow \cite{andrieuetal2010} and assume that
\begin{assumption}
 $\mathcal{S}_{t}^{ \theta} \subseteq \mathcal{Q}_{t}^{\theta }$
for any $ \theta\in \Theta$ and $t=1,...,T$.
\label{assu:propstatespace}
\end{assumption}

Assumption \ref{assu:propstatespace} ensures that the proposal densities  $m_{t}^{\theta }\left( x_{t}| x_{1:t-1}\right)$
can be used to approximate $\pi_t\left(x_{t}| x_{1:t-1}, \theta \right)$
for $t\geq1$. If $m_t^\theta(\cdot)$ is a mixture of  some general proposal density $\wt m_{t}^{\theta }\left( x_{t}| x_{1:t-1}\right)$ and 
$f_t^\theta(\cdot)$, with $f_t^\theta(\cdot)$ having nonzero weight, 
and $g_t^\theta(y_t|x_t)> 0 $ for all $\theta$, then 
assumption~\ref{assu:propstatespace} is satisfied.
In particular, if we use the bootstrap filter then
$m_t^\theta(\cdot) = f_t^\theta(\cdot)$.
Furthermore, $g_t^\theta(y_t|x_t)> 0 $ for all $\theta$ for the univariate stochastic volatility model
in section~\ref{SV with leverage}. We follow \cite{andrieuetal2010} and assume that
\begin{assumption}\label{ass: resampling scheme}
For any $k=1,...,N$ and $t=2,..,T$, the resampling
scheme $\mathcal{M}\left(a_{t-1}^{1:N}|\bar{w}_{t-1}^{1:N}, x_{t-1}^{1:N}\right)$
satisfies $\Pr\left(A_{t-1}^{k}=j|\bar{w}_{t-1}^{1:N}\right) = \bar{w}_{t-1}^{j}$.\label{assu:resampling}
\end{assumption}
Assumption \ref{assu:resampling} is used in section~\ref{sub:Target-Distributions univariate} to prove theorem~\ref{lemma: target distn} 
and is satisfied by all the popular resampling schemes, e.g.,
multinomial, systematic and  residual resampling. We refer to
\citet{doucetal2005} for a comparison between resampling schemes
and \citet{doucetetal2000,merveetal2001,Scharth:2016} for the choice
of proposal densities.

\section{Algorithms\label{S: algorithms}}

\subsection{The SMC algorithms\label{SS: SMC algorithms}}

This section describes how to implement the SMC algorithm described in sections~\ref{SS: smc} and \ref{sub:Flexible-Correlated-PMMH+PG sampling scheme}.
The SMC algorithm takes the number of particles $N$, the parameters $\theta$, the random variables used to propagate state particles $V_{x,1:T}^{1:N}$,
and the random numbers used in the resampling steps $V_{A,1:T-1}^{1:N}$ as the inputs; it outputs the set of particles $x_{1:T}^{1:N}$, ancestor indices $a_{1:T-1}^{1:N}$, and weights $w_{1:T}^{1:N}$. At $t=1$,
we obtain the particles $x_{1}^{1:N}$ as a function of the basic random numbers $v_{x1}^{1:N}$ using \eqref{V_x example SV1}; we then compute the weights of all particles in step (2). 

Step (3a) sorts the particles from smallest to largest using the Euclidean sorting procedure of \citet{Choppala2016} to obtain the sorted particles and weights.
Algorithm \ref{alg:Multinomial-Resampling-Algorithm} resamples the particles using multinomial sampling 
to obtain the ancestor index $A_{1:T-1}^{1:N}$ in the original order of the particles in steps~(3b) and (3c).
Steps~(3a) - (3c) define the mapping $A_{t-1}^{1:N}=\mathfrak{ A}\(v_{A,t-1}^{1:N} ; \overline{w}_{t-1}^{1:N},x_{t-1}^{1:N}\)$.
Steps~(3d) generates the particles $x_{t}^{1:N}$ as a function of the basic random numbers $v_{xt}^{1:N}$ using \eqref{V_x example SVt} for the univariate stochastic volatility with leverage
and then computes the weights of all particles in step~(3e).

\begin{algorithm}[H]
\caption{The sequential Monte Carlo algorithm \label{alg:The-Sequential-Monte carlo algorithm}}
Inputs: $N, \theta$, $V_{x,1:T}^{1:N}$ and $V_{A,1:T-1}^{1:N}$.\\  
Outputs: $x_{1:T}^{1:N}, a_{1:T-1}^{1:N},w_{1:T}^{1:N}$.
\begin{enumerate}
\item
For $t=1$, set $X_{1}^{i}=x_{1}^i = \mathfrak{X}\left(v_{x1}^{i}; \theta, \cdot \right)$ for $i=1, \dots, N$.
\item  Compute the importance weights \begin{align*}
    w_{1}^{i}=\frac{f_{1}^{\theta}\left(x_{1}^{i}\right)g_{1}^{\theta}\left(y_{1}|x_{1}^{i}\right)}{m_{1}^{\theta}\left(x_{1}^{i}\right)},
\quad \text{for} \quad i=1,...,N.
\end{align*}
and  normalize  $\overline{w}_{1}^{i}=w_{1}^{i}/\sum_{j=1}^{N}w_{1}^{j}$
for $i=1,...N$.

\item

For $t = 2, \dots, T$,
\begin{enumerate}
\item Sort the particles $x_{t-1}^{i}$ using the Euclidean sorting of \citet{Choppala2016} and obtain the sorted index $\zeta_{i}$ for $i=1,...,N$ and the sorted particles and weights $\wt x_{t-1}^i = x_{t-1}^{\zeta_i} $ and $\wt {\ov w}^i_{t-1} = \ov w_{t-1}^{\zeta_i}$, for $i=1, \dots, N$.

\item Obtain the ancestor index based on the sorted particles $\wt A_{t-1}^{1:N} = \wt a_{t-1}^{1:N}$ using a resampling scheme
${\cal{M}}(\wt a_{t-1}^{1:N} |\wt x_{t-1}^{1:N}, {\wt {\ov w}}^i_{t-1}),$  e.g. the multinomial resampling in Algorithm~\ref{alg:Multinomial-Resampling-Algorithm}.

\item Obtain the ancestor index based on the original order of the particles $A_{t-1}^{i}$ for $i=1,...,N$.

\item  Generate $V_{xt}^i \sim \psi_{xt}(\cdot)$ and
set $ X_{t}^{i}= x_{t}^i = \mathfrak{X} \left(v_{xt}^{i}; \theta, x_{t-1}^{a_{t-1}^i} \right)$
for $i=1,...,N$.

\item   Compute the importance weights
\begin{align*}
w_{t}^{i}=\frac{f_{t}^{\theta}\left(x_{t}^{i}|x_{t-1}^{a_{t-1}^i},y_{t-1}\right)g_{t}^{\theta}\left(y_{t}|x_{t}^{i}\right)}{m_{t}^{\theta}\left(x_{t}^{i}|x_{t-1}^{a_{t-1}^{i}}\right)},
\end{align*}
for $i=1,...,N$
and normalize to obtain $\overline{w}_{t}^{i}$
for $i=1,...N$.
\end{enumerate}
\end{enumerate}
\end{algorithm}

\begin{algorithm}[H]
\caption{Multinomial Resampling Algorithm \label{alg:Multinomial-Resampling-Algorithm}}

Input: $v_{At-1}^{1:N}$, $\widetilde{x}_{t-1}^{1:N}$, and $\widetilde{\overline{w}}_{t-1}^{1:N}$

Output: $\widetilde{A}_{t-1}^{1:N}$
\begin{enumerate}
\item Compute the cumulative weights based on the sorted particles $\left\{ \widetilde{x}_{t-1}^{1:N},\widetilde{\overline{w}}_{t-1}^{1:N}\right\} $
\[
\widehat{F}_{t-1}^{N}\left(j\right)=\sum_{i=1}^{j}\widetilde{\overline{w}}_{t-1}^{i}.
\]
\item Set $\widetilde{A}_{t-1}^{i}=\underset{j}{\min}\,\, \widehat{F}_{t-1}^{N}\left(j\right)\geq v_{At-1}^{i}$
for $i=1,...N$, and note that $\widetilde{A}_{t-1}^{i}$ for $i=1,...,N$
is the ancestor index based on the sorted particles.
\end{enumerate}
\end{algorithm}

\subsection{The backward simulation algorithm\label{SS: backward simulation1}}

\begin{algorithm}[H]
\caption{The Backward simulation algorithm \label{alg:The-backward simulation algorithm} }

\begin{enumerate}
\item Sample $J_{T}=j_{T}$ conditional on $\left(V_{x,1:T}^{1:N},V_{A,1:T-1}^{1:N},\theta\right)$,
with probability proportional to $w_{T}^{j_{T}}$, and choose $x_{T}^{j_{T}}$;
\item For $t=T-1,...,1$, sample $J_{t}=j_{t}$, conditional on \newline
$\left(V_{x,1:t}^{1:N},V_{A,1:t}^{1:N},j_{t+1:T},x_{t+1}^{j_{t+1}},...,x_{T}^{j_{T}}\right)$,
and  choose $J_t=l$ with probability proportional to $w_{t}^{l}f_{\theta}\left(x_{t+1}^{j_{t+1}}|x_{t}^{l}\right)$.
\end{enumerate}
\end{algorithm}

\subsection{Multidimensional Euclidean Sorting Algorithm \label{Multidimensional Sorting}}

This section discusses the multidimensional Euclidean sorting algorithm
used in Algorithms~\ref{alg:The-conditional Sequential-Monte carlo algorithm} and \ref{alg:The-Sequential-Monte carlo algorithm}. 
Let $x_{t}^{i}$ be the $n_{x}$- dimensional particle at a time $t$, $x_{t}^{i}=\left(x_{t,1}^{i},...,x_{t,n_{x}}^{i}\right)^{\top}$.
Let $d\left(x_{t}^{j},x_{t}^{i}\right)$ be the Euclidean distance
between two multidimensional particles. 
Algorithm~\ref{alg:Multidimensional-Euclidean-Sorting}
is the multidimensional Euclidean sorting algorithm that generates the 
sorted particles and weights with associated sorted indices. 
The first sorted index is the index of the particle 
having the smallest value
along its first dimension. The other particles are chosen in a way
that minimises the Euclidean distance between the last selected particle
and the set of all remaining particles. 

\begin{algorithm}[H]

\caption{Multidimensional Euclidean sorting algorithm \label{alg:Multidimensional-Euclidean-Sorting}}

Input: $x_{t}^{1:N}$, $\overline{w}_{t}^{1:N}$ 

Output: sorted particles $\widetilde{x}_{t}^{1:N}$, sorted weights
$\widetilde{\overline{w}}_{t}^{1:N}$, sorted indices $\zeta_{1:N}$

Let $\chi^{j}=\left\{ 1,...,N\right\} $ be the index set. 

When $j=1$,
\begin{itemize}
\item Obtain the index $\zeta_{1}=\min_{i}$$x_{t,1}^{i}$, for all $ i\in\chi^{j}$.
\end{itemize}
For $j=2,...,N$
\begin{itemize}
\item Set $x_{t}^{*}=x_{t}^{j-1}$.
\item Update the index $\chi^{j}$ by removing $\zeta_{j-1}$ from the index
set.
\item Obtain the index $\zeta_{j}=\min_{i}d\left(x_{t}^{*},x_{t}^{i}\right),$$\forall i\in\chi^{j}$.
\end{itemize}
Sort the particles and weights according to the indices $\zeta_{1:N}$ to obtain the sorted particles $\widetilde{x}_{t}^{1:N}$, and sorted weights $\widetilde{\overline{w}}_{t}^{1:N}$. 
\end{algorithm}

\subsection{Constrained conditional multinomial resampling algorithm\label{constrainedresampling}}

Algorithm \ref{alg:Multinomial-Resampling-Algorithm for CCSMC} describes the constrained multinomial resampling algorithm used in CCSMC.
It takes the sorted particles $\widetilde{x}_{t-1}^{1:N}$ and weights 
$\widetilde{\overline{w}}_{t-1}^{1:N}$ as the inputs and produces the basic
random numbers $V_{A,1:T-1}^{1:N}$
and the ancestor indices based on the sorted particles and weights $\widetilde{A}_{t-1}^{1:N}$.
The first step computes the cumulative weights based on the sorted particles; 
the second step generates the random numbers $V_{A,1:T-1}^{1:N}$ and ancestor 
indices $\widetilde{A}_{t-1}^{1:N}$. 

\begin{algorithm}[]
\caption{Constrained multinomial resampling algorithm for CCSMC\label{alg:Multinomial-Resampling-Algorithm for CCSMC}}

Input: $\widetilde{x}_{t-1}^{1:N}$, and $\widetilde{\overline{w}}_{t-1}^{1:N}$

Output: $V_{A,1:T-1}^{1:N}$ and $\widetilde{A}_{t-1}^{1:N}$.
\begin{enumerate}
\item Compute the cumulative weights based on the sorted particles $\left\{ \widetilde{x}_{t-1}^{1:N},\widetilde{\overline{w}}_{t-1}^{1:N}\right\} $,
\[
\widehat{F}_{t-1}^{N}\left(j\right)=\sum_{i=1}^{j}\widetilde{\overline{w}}_{t-1}^{i}.
\]
\item
Generate $N-1$ uniform $(0,1)$ random numbers $v_{At-1}^i\sim\psi_{A,t-1}\left(\cdot\right)$ 
for $i=1,...,N$, such that $i\neq j_{t}$, and set $\widetilde{A}_{t-1}^{i}=\underset{j}{\min}\,\, \widehat{F}_{t-1}^{N}\left(j\right)\geq v_{At-1}^{i}$.
For $i=j_{t}$,
\begin{align*}
v_{At-1}^{j_{t}}\sim U\left(\widehat{F}_{t-1}^{N}\left(j_{t}-1\right),\widehat{F}_{t-1}^{N}\left(j_{t}\right)\right),
\quad
\text{where} \\
\widehat{F}_{t-1}^{N}\left(j_{t}-1\right)=\sum_{i=1}^{j_{t}-1}\widetilde{\overline{w}}_{t-1}^{i}\quad
\text{and} \quad
\widehat{F}_{t-1}^{N}\left(j_{t}\right)=\sum_{i=1}^{j_{t}}\widetilde{\overline{w}}_{t-1}^{i}.
\end{align*}
\end{enumerate}
\end{algorithm}

\subsection{The conditional sequential Monte Carlo algorithm\label{CSMC_algorithm}}

\begin{algorithm}[H]
\caption{The conditional sequential Monte Carlo algorithm \label{alg:The-conditional Sequential-Monte carlo algorithm standard PG} }

Inputs: $N$, $\theta$, $x_{1:T}^{j_{1:T}}$, and $j_{1:T}$

Outputs: $x_{1:T}^{1:N}$, $a_{1:T-1}^{1:N}$, $w_{1:T}^{1:N}$.

Fix $X_{1:T}^{j_{1:T}}=x_{1:T}^{j_{1:T}}$, $A_{1:T-1}^{J}=j_{1:T-1}$,
and $J_{T}=j_{T}$.
\begin{enumerate}
\item For $t=1$

\begin{enumerate}
\item Sample $v_{x1}^{i}\sim\psi_{x1}\left(\cdot\right)$ and set $x_1^{i} = \mathfrak{X}\left(v_{x1}^{i}; \theta, \cdot\right)$
for $i=1,...,N\setminus\left\{ j_{1}\right\} $.
\item Compute the importance  weights $w_{1}^{i}=\nicefrac{f_{1}^{\theta}\left(x_{1}^{i}\right)g_{1}^{\theta}\left(y_{1}|x_{1}^{i}\right)}{m_{1}^{\theta}\left(x_{1}^{i}\right)}$,
for $i=1,...,N$,
and normalize $\overline{w}_{1}^{i}= w_{1}^{i}/\sum_{j=1}^{N}w_{1}^{j}$.
\end{enumerate}

\item For $t\geq2$

\begin{enumerate}
\item Use a conditional multinomial sampler in 
algorithm \ref{alg:Multinomial-Resampling-Algorithm for standard CSMC} and obtain the ancestor indices of the particles $A_{t-1}^{1:N}$.
\item Sample $v_{xt}^{i}\sim\psi_{xt}\left(\cdot\right)$ for $i=1,...,N\setminus\left\{ j_{t}\right\} $.
\item Set $x_{t}^{i}=\mathfrak{X}\left(v_{xt}^{i};\theta,x_{t-1}^{a_{t-1}^{i}}\right)$
for $i=1,...,N\setminus\left\{ j_{t}\right\} $.
\item Compute the importance weights,
\begin{align*}
    w_{t}^{i} & =\frac{f_{t}^{\theta}\left(x_{t}^{i}|x_{t-1}^{a_{t-1}^{i}},y_{t-1}\right)g_{t}^{\theta}
   \left(y_{t}|x_{t}^{i}\right)}{m_{t}^{\theta}\left(x_{t}^{i}|x_{t-1}^{a_{t-1}^{i}}\right)},
\quad \text{for} \quad i=1,...,N,
\end{align*}
and  normalize the $\overline{w}_{t}^{i}$.
\end{enumerate}
\end{enumerate}
\end{algorithm}

\begin{algorithm}[H]
\caption{Conditional multinomial resampling algorithm\label{alg:Multinomial-Resampling-Algorithm for standard CSMC}}

Input: ${x}_{t-1}^{1:N}$, and ${\overline{w}}_{t-1}^{1:N}$

Output: ${A}_{t-1}^{1:N}$.
\begin{enumerate}
\item Compute the cumulative weights based on the particles $\left\{{x}_{t-1}^{1:N},{\overline{w}}_{t-1}^{1:N}\right\} $,
\[
\widehat{F}_{t-1}^{N}\left(j\right)=\sum_{i=1}^{j}{\overline{w}}_{t-1}^{i}.
\]
\item
Generate $N-1$ uniform $(0,1)$ random numbers $v_{At-1}^i\sim\psi_{A,t-1}\left(\cdot\right)$ 
for $i=1,...,N$, such that $i\neq j_{t}$, and set ${A}_{t-1}^{i}=\underset{j}{\min}\,\, \widehat{F}_{t-1}^{N}\left(j\right)\geq v_{At-1}^{i}$.
For $i=j_{t}$, ${A}_{t-1}^{i}=j_{t-1}$

\end{enumerate}
\end{algorithm}

\section{The \corrPMMHPG{} with a general proposal in 
part~1\label{SS: The full correlated PMMH+PG}}

Algorithm~\ref{alg:Sampling-Scheme:-The full correlated PMMH+PG} is a version of the \corrPMMHPG{} with a general Metropolis-within-Gibbs proposal for the parameter and the basic uniform and standard normal variables;
part~1 includes the correlated pseudo-marginal proposal in \citet{Deligiannidis2018}, the block pseudo-marginal in \citet{Tran:2016}, the standard pseudo marginal of \citet{Andrieu:2009},
and the proposal in part~1 of Algorithm~\ref{alg:Sampling-Scheme:-The correlated PMMH+PG} as special cases.
It is unnecessary to move the basic uniform and standard normal variables in part~1 to obtain a valid sampling scheme
because they are generated in a Gibbs step in part~4 of 
Algorithms~\ref{alg:Sampling-Scheme:-The correlated PMMH+PG} 
and \ref{alg:Sampling-Scheme:-The full correlated PMMH+PG}.

\begin{algorithm}[H]
\caption{ The \corrPMMHPG{} with a general proposal in 
part~1.\label{alg:Sampling-Scheme:-The full correlated
PMMH+PG}}

Given initial values for $V_{x,1:T}^{1:N}$, $V_{A,1:T-1}^{1:N}$, $J_{1:T}$,
and $\theta$
\begin{enumerate}
\item [Part 1:] General Metropolis-within-Gibbs move. 
\begin{enumerate}
\item Sample $\left(\theta_{1}^{*}, v_{x,1:T}^{*,1:N},v_{A,1:T-1}^{*,1:N}\right)
\sim q_{1}\left(\cdot, \d v_{x,1:T}^{*,1:N},\d v_{A,1:T-1}^{*,1:N}|v_{x,1:T}^{1:N},v_{A,1:T-1}^{1:N},\theta_{2},\theta_{1}\right)$.
\item Run the sequential Monte Carlo algorithm and evaluate $\widehat{Z}\left(v^{*,1:N}_{x,1:T},v_{A,1:T-1}^{*,1:N},\theta_{1}^{*},\theta_{2}\right)$.
\item Accept the proposed values $\left(\theta_{1}^{*}, v_{x,1:T}^{*,1:N},v_{A,1:T-1}^{*,1:N}\right)$ with probability
\begin{align}\label{eq: MH ratio11}
\alpha\left(\theta_{1},v_{x,1:T}^{1:N},v_{A,1:T-1}^{1:N};\theta_{1}^{*},v_{x,1:T}^{*,1:N},v_{A,1:T-1}^{*,1:N}|\theta_{2}\right) =\notag \\
1\land\frac{\widehat{Z}\left(v_{x,1:T}^{*,1:N},v_{A,1:T-1}^{*,1:N},\theta_{1}^{*},\theta_{2}\right)p\left(\theta_{1}^{*}|\theta_{2}\right)\psi\left(\d v_{x,1:T}^{*,1:N},dv_{A,1:T-1}^{*,1:N} \right)}
{\widehat{Z}\left(v_{x,1:T}^{1:N},v_{A,1:T-1}^{1:N},\theta_{1},\theta_{2}\right)p\left(\theta_{1}|\theta_{2}\right) \psi\left(\d v_{x,1:T}^{1:N},dv_{A,1:T-1}^{1:N} \right)} \notag \\
\times\frac{q_{1}\left(\theta_{1},\d v_{x,1:T}^{1:N},\d v_{A,1:T-1}^{1:N}|v_{x,1:T}^{*,1:N},v_{A,1:T-1}^{*,1:N},\theta_{2},\theta_{1}^{*}\right)}
{q_{1}\left(\theta_{1}^{*},\d v_{x,1:T}^{*,1:N},\d v_{A,1:T-1}^{*,1:N}|v_{x,1:T}^{1:N},v_{A,1:T-1}^{1:N},\theta_{2},\theta_{1}\right)}.
\end{align}

\end{enumerate}
\item [Part 2:] Sample $J_{1:T}=j_{1:T}\sim\widetilde{\pi}^{N}\left(\cdot|v_{x,1:T}^{1:N},v_{A,1:T-1}^{1:N},\theta\right)$
given in \eqref{eq:conditional distribution j} using the
backward simulation algorithm (Algorithm~\ref{alg:The-backward simulation algorithm}).

\item [Part 3:] PG sampling. 
\begin{enumerate}
\item Sample $\theta_{2}^{*}\sim q_{2}\left(\cdot|x_{1:T}^{j_{1:T}},j_{1:T},\theta_{1},\theta_{2}\right)$.
\item Accept the proposed values $\theta_{2}^{*}$ with probability
\begin{align}
\alpha\left(\theta_{2};\theta_{2}^{*}|x_{1:T}^{j_{1:T}},j_{1:T},\theta_{1}\right)\nonumber \\
= & 1\land\frac{\widetilde{\pi}^{N}\left(\theta_{2}^{*}|x_{1:T}^{j_{1:T}},j_{1:T},\theta_{1}\right)}{\widetilde{\pi}^{N}\left(\theta_{2}|x_{1:T}^{j_{1:T}},j_{1:T},
\theta_{1}\right)}\frac{q_{2}\left(\theta_{2}|x_{1:T}^{j_{1:T}},j_{1:T},\theta_{1},\theta_{2}^{*}\right)}
{q_{2}\left(\theta_{2}^{*}|x_{1:T}^{j_{1:T}},j_{1:T},\theta_{1},\theta_{2}\right)}. \label{eq: MH ratio 21}
\end{align}
\end{enumerate}
\item [Part 4:] Sample $\left(V_{x,1:T}^{1:N},V_{A,1:T-1}^{1:N}\right)$ from $\widetilde{\pi}^{N}\left(\cdot|x_{1:T}^{j_{1:T}},j_{1:T},\theta\right)$
using constrained conditional sequential Monte Carlo (Algorithm \ref{alg:The-conditional Sequential-Monte carlo algorithm})
and evaluate $\widehat{Z}\left(v_{x,1:T}^{1:N},v_{A,1:T-1}^{1:N},\theta\right)$.
\end{enumerate}
\end{algorithm}

\section{Further empirical results for the univariate SV model with leverage\label{additionalunivariateSV}}
This section gives further empirical results for the univariate SV model with leverage.
Table~\ref{tab:Inefficiency-factor-of univariate SV leverage
example-T6000} shows
the IACT, TNV and RTNV estimates for the parameters in the univariate SV model with leverage estimated
using the five different samplers for the simulated data with $T=6000$ observations. The table shows that:
(1) The PGDA sampler requires around $N=5000$ particles to ensure that the Markov chains do not get stuck.
(2) The PHS sampler requires at least $N=5000$ particles to obtain similar IACT values for the parameters compared to the \corrPMMHPG{} and the CPMMH sampler with
$ N=100$ particles.
(3) In terms of $\textrm{TNV}_{\textrm{MAX}}$, the \corrPMMHPG{} with only $N=100$ particles
is 18.82 times better than PGBS sampler with $N=1000$ particles, is 15.70 times better than the PGDA sampler with $N=5000$ particles,
and is 670.57, 49.12, and 19.32 times better than the PHS sampler with $N=1000,2000,5000$ particles for estimating the parameters of the univariate SV model.
Similar conclusions hold for $\textrm{TNV}_{\textrm{MEAN}}$.
(4)~In terms of $\textrm{TNV}_{\textrm{MAX}}$, the \corrPMMHPG{} with $N=100$ particles is 39.37 times better than the PGDA sampler with $N=5000$ particles, and is 12.56 times better than the CPMMH sampler for estimating the latent volatilities of the univariate SV model. The PGBS sampler is 1.72 times better than \corrPMMHPG{} for estimating the latent volatilities of the univariate SV model.

%(5) From these two examples, it is clear that the \corrPMMHPG{} performs well for estimating the parameters and the latent volatilities for the univariate SV model with leverage. It confirms the usefulness of the \corrPMMHPG{} to estimate the univariate SV model with a large number of observations. \footnote{the last two sentences don't do very much for the paper. Perhaps we can even put the simulated example in the supplement.}

\begin{table}[H]
\caption{Univariate SV model with leverage for the five samplers. Sampler I:
$\textrm{CPHS}\left(\rho,\tau^{2};\mu,\phi\right)$, Sampler II: $\textrm{PGBS}\left(\mu,\tau^{2},\phi,\rho\right)$,
Sampler III: $\textrm{PGDA}\left(\mu,\tau^{2},\phi,\rho\right)$,
Sampler IV: $\textrm{ CPMMH}\left(\mu,\tau^{2},\phi,\rho\right)$,
and Sampler V: $\textrm{PHS}\left(\tau^{2},\rho;\mu,\phi\right)$
for simulated data with $T=6000$. The symbol ``NA'' means the Markov
chain gets stuck and does not converge. Time is the time in seconds for one iteration of the algorithm. The IACT, TNV, and RTNV are defined in section \ref{SS: preliminaries}. \label{tab:Inefficiency-factor-of univariate SV leverage example-T6000}}

\centering{}%
\begin{tabular}{c|c|c|c|c|c|c|c|c|c|}
\hline 
Param & \multicolumn{1}{c|}{I} & \multicolumn{1}{c|}{II} & \multicolumn{3}{c|}{III} & IV & \multicolumn{3}{c|}{V}\tabularnewline
\hline 
$N$ & 100 & 1000 & 1000 & 2000 & 5000 & 100 & 1000 & 2000 & 5000\tabularnewline
\hline 
$\widehat{\textrm{IACT}}(\phi)$ & 4.68 & 25.02 & NA & NA & 36.52 & 23.61 & 94.06 & 26.44 & 5.69\tabularnewline
$\widehat{\textrm{IACT}}(\mu)$ & 1.41 & 3.97 & NA & NA & 34.09 & 17.80 & 1.11 & 4.06 & 1.99\tabularnewline
$\widehat{\textrm{IACT}}(\tau^{2})$ & 11.80 & 156.59 & NA & NA & 28.73 & 18.61 & 1931.35 & 79.89 & 14.94\tabularnewline
$\widehat{\textrm{IACT}}(\rho)$ & 14.24 & 221.50 & NA & NA & 33.49 & 20.67 & 4441.36 & 163.42 & 35.92\tabularnewline
\hline 
$\widehat{\textrm{IACT}}_{\textrm{MAX}}(\theta)$ & 14.24 & 221.50 & NA & NA & 36.52 & 23.61 & 4441.36 & 163.42 & 35.92\tabularnewline
$\widehat{\textrm{TNV}}_{\textrm{MAX}}(\theta)$ & 14.24 & 268.01 & NA & NA & 223.50 & 12.28 & 9548.92 & 699.44 & 275.15\tabularnewline
$\widehat{\textrm{RTNV}}_{\textrm{MAX}}(\theta)$ & 1 & 18.82 & NA & NA & 15.70 & 0.86 & 670.57 & 49.12 & 19.32\tabularnewline
$\widehat{\textrm{IACT}}_{\textrm{MEAN}}(\theta)$ & 8.03 & 101.77 & NA & NA & 33.21 & 20.17 & 1616.97 & 68.45 & 14.64\tabularnewline
$\textrm{TNV}_{\textrm{MEAN}}(\theta)$ & 8.03 & 123.14 & NA & NA & 202.69 & 10.49 & 3476.49 & 292.97 & 112.14\tabularnewline
$\textrm{RTNV}_{\textrm{MEAN}}(\theta)$ & 1 & 15.33 & NA & NA & 25.24 & 1.31 & 432.94 & 36.48 & 13.97\tabularnewline
\hline 
$\widehat{\textrm{IACT}}_{\textrm{MAX}}$$\left(x_{1:T}\right)$ & 3.26 & 1.56 & NA & NA & 20.97 & 78.73 & 1.31 & 2.26 & 1.25\tabularnewline
$\textrm{TNV}_{\textrm{MAX}}$$\left(x_{1:T}\right)$ & 3.26 & 1.89 & NA & NA & 128.34 & 40.94 & 2.82 & 9.67 & 9.57\tabularnewline
$\textrm{RTNV}_{\textrm{MAX}}$$\left(x_{1:T}\right)$ & 1 & 0.58 & NA & NA & 39.37 & 12.56 & 0.87 & 2.97 & 2.94\tabularnewline
$\widehat{\textrm{IACT}}_{\textrm{MEAN}}$$\left(x_{1:T}\right)$ & 1.19 & 1.02 & NA & NA & 9.82 & 10.76 & 1.01 & 1.02 & 1.01\tabularnewline
$\textrm{TNV}_{\textrm{MEAN}}$$\left(x_{1:T}\right)$ & 1.19 & 1.23 & NA & NA & 60.10 & 5.60 & 2.17 & 4.37 & 7.74\tabularnewline
$\textrm{RTNV}_{\textrm{MEAN}}$$\left(x_{1:T}\right)$ & 1 & 1.03 & NA & NA & 50.50 & 4.71 & 1.82 & 3.67 & 6.50\tabularnewline
\hline 
Time & 1.00 & 1.21 & NA & NA & 6.12 & 0.52 & 2.15 & 4.28 & 7.66\tabularnewline
\hline 
\end{tabular}
\end{table}

Figure~\ref{fig:Left:-The-plots figure RTNV mean and RTNV max} shows the $\textrm{RTNV}_{\textrm{MAX}}$ and $\textrm{RTNV}_{\textrm{MEAN}}$ of the parameters and the latent volatilities estimated using
\corrPMMHPG{} and the CPMMH with $N=100$ particles for the 
univariate SV model. The RTNV is the TNV of the CPMMH relative to the TNV of the CPHS. 

The data is generated from the univariate SV model with 
$T=2000,4000,...,20000$ observations, $\phi=0.98$, $\tau^2=0.5$, 
$\rho=-0.45$, and $\mu=-0.42$. 
The figure also shows the variance of log of estimated likelihood 
evaluated at the true values of the parameters,  
with the variance increasing with the length of the time series. The figure suggests that the \corrPMMHPG{} is much better than the CPMMH sampler for estimating both the parameters and the latent volatilities. 
This suggests that \corrPMMHPG{} performs better
than CPMMH at estimating an SV model with a large number of observations. 

%The relative time normalized variance of the sampler for $\psi$ (${\rm RTNV}(\psi)$) is the TNV of the CPMMH relative to the TNV for the \corrPMMHPG{}.

\begin{figure}[H]
\caption{Comparing the performance of CPHS and CPMMH  for estimating a 
univariate stochastic volatility with leverage for different number of 
observations $T$. Left: The plots of $\textrm{RTNV}_{\textrm{MAX}}$ and $\textrm{RTNV}_{\textrm{MEAN}}$
for the parameters $\theta$; Middle: The plots of $\textrm{RTNV}_{\textrm{MAX}}$
and $\textrm{RTNV}_{\textrm{MEAN}}$ for the latent volatilities $x_{1:T}$;
Right: The plot of the variance of log of estimated likelihood $\widehat{Z}$
evaluated at the true values of the parameters with $N=100$ particles. The RTNV is the TNV of the CPMMH relative to the TNV for the \corrPMMHPG{}. The TNV and RTNV are defined in section \ref{SS: preliminaries}.
\label{fig:Left:-The-plots figure RTNV mean and RTNV max}}

\begin{centering}
\includegraphics[width=15cm,height=8cm]{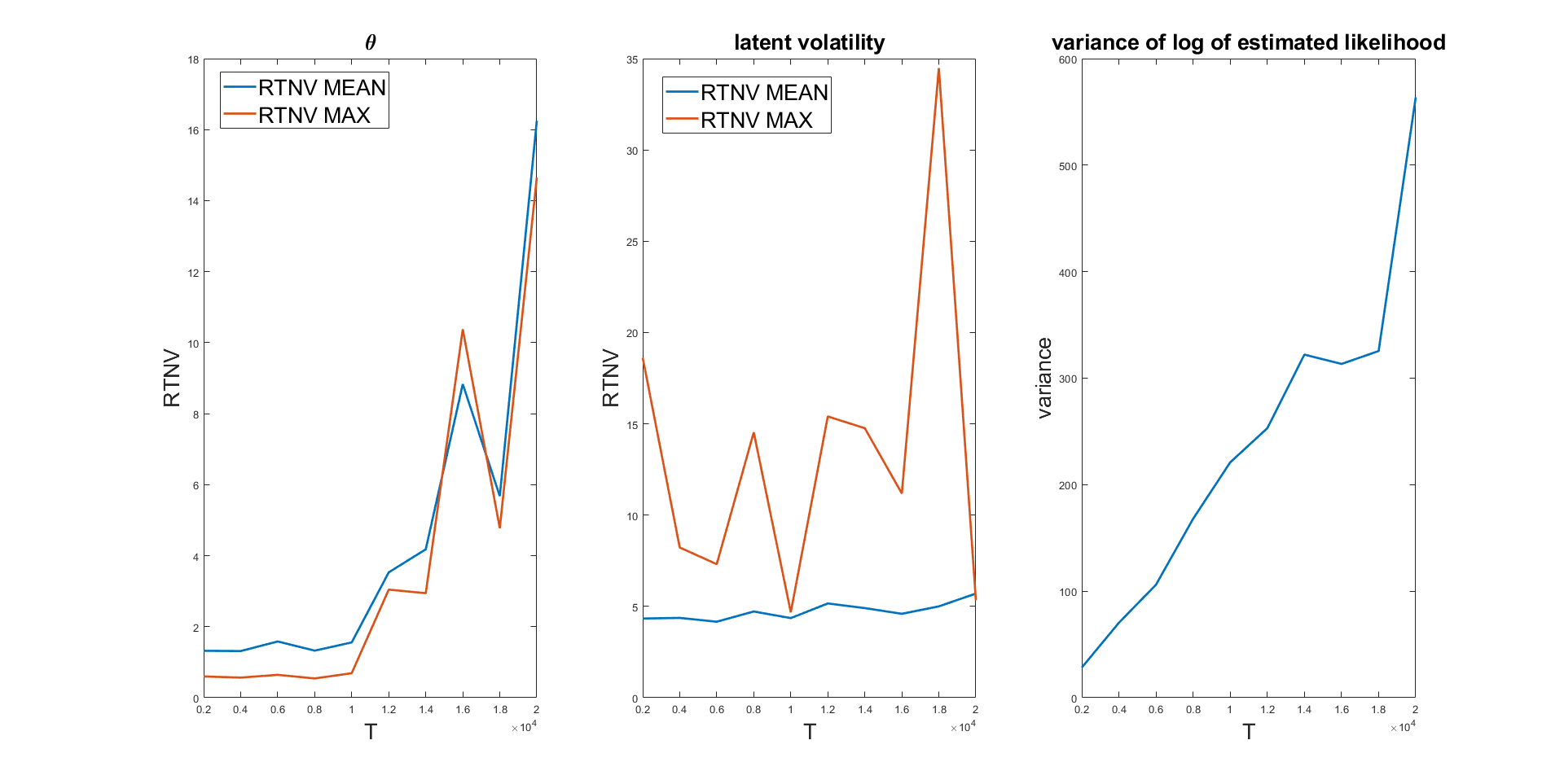}
\par\end{centering}
\end{figure}

\section{Target distribution for the factor SV model\label{subsec:Target-Distributions-for factor SV}}

This section defines the target density for the factor SV
model in section~\ref{S: factor SV model}; 
it includes all the random variables produced by
$K+S$ univariate SMC methods that generate the factor log-volatilities
${\lambda}_{k,1:T}$ for $k=1,...,K$ and the idiosyncratic
log-volatilities ${h}_{s,1:T}$ for $s=1,...,S$, as well
as the latent factors ${f}_{1:T}$, the parameters of the individual idiosyncratic error SV's ${\theta}_{\epsilon,1:S}$,
the parameters of the factor SV's ${\theta}_{f,1:K}$,
and the factor loading matrix ${\beta}$.
We define 
${\theta} :=\left({f}_{1:T},{\theta}_{\epsilon,1:S},{\theta}_{f,1:K},
{\beta}\right)$.
We use equations  \eqref{eq: lambda eqn for factor sv}
to specify the univariate particle filters that generate the 
factor log-volatilities ${\lambda}_{k,1:T}$ for $k=1,...,K$ without leverage and
\eqref{eq: h eqn for factor sv} to specify the particle filters for the
idiosyncratic SV log-volatilities with leverage ${h}_{s,1:T}$
for $s=1,...,S$.
We denote the $N$ weighted samples at time $t$ for the factor log volatilities
by $\left({\lambda}_{kt}^{1:N},\overline{w}_{fkt}^{1:N}\right)$
and $\left({h}_{st}^{1:N},\overline{w}_{\epsilon st}^{1:N}\right)$ for the idiosyncratic errors log volatilities.
The corresponding proposal densities are $m_{fk1}^{{\theta_{fk}}}\left(\lambda_{k1}\right)$,
$m_{fkt}^{{\theta_{fk}}}\left(\lambda_{kt}|\lambda_{kt-1}\right)$,
$m^{\theta_{\epsilon s}}_{\epsilon s1}\left(h_{s1}\right)$, and $m^{\theta_{\epsilon s}}_{\epsilon st}\left(h_{st}|h_{st-1}\right)$
for $t=2,...,T$ and are the same as in the univariate case.

We define $V_{\epsilon st}^{i}$ as a random
variable used to generate $h_{st}^{i}$ from the proposal $m_{\epsilon st}\left(h_{s1}\right)$ and $m_{\epsilon st}\left(h_{st}|h_{st-1}\right)$ for $ t \geq 2$,
and write $h_{st}^{i}=\mathfrak{X}\left(v_{\epsilon st}^{i}; \theta_{\epsilon s}, h_{s,t-1}\right)$
for $s=1,...,S$. The distribution of $v_{\epsilon st}^{i}$ is denoted
as $\psi_{\epsilon st}\left(\cdot\right)$ and is the standard normal
distribution $N\left(0,1\right)$. The random variable $V_{fkt}^{i}$
and its distribution $\psi_{fkt}\left(\cdot\right)$ are defined similarly.

We define $V_{A\epsilon st-1}^{i}$ for $s=1,...,S$, as the vector
of random variable used to generate the ancestor indices $A_{\epsilon st-1}^{i}$
for $i=1,...,N$, such that $A_{\epsilon st-1}^{1:N}$ is generated
using $\M_{\epsilon}\left(a_{\epsilon st-1}^{1:N}|\overline{w}_{\epsilon st-1}^{1:N},h_{\epsilon st-1}^{1:N}\right)$,
where each ancestor index $a_{\epsilon st-1}^{i}=j$ indexes a particle
in $\left({h}_{st-1}^{1:N},\overline{w}_{\epsilon st-1}^{1:N}\right)$
and is sampled with probability $\overline{w}_{\epsilon st-1}^{j}$.
We write the mapping $A_{\epsilon st-1}^{1:N}=\mathfrak{A}
\left(V_{A\epsilon st-1}^{1:N}; \overline{w}_{\epsilon st-1}^{1:N}, h_{st-1}^{1:N}
\right)$
and denote the distribution $V_{A\epsilon st}^{i}$  as $\psi_{A\epsilon st}\left(\cdot\right)$
(which is standard $U\left(0,1\right)$). The random variable $V_{Afkt-1}^{i}$,
its distribution $\psi_{Afkt}\left(\cdot\right)$, the resampling
scheme $\M_{f}\left(a_{fkt-1}^{1:N}|\overline{w}_{fkt-1}^{1:N},\lambda_{fkt-1}^{1:N}\right)$
and the mapping $A_{ f kt-1}^{1:N}=\mathfrak{A}
\left(V_{Af kt-1}^{1:N}; \overline{w}_{f kt-1}^{1:N}, \lambda_{kt-1}^{1:N}
\right)$  are defined similarly for $k=1,...,K$.

The joint distribution of the random variables $\left(V_{\epsilon s,1:T}^{1:N},V_{A\epsilon s,1:T-1}^{1:N}\right)$
is
\begin{align}
\psi\left(dV_{\epsilon s,1:T}^{1:N},dV_{A\epsilon s,1:T-1}^{1:N}\right) & =\prod_{t=1}^{T}\prod_{i=1}^{N}\psi_{\epsilon st}\left(dV_{\epsilon st}^{i}\right)\prod_{t=1}^{T-1}\prod_{i=1}^{N}\psi_{A\epsilon st}\left(dV_{A\epsilon st}^{i}\right),
\end{align}
for $s=1,...,S$ and the joint distribution of the random variables
$\left(V_{fk,1:T}^{1:N},V_{Afk,1:T-1}^{1:N}\right)$ is
\begin{align}
\psi\left(dV_{fk,1:T}^{1:N},dV_{Afk,1:T-1}^{1:N}\right) & =\prod_{t=1}^{T}\prod_{i=1}^{N}\psi_{fkt}\left(dV_{fkt}^{i}\right)\prod_{t=1}^{T-1}\prod_{i=1}^{N}\psi_{Afkt}\left(dV_{Afkt}^{i}\right),
\end{align}
for $k=1,...,K$.

We next define the indices $J_{\epsilon s,1:T}$ for
$s=1,...,S$, the selected particle trajectories ${h}_{s,1:T}^{j_{\epsilon s,1:T}}=\left(h_{s1}^{j_{\epsilon s1}},...,h_{sT}^{j_{\epsilon sT}}\right)$,
the indices $J_{fk,1:T}$ for $k=1,...,K$ and the selected particle trajectories
${\lambda}_{k,1:T}^{j_{fk,1:T}}=\left(\lambda_{k1}^{j_{fk1}},...,\lambda_{kT}^{j_{fkT}}\right)$.

The augmented target density in this case consists of all of the particle
filter variables $\left(V_{\epsilon s,1:T}^{1:N},V_{A\epsilon s,1:T-1}^{1:N}\right)$
and $\left(V_{fk,1:T}^{1:N},V_{Afk,1:T-1}^{1:N}\right)$, the sampled trajectory
$\left({\lambda}_{k,1:T}^{j_{fk,1:T}},{h}_{s,1:T}^{j_{\epsilon s,1:T}}\right)$
and indices $\left(J_{fk,1:T},J_{\epsilon s,1:T}\right)$ for all
$s=1,...,S$ and for $k=1,..,K$ and is
\begin{multline}
\widetilde{\pi}^{N}\left(dV_{\epsilon,1:S,1:T}^{1:N},dV_{A\epsilon,1:S,1:T-1}^{1:N},J_{\epsilon,1:S,1:T},dV_{f,1:K,1:T}^{1:N},dV_{Af,1:K,1:T-1}^{1:N},J_{f,1:K,1:T},d\theta\right)\coloneqq\\
\frac{\pi\left(d{\lambda}_{1:K,1:T}^{{J}_{f,1:T}},d{h}_{1:S,1:T}^{{J}_{S,1:T}},d{\theta}\right)}{N^{T\left(S+K\right)}}\times\prod_{s=1}^{S}\frac{\psi\left(dV_{\epsilon s,1:T}^{1:N},dV_{A\epsilon s,1:T-1}^{1:N}\right)}{m_{\epsilon s1}^{\theta}\left(dh_{s1}^{j_{\epsilon s1}}\right)\prod_{t=2}^{T}\overline{w}_{\epsilon st-1}^{a_{\epsilon st-1}^{j_{\epsilon st}}}m_{\epsilon st}^{\theta}\left(dh_{st}^{j_{\epsilon st}}|h_{st-1}^{a_{\epsilon st-1}^{j_{\epsilon st}}}\right)}\\% need to change below
\times\prod_{t=2}^{T}\frac{w_{\epsilon st-1}^{a_{\epsilon st-1}^{j_{\epsilon st}}}f_{st}^{\theta}\left(h_{ st}^{j_{\epsilon s t}}|h_{ st-1}^{a_{\epsilon st-1}^{j_{\epsilon st}}}\right)}{\sum_{l=1}^{N}w_{\epsilon st-1}^{l}f_{st}^{\theta}\left(h_{st}^{j_{\epsilon s t}}|h_{st-1}^{l}\right)}\\ %%% next is last line
\times\prod_{k=1}^{K}\frac{\psi\left(dV_{fk,1:T}^{1:N},dV_{Afk,1:T-1}^{1:N}\right)}{m_{fk1}^{\theta}\left(d\lambda_{fk1}^{j_{fk1}}\right)\prod_{t=2}^{T}
\overline{w}_{fkt-1}^{a_{fkt-1}^{j_{fkt}}}
m_{ft}^{\theta}\left(d\lambda_{fkt}^{j_{fkt}}|\lambda_{fkt-1}^{a_{fkt-1}^{j_{fkt}}}\right)}\times\prod_{t=2}^{T}\frac{w_{fkt-1}^{a_{fkt-1}^{j_{fkt}}}f_{kt}^{\theta}\left(\lambda_{fkt}^{j_{fkt}}
|\lambda_{fkt-1}^{a_{fkt-1}^{j_{fkt}}}\right)}{\sum_{l=1}^{N}w_{fkt-1}^{l}f_{kt}^{\theta}\left(\lambda_{fkt}^{j_{fkt}}|\lambda_{fkt-1}^{l}\right)}.
\label{eq:Target distribution factor model}
\end{multline}

Similarly to the proof of Theorem~\ref{lemma: target distn}, we can show that the marginal distribution of $\widetilde{\pi}^{N}\left(dV_{\epsilon,1:S,1:T}^{1:N},dV_{A\epsilon,1:S,1:T-1}^{1:N},J_{\epsilon,1:S,1:T},dV_{f,1:K,1:T}^{1:N},dV_{Af,1:K,1:T-1}^{1:N},J_{f,1:K,1:T},d\theta\right)$
is
$
\( N^{T\left(S+K\right)}\)^{-1}\pi\left(d{\lambda}_{1:K,1:T}^{{J}_{f,1:T}},d{h}_{1:S,1:T}^{{J}_{S,1:T}},d{\theta}\right)
$

\section{The \corrPMMHPG{} for the factor SV model\label{subsec:Correlated-PMMH+PG-sampling}}

We illustrate our methods using the $\textrm{\corrPMMHPG}\left({\rho}_{\epsilon},{\tau}_{\epsilon}^{2},{\tau}_{f}^{2};{f}_{1:T},{\beta},{\mu}_{\epsilon},\phi_{\epsilon},\phi_{f}\right)$
for the factor SV with leverage, which 
performed well in the empirical studies 
in section~\ref{results multivariate examples}.
It is straightforward to modify the sampling scheme 
for other choices
of which parameters to sample with a PG step and which parameters
to sample with a MWG step.
Algorithm~\ref{alg:Sampling-Scheme:-The correlated PMMH+PG for 
factor SV model }
outlines the sampling scheme.

%\begin{algorithm}[h!]

\begin{algorithm}[h!]
\caption{The \corrPMMHPG{} for the factor SV model with leverage:
$\textrm{\corrPMMHPG}\left({\rho}_{\epsilon},{\tau}_{\epsilon}^{2},{\tau}_{f}^{2};{f}_{1:T},{\beta},{\mu}_{\epsilon},\phi_{\epsilon},\phi_{f}\right)$
\label{alg:Sampling-Scheme:-The correlated PMMH+PG for factor SV model }}

Given initial values for $V_{\epsilon,1:S,1:T}^{1:N}$, $V_{A\epsilon,1:S,1:T-1}^{1:N}$,$V_{f,1:K,1:T}^{1:N}$,
$V_{Af,1:K,1:T-1}^{1:N}$ , $J_{\epsilon1:T}$, $J_{f1:T}$, and $\theta$

\begin{enumerate}
\item[Part 1:] MWG sampling,
For $k=1,...,K$
\begin{enumerate}
\item Sample $\tau_{fk}^{2*}\sim q_{\tau_{fk}^{2}}\left(\cdot|V_{fk,1:T}^{1:N},V_{Afk,1:T-1}^{1:N},\theta_{-\tau_{fk}^{2}},\tau_{fk}^{2}\right)$.
\item Run the SMC algorithm and obtain $\widehat{Z}\left(V_{fk,1:T}^{1:N},V_{Afk,1:T-1}^{1:N},\tau_{fk}^{2*},\theta_{-\tau_{fk}^{2}}\right)$.
\item Accept the proposed values $\tau_{fk}^{2*}$ with probability
\begin{align*}
%\alpha\left(\tau_{fk}^{2},\tau_{fk}^{2*}|V_{fk,1:T}^{1:N},V_{Afk,1:T-1},\theta_{-\tau_{fk}^{2}}\right) & =\\
1\land\frac{\widehat{Z}\left(V_{fk,1:T}^{1:N},V_{Afk,1:T-1}^{1:N},\tau_{fk}^{2*},\theta_{-\tau_{fk}^{2}}\right)p\left(\tau_{fk}^{2*}\right)}{\widehat{Z}\left(V_{fk,1:T}^{1:N},V_{Afk,1:T-1}^{1:N},\tau_{fk}^{2},\theta_{-\tau_{fk}^{2}}\right)p\left(\tau_{fk}^{2}\right)} & \times\frac{q_{\tau_{fk}^{2}}\left(\tau_{fk}^{2}|V_{fk,1:T}^{1:N},V_{Afk,1:T-1}^{1:N},\theta_{-\tau_{fk}^{2}},\tau_{fk}^{2*}\right)}{q_{\tau_{fk}^{2}}\left(\tau_{fk}^{2*}|V_{fk,1:T}^{1:N},V_{Afk,1:T-1}^{1:N},\theta_{-\tau_{fk}^{2}},\tau_{fk}^{2}\right)}.
\end{align*}
\end{enumerate}
For $s=1,....,S$,
\begin{enumerate}
\item Sample $\left(\tau_{\epsilon s}^{2*},\rho_{\epsilon s}^{*}\right)\sim q_{\tau_{\epsilon s}^{2},\rho_{\epsilon s}}\left(\cdot|V_{\epsilon s,1:T}^{1:N},V_{A\epsilon s,1:T-1}^{1:N},\tau_{\epsilon s}^{2},\rho_{\epsilon s},\theta_{-\tau_{\epsilon s}^{2},\rho_{\epsilon s}}\right)$.
\item Run the SMC algorithm and obtain $\widehat{Z}\left(V_{\epsilon s,1:T}^{1:N},V_{A\epsilon s,1:T-1}^{1:N},\tau_{\epsilon s}^{2*},\rho_{\epsilon s}^{*},\theta_{-\tau_{\epsilon s}^{2},\rho_{\epsilon s}}\right)$.
\item Accept the proposed values $\left(\tau_{\epsilon s}^{2*},\rho_{\epsilon s}^{*}\right)$
with probability
\begin{align*}
%\alpha\left(\tau_{\epsilon s}^{2},\rho_{\epsilon s};\tau_{\epsilon s}^{2*},\rho_{\epsilon s}^{*}|V_{\epsilon s,1:T}^{1:N},V_{A\epsilon s,1:T-1},\theta_{-\tau_{\epsilon s}^{2},\rho_{\epsilon s}}\right) & =\\
1\land\frac{\widehat{Z}\left(V_{\epsilon s,1:T}^{1:N},V_{A\epsilon s,1:T-1}^{1:N},\tau_{\epsilon s}^{2*},\rho_{\epsilon s}^{*},\theta_{-\tau_{\epsilon s}^{2},\rho_{\epsilon s}}\right)p\left(\tau_{\epsilon s}^{2*},\rho_{\epsilon s}^{*}\right)}{\widehat{Z}\left(V_{\epsilon s,1:T}^{1:N},V_{A\epsilon s,1:T-1}^{1:N},\tau_{\epsilon s}^{2},\rho_{\epsilon s},\theta_{-\tau_{\epsilon s}^{2},\rho_{\epsilon s}}\right)p\left(\tau_{\epsilon s}^{2},\rho_{\epsilon s}\right)} & \\
\times\frac{q_{\tau_{\epsilon s}^{2},\rho_{\epsilon s}}\left(\tau_{\epsilon s}^{2},\rho_{\epsilon s}|V_{\epsilon s,1:T}^{1:N},V_{A\epsilon s,1:T-1}^{1:N},\theta_{-\tau_{\epsilon s}^{2},\rho_{\epsilon s}},\tau_{\epsilon s}^{2*},\rho_{\epsilon s}^{*}\right)}{q_{\tau_{\epsilon s}^{2},\rho_{\epsilon s}}\left(\tau_{\epsilon s}^{2*},\rho_{\epsilon s}^{*}|V_{\epsilon s,1:T}^{1:N},V_{A\epsilon s,1:T-1}^{1:N},\theta_{-\tau_{\epsilon s}^{2},\rho_{\epsilon s}},\tau_{\epsilon s}^{2},\rho_{\epsilon s}\right)}.
\end{align*}
\end{enumerate}

\item [Part 2:] Sample $J_{\epsilon,1:S,1:T}\sim\widetilde{\pi}^{N}\left(\cdot|V_{\epsilon,1:S,1:T}^{1:N},V_{A\epsilon,1:S,1:T-1}^{1:N},{\theta}\right)$
and sample $J_{f,1:K,1:T}\sim\widetilde{\pi}^{N}\left(\cdot|V_{f,1:K,1:T}^{1:N},V_{Af,1:K,1:T-1}^{1:N},{\theta}\right)$.

\item[Part 3:] PG sampling.
\begin{enumerate}
\item Sample ${\beta}|{\lambda}_{1:T}^{{J}_{f,1:T}},{J}_{f,1:T},{h}_{1:T}^{{J}_{\epsilon,1:T}},{J}_{\epsilon,1:T},{\theta}_{-{\beta}},{y}_{1:T}$
using \eqref{eq:Bfactor-1}.
\item Redraw the diagonal elements of ${\beta}$ through the deep
interweaving procedure described in section~\ref{S: deep interweaving}.
\item Sample ${f}_{1:T}|{\lambda}_{1:T}^{{J}_{f,1:T}},{J}_{f,1:T},{h}_{1:T}^{{J}_{\epsilon,1:T}},
    {J}_{\epsilon,1:T},{\theta}_{-{f}_{1:T}},{y}_{1:T}$
using \eqref{eq:factordraws-1}.
\item Sample $\phi_{fk}|\lambda_{k1:T}^{j_{fk1:T}},j_{fk1:T},{\theta}_{-\phi_{fk}}$ for $k=1,...,K$.
\item Sample $\phi_{\epsilon s}|h_{s1:T}^{j_{\epsilon s1:T}},j_{\epsilon s1:T},{\theta}_{-\phi_{\epsilon s}}$
and sample $\mu_{\epsilon s}|h_{s1:T}^{j_{\epsilon s1:T}},j_{\epsilon s1:T},{\theta}_{-\mu_{\epsilon s}}$ for $s=1,...,S$.
\end{enumerate}
\end{enumerate}
\end{algorithm}

\begin{algorithm*}
This is a continuation of Algorithm~\ref{alg:Sampling-Scheme:-The correlated PMMH+PG for factor SV model }
\begin{enumerate}
\item[Part 4:] Sample $\left(V_{fk,1:T}^{1:N},V_{Afk,1:T-1}^{1:N}\right)$ from $\widetilde{\pi}^{N}\left(\cdot|\lambda_{k1:T}^{j_{fk1:T}},j_{fk1:T},{\theta}\right)$
 using CCSMC 
(Algorithm~\ref{alg:The-conditional Sequential-Monte carlo algorithm}) and obtain  $\widehat{Z}\left(V_{fk,1:T}^{1:N},V_{Afk,1:T-1}^{1:N},\tau_{fk}^{2},\theta_{-\tau_{fk}^{2}}\right)$ for $k=1,...,K$.
\pagebreak
\item [Part 5:] Sample $\left(V_{\epsilon s,1:T}^{1:N},V_{A\epsilon s,1:T-1}^{1:N}\right)$
from $\widetilde{\pi}^{N}\left(\cdot|h_{s1:T}^{j_{\epsilon s1:T}},j_{\epsilon s1:T},{\theta}\right)$
 using the CCSMC algorithm
(Algorithm~\ref{alg:The-conditional Sequential-Monte carlo algorithm}) and obtain
$\widehat{Z}\left(V_{\epsilon s,1:T}^{1:N},V_{A\epsilon s,1:T-1}^{1:N},\tau_{\epsilon s}^{2},\rho_{\epsilon s},\theta_{-\tau_{\epsilon s}^{2},\rho_{\epsilon s}}\right)$ for $s=1,...,S$.
\end{enumerate}
\end{algorithm*}

\clearpage
\subparagraph{Further discussion of part~3, (d) and (e) of  Algorithm~\ref{alg:Sampling-Scheme:-The correlated PMMH+PG for factor SV model }}

For $k=1,...,K$, sample the autoregressive coefficient $\phi_{fk}$
from $\widetilde{\pi}^{N}\left(\cdot|\lambda_{k1:T}^{j_{fk1:T}},j_{fk1:T},{\theta}_{-\phi_{fk}}\right)$.
We draw a proposed value $\phi_{fk}^{*}$ from $N\left(\mu_{\phi_{fk}},\sigma_{\phi_{fk}}^{2}\right)$
truncated within $\left(-1,1\right)$, where
\begin{equation}
\mu_{\phi_{fk}}=\frac{\sigma_{\phi_{fk}}^{2}}{\tau_{fk}^{2}}\sum_{t=2}^{T}\lambda_{kt}\lambda_{kt-1},\;\sigma_{\phi_{fk}}^{2}=\frac{\tau_{fk}^{2}}{\sum_{t=2}^{T-1}\lambda^2_{kt}}.
\end{equation}
The candidate is accepted with probability
\begin{equation}
\min\left\{1,  \frac{p\left(\phi_{fk}^{*}\right)\sqrt{1-\phi_{fk}^{2*}}}{p\left(\phi_{fk}\right)\sqrt{1-\phi_{fk}^{2}}}\right\} .
\end{equation}

For $s=1,...,S$, sample $\mu_{\epsilon s}$ from $N\left(\mu_{\mu_{\epsilon s}},\sigma_{\mu_{\epsilon s}}^{2}\right)$,
where
\begin{equation}
\mu_{\mu_{\epsilon s}}=\sigma_{\mu_{\epsilon s}}^{2}\frac{h_{s1}\left(1-\phi_{\epsilon s}^{2}\right)\left(1-\rho_{\epsilon s}^{2}\right)+\left(1-\phi_{\epsilon s}\right)\sum_{t=2}^{T}h_{st}-\phi h_{st-1}-\rho_{\epsilon s}\tau_{\epsilon s}\epsilon_{st-1}^{*}}{\tau_{\epsilon s}^{2}\left(1-\rho_{\epsilon s}^{2}\right)},
\end{equation}
 $\epsilon_{st-1}^{*}=\left(y_{st-1}-{\beta}_{s}f_{t-1}\right)\exp\left(-h_{st-1}/2\right)$
and
\begin{equation}
\sigma_{\mu_{\epsilon s}}^{2}=\frac{\tau_{\epsilon s}^{2}\left(1-\rho_{\epsilon s}^{2}\right)}{\left(1-\phi_{\epsilon s}^{2}\right)\left(1-\rho_{\epsilon s}^{2}\right)+\left(T-1\right)\left(1-\phi_{\epsilon s}\right)^{2}}.
\end{equation}
For $s=1,...,S$, sample $\phi_{\epsilon s}$, by drawing a proposed
value $\phi_{\epsilon s}^{*}$ from $N\left(\mu_{\phi_{\epsilon s}},\sigma_{\phi_{\epsilon s}}^{2}\right)$
truncated within $\left(-1,1\right)$, where
\begin{equation}
\mu_{\phi_{\epsilon s}}=\frac{\sum_{t=2}^{T}\left(h_{s,t}-\mu_{\epsilon s}\right)\left(h_{st-1}-\mu_{\epsilon s}\right)-\rho_{\epsilon s}\tau_{\epsilon s}\left(h_{st-1}-\mu_{\epsilon s}\right)\epsilon_{st-1}^{*}}{\sum_{t=2}^{T}\left(h_{st-1}-\mu_{\epsilon s}\right)^{2}-\left(h_{s1}-\mu_{\epsilon s}\right)^{2}\left(1-\rho_{\epsilon s}^{2}\right)},
\end{equation}
and
\[
\sigma_{\phi_{\epsilon s}}^{2}=\frac{\tau_{\epsilon s}^{2}\left(1-\rho_{\epsilon s}^{2}\right)}{\sum_{t=2}^{T}\left(h_{st-1}-\mu_{\epsilon s}\right)^{2}-\left(h_{s1}-\mu_{\epsilon s}\right)^{2}\left(1-\rho_{\epsilon s}^{2}\right)}.
\]
The candidate is accepted with probability
\begin{equation}
\min\left\{ \frac{p\left(\phi_{\epsilon s}^{*}\right)\sqrt{1-\phi_{\epsilon s}^{2*}}}{p\left(\phi_{\epsilon s}\right)\sqrt{1-\phi_{\epsilon s}^{2}}},1\right\} .
\end{equation}
%\end{itemize}
In all the examples, the MWG step uses the bootstrap filter to evaluate
the particles and the adaptive random walk as the proposal density
for the parameters.

\section{The factor loading matrix and the latent factors\label{S: sampling 
factor loading matrix} }
This section discusses the parameterization and sampling of the factor loading matrix and the factors.

To identify the parameters of the factor loading matrix
$\boldsymbol{\beta}$, it is necessary to impose some further constraints.
The factor loading matrix $\boldsymbol{\beta}$ is usually assumed to be
lower triangular, i.e.,  $\beta_{sk}=0$ for $k>s$;
furthermore, one of two constraints are used: i) 
the $\boldsymbol{f}_{kt}$ have unit variance \citep{Geweke1996}; or 
ii) $\beta_{ss}=1$, for $s=1,...,S$,
and the variance of $\boldsymbol{f}_{t}$ is diagonal but unconstrained.
The main drawback of the lower triangular assumption on $\boldsymbol{\beta}$
is that the resulting inference can depend on the order in which the
components of $y_{t}$ are chosen \citep{Chan:2017}.
We use the following approach for $K = m$ factors to obtain an appropriate ordering of the returns that does not conflict with
the  data.  We follow  \cite{Conti2014} and \cite{Kastner:2017} 
and run and post-process the draws from the unrestricted
sampler  by choosing from column 1 the stock $i=i_1$ with the largest value of $|\beta_{i,1}|$.
 We repeat this for column 2, except that  now we seek that $i=2,\dots, S, i \neq i_1$ maximizing $| \beta_{i,2}|$.
 We proceed similarly for columns 3 to $m$. By an unrestricted sampler we mean that we do not restrict $
 \bs \beta$ to be lower triangular.
Furthermore, as noted by \citet{Kastner:2017}, the second
set of constraints impose that the first $K$ variables are the leading
factors, making the variable ordering dependence stronger.
We follow \citet{Kastner:2017} and leave the diagonal elements $\beta_{ss}$
unrestricted and set the level $\mu_{2k}$ of the factor log-volatilities
$\lambda_{kt}$ to zero for $k=1,...,K$.

Let $k_{s}$ denote the number of unrestricted elements in row $s$
of $\beta$ and define
\[
F_{s}=\left[\begin{array}{ccc}
f_{11} & \cdots & f_{k_{s}1}\\
\vdots &  & \vdots\\
f_{1T} & \cdots & f_{k_{s}T}
\end{array}\right],\;\widetilde{V}_{s}=\left[\begin{array}{ccc}
\exp\left(h_{s1}\right) & \cdots & 0\\
0 & \ddots & \vdots\\
0 & \cdots & \exp\left(h_{sT}\right)
\end{array}\right].
\]
Then, the factor loadings $\beta_{s,.}=\left(\beta_{s1},...,\beta_{sk_{s}}\right)^{\transp}$
for $s=1,...,S$, are sampled independently for each $s$ by performing a Gibbs-update using
\begin{equation}
\beta_{s,.}^{\transp}|f_{1:T},y_{s,1:T},h_{s,.}\sim N_{k_{s}}\left(a_{sT},b_{sT}\right),\label{eq:Bfactor-1}
\end{equation}
where $b_{pT}=\left(\left(F_{S}^{T}\widetilde{V}_{S}^{-1}F_{S}\right)+I_{k_{s}}\right)^{-1}$
and $a_{sT}=b_{sT}F_{s}^{T}\left(\widetilde{V}_{s}^{-1}y_{s,1:T}\right).$

\subparagraph{Sampling $\left\{ {f}_{t}\right\} |{y},\left\{ {h}_{t}\right\} ,\left\{ {\lambda}_{t}\right\} ,{\beta}$}
After  some algebra, we can show that $\left\{ f_{t}\right\} $
can be sampled from
\begin{equation}
\left\{ {f}_{t}\right\} |{y},\left\{ {h}_{t}\right\} ,\left\{ {\lambda}_{t}\right\} , {\beta}\sim N\left(a_{t},b_{t}\right),\label{eq:factordraws-1}
\end{equation}
where $b_{t}=\left({\beta}^{T}{V}_{t}^{-1}{\beta}+ {D}_{t}^{-1}\right)^{-1}$
and $a_{t}=b_{t} {\beta}^{T}\left( {V}_{t}^{-1}{y}_{t}\right).$

\section{Deep Interweaving\label{S: deep interweaving}}

It is well-known that sampling the factor loading matrix ${\beta}$
conditional on $\left\{ {f}_{t}\right\} $ and then sampling
$\left\{ {f}_{t}\right\} $ conditional on ${\beta}$
is inefficient and leads to extremely slow convergence and poor
mixing. We use an approach based on an 
ancillarity-sufficiency
interweaving strategy (ASIS), and in particular the deep interweaving strategy,
introduced by \citet{Kastner:2017}, that is now briefly described. The parameterisation underlying deep interweaving
is given by
\begin{equation}
{y}_{t}={\beta}^{*}{f}_{t}^{*}+{V}_{t}^{\frac{1}{2}}\varepsilon_{t},\;\;{f}_{t}^{*}|{\lambda}^{*}\sim N_{K}\left(0,\textrm{diag}\left(e^{\lambda_{1t}^{*}},...,e^{\lambda_{Kt}^{*}}\right)\right),\label{eq:deep interweaving factor model}
\end{equation}
with a lower triangular factor loading matrix ${\beta}^{*}$,
where ${\beta}_{11}^{*}=1,...,{\beta}_{KK}^{*}=1$.
The factor model can be reparameterised in \eqref{eq:deep interweaving factor model}
using the linear transformation
\[
{f}_{t}^{*}=D{f}_{t},{\beta}^{*}={\beta}D^{-1},
\]
where $D=\textrm{diag}\left(\beta_{11},...,\beta_{KK}\right)$, for
$t=1,..,T$. The $K$ latent factor volatilities $\lambda_{kt}^{*}$
follow the following  univariate SV models having levels $\mu_{fk}=\log\beta_{kk}^{2}$,
rather than zero, as in the factor SV model. The transformed
factor volatilities are  given by
\[
\lambda_{kt}^{*}=\lambda_{kt}+\log\beta_{kk}^{2},\quad t=0,...,T,\,\,\, k=1,...,K.
\]
We add the following deep interweaving algorithm in between sampling
the factor loading matrix and sampling the latent factors and perform these
steps independently for each $k=1,..,K$,
\begin{itemize}
\item Determine the vector ${\beta}_{.,k}^{*}$, where $\beta_{sk}^{*}=\beta_{sk}^{old}/\beta_{kk}^{old}$
in the $k$th column of the transformed factor loading matrix ${\beta}^{*}$.
\item Define ${\lambda}_{k,1:T}^{*}={\lambda}_{k,1:T}^{old}+2\log|\beta_{kk}^{old}|$
and sample $\beta_{kk}^{new}$ from $p\left(\beta_{kk}|{\beta}_{.,k}^{*},{\lambda}_{k,.}^{*},\phi_{fk},\tau_{fk}^{2}\right)$
for factor log-volatilites follows SV process.
\item Update ${\beta}_{.,k}=\frac{\beta_{kk}^{new}}{\beta_{kk}^{old}}{\beta}_{.,k}^{old}$,
${f}_{k,.}=\frac{\beta_{kk}^{old}}{\beta_{kk}^{new}}{f}_{k,.}^{old}$,
and ${\lambda}_{k,1:T}={\lambda}_{k,1:T}^{old}+2\log|\frac{\beta_{kk}^{old}}{\beta_{kk}^{new}}|$.

In the deep interweaving representation, we sample the scaling parameter
$\beta_{kk}$ indirectly through $\mu_{fk}$, $k=1,...,K$. The implied
prior $p\left(\mu_{fk}\right)\propto\exp\left(\mu_{fk}/2-\exp\left(\mu_{fk}\right)/2\right)$,
the density $p\left({\beta}_{.,k}^{*}|\mu_{fk}\right)\sim N\left(0,\exp\left(-\mu_{fk}\right)I_{k_{l}}\right)$
so that
\[
p\left(\mu_{fk}|{\beta}_{.,k}^{*},{\lambda}_{k,.}^{*},\phi_{fk},\tau_{fk}^{2}\right)\propto p\left({\lambda}_{fk,.}^{*}|\mu_{fk},\phi_{fk},\tau_{fk}^{2}\right)p\left({\beta}_{.,k}^{*}|\mu_{fk}\right)p\left(\mu_{fk}\right),
\]
which is not in an easily recognisable form for sampling. Instead, we draw the proposal $\mu_{fk}^{prop}$
from $N\left(A,B\right)$, where
\[
A=\frac{\sum_{t=2}^{T-1}\lambda_{kt}^{*}+\left(\lambda_{kT}^{*}-\phi_{fk}\lambda_{k1}\right)/\left(1-\phi_{fk}\right)}{T-1+1/B_{0}},
\quad B=\frac{\tau_{fk}^{2}/\left(1-\phi_{fk}\right)^{2}}{T-1+1/B_{0}}.
\]
Denoting the current value $\mu_{fk}$ by $\mu_{fk}^{old}$, the new
value $\mu_{fk}^{prop}$ gets accepted with probability $\min\left(1,R\right)$,
where
\[
R=\frac{p\left(\mu_{fk}^{prop}\right)p\left(\lambda_{k1}^{*}|\mu_{fk}^{prop},\phi_{fk},\tau_{fk}^{2}\right)p\left({\beta}_{.,k}^{*}|\mu_{fk}^{prop}\right)}{p\left(\mu_{fk}^{old}\right)\left(\lambda_{k,1}^{*}|\mu_{fk}^{old},\phi_{fk},\tau_{fk}^{2}\right)p\left({\beta}_{.,k}^{*}|\mu_{fk}^{old}\right)}\times\frac{p_{aux}\left(\mu_{fk}^{old}|\phi_{fk},\tau_{fk}^{2}\right)}{p_{aux}\left(\mu_{fk}^{prop}|\phi_{fk},\tau_{fk}^{2}\right)},
\]
where
\[
p_{aux}\left(\mu_{fk}^{old}|\phi_{fk},\tau_{fk}^{2}\right)\sim N\left(0,B_{0}\tau_{fk}^{2}/\left(1-\phi_{fk}\right)^{2}\right).
\]

We follow \citet{Kastner:2017} and set the constant $B_{0}$  to the large value $10^{5}$.
\end{itemize}

\section{Further empirical results for the factor  SV model with leverage\label{S: further empirical results for factor SV model}}

This section gives further empirical results for the multivariate factor SV models discussed in section \ref{Multivariate example}. 

Tables \ref{tab:IACT T=00003D1000 N=00003D100} to \ref{tab:Inefficiency-factor-(IACT) T1000 N1000} 
show the mean and maximum IACT values for each parameter in
the factor SV model with leverage for $T=1000$ observations, $S=26$ stock returns, and $K=1$ factor for the \corrPMMHPG{}, the
\PHS{}, the PGBS and the refined PGDA samplers.
Tables~\ref{tab:Inefficiency-factor-(IACT) N500 T3000} to \ref{tab:Inefficiency-factor-(IACT) N2000 T3001}
show the mean and maximum IACT values for each parameter in
the factor SV model with leverage for $T=3000$ observations, $S=26$ stock returns, $K=4$ factors,
for the \corrPMMHPG{}, the
\PHS{}, the PGBS and the refined PGDA samplers. Tables \ref{tab:Inefficiency-factor-(IACT) N500 T3000-diffusion} 
and \ref{tab:Inefficiency-factor-(IACT) N1000 T3000-diffusion-1} report the IACT estimates for 
all the parameters for the factor SV model with the 
idiosyncratic log-volatilities following GARCH diffusion models for the CPHS, the PHS, and the PG sampler. The tables in this section also show that CPHS generally outperforms the competing samplers. Full discussions are given in section \ref{results multivariate examples}.

\begin{table}[H]
\caption{Inefficiency factor (IACT) of the parameters of the factor SV model
with leverage for US stock return data with $T=1000$ observations, $S=26$ stock returns, and
$K=1$ factor.
Sampler I: $\textrm{\corrPMMHPG}\left({\rho}_{\epsilon},{\tau}_{\epsilon}^{2},{\tau}_{f}^{2};{f}_{1:T},{\beta},{\mu}_{\epsilon},\phi_{\epsilon},\phi_{f}\right)$,
Sampler II: $\textrm{PHS \ensuremath{\left(\tau_{f}^{2},\tau_{\epsilon}^{2},\rho_{\epsilon};{f}_{1:T}, \mu_{\epsilon},\phi_{\epsilon},\phi_{f},\beta\right)}}$,
Sampler III: $\textrm{PGBS}\left({f}_{1:T}, \mu_{\epsilon},\phi_{\epsilon},\phi_{f},\beta,\tau_{f}^{2},\tau_{\epsilon}^{2}\right)$,
and Sampler IV: $\textrm{PGDA}\left({f}_{1:T}, \mu_{\epsilon},\phi_{\epsilon},\phi_{f},\beta,\tau_{f}^{2},\tau_{\epsilon}^{2}\right)$
with $N=100$ particles.
The table shows the mean and maximum IACT values for the
parameters and the factor and idiosyncratic log-volatilities. The IACT is defined in section \ref{SS: preliminaries}.\label{tab:IACT T=00003D1000 N=00003D100}}

\centering{}%
\begin{tabular}{ccccccccc}
\hline 
 & \multicolumn{2}{c}{I } & \multicolumn{2}{c}{II} & \multicolumn{2}{c}{III} & \multicolumn{2}{c}{IV}\tabularnewline
\cline{2-9} \cline{3-9} \cline{4-9} \cline{5-9} \cline{6-9} \cline{7-9} \cline{8-9} \cline{9-9} 
 & Mean & Max & Mean & Max & Mean & Max & Mean & Max\tabularnewline
\hline 
$\beta_{1}$ & 1.70 & 1.77 & 1.68 & 1.72 & 1.82 & 1.86 & 8.64 & 9.56\tabularnewline
$\mu$ & 2.43 & 6.89 & 3.18 & 10.99 & 3.65 & 9.16 & 614.16 & 3536.65\tabularnewline
$\tau^{2}$ & 39.69 & 63.46 & 383.03 & 4454.61 & 709.27 & 1966.36 & 712.94 & 3432.51\tabularnewline
$\phi$ & 37.55 & 69.00 & 66.18 & 364.48 & 192.76 & 687.80 & 947.72 & 6052.27\tabularnewline
$\rho$ & 14.49 & 29.60 & 240.43 & 1638.51 & 264.94 & 499.54 & 420.50 & 1274.57\tabularnewline
$h_{1,1:T}$ & 2.30 & 11.32 & 2.82 & 11.39 & 3.85 & 30.22 & 23.14 & 110.68\tabularnewline
$h_{2,1:T}$ & 2.03 & 8.22 & 2.46 & 10.35 & 3.05 & 30.73 & 35.14 & 172.40\tabularnewline
$h_{10,1:T}$ & 4.12 & 24.54 & 4.31 & 17.42 & 3.67 & 22.23 & 23.52 & 99.98\tabularnewline
$h_{11,1:T}$ & 1.99 & 8.34 & 3.06 & 12.49 & 7.46 & 74.66 & 45.47 & 276.13\tabularnewline
$h_{12,1:T}$ & 1.62 & 7.40 & 3.46 & 21.08 & 2.30 & 12.26 & 25.53 & 277.76\tabularnewline
$\lambda_{1,1:T}$ & 1.46 & 5.27 & 1.42 & 6.27 & 1.69 & 10.89 & 40.15 & 289.65\tabularnewline
\hline 
\end{tabular}
\end{table}

\begin{table}[H]
\caption{Inefficiency factor (IACT) of the parameters of the factor SV model
with leverage for US stock return data with $T=1000$ observations, $S=26$ stock returns, and
$K=1$ factor. Sampler I: $\textrm{\corrPMMHPG}\left({\rho}_{\epsilon},{\tau}_{\epsilon}^{2},{\tau}_{f}^{2};{f}_{1:T},{\beta},{\mu}_{\epsilon},\phi_{\epsilon},\phi_{f}\right)$ with $N=100$, Sampler II: $\textrm{PHS \ensuremath{\left(\tau_{f}^{2},\tau_{\epsilon}^{2},\rho_{\epsilon};{f}_{1:T}, \mu_{\epsilon},\phi_{\epsilon},\phi_{f},\beta\right)}}$,
Sampler III: $\textrm{PGBS}\left({f}_{1:T},\mu_{\epsilon},\phi_{\epsilon},\phi_{f},\beta,\tau_{f}^{2},\tau_{\epsilon}^{2}\right)$,
and Sampler IV: $\textrm{PGDA}\left({f}_{1:T},\mu_{\epsilon},\phi_{\epsilon},\phi_{f},\beta,\tau_{f}^{2},\tau_{\epsilon}^{2}\right)$
with $N=250$ particles. The table shows the mean and maximum IACT values for the
parameters and the factor and idiosyncratic log-volatilities. The IACT is defined in section \ref{SS: preliminaries}. \label{tab:Inefficiency-factor-(IACT) T=00003D1000 N250}}

\centering{}%
\begin{tabular}{ccccccccc}
\hline 
 & \multicolumn{2}{c}{I} & \multicolumn{2}{c}{II} & \multicolumn{2}{c}{III } & \multicolumn{2}{c}{IV}\tabularnewline
\cline{2-9} \cline{3-9} \cline{4-9} \cline{5-9} \cline{6-9} \cline{7-9} \cline{8-9} \cline{9-9} 
 & Mean & Max & Mean & Max & Mean & Max & Mean & Max\tabularnewline
\hline 
$\beta_{1}$ & 1.70 & 1.77 & 1.86 & 1.89 & 1.73 & 1.75 & 5.66 & 6.38\tabularnewline
$\mu$ & 2.43 & 6.89 & 2.64 & 5.49 & 3.55 & 10.18 & 47.73 & 195.85\tabularnewline
$\tau^{2}$ & 39.69 & 63.46 & 62.16 & 284.63 & 753.48 & 2158.37 & 103.36 & 365.68\tabularnewline
$\phi$ & 37.55 & 69.00 & 44.71 & 103.56 & 230.30 & 476.30 & 188.84 & 1111.59\tabularnewline
$\rho$ & 14.49 & 29.60 & 50.95 & 493.83 & 292.73 & 898.86 & 54.43 & 259.50\tabularnewline
$h_{1,1:T}$ & 2.30 & 11.32 & 2.32 & 11.56 & 8.88 & 62.02 & 6.47 & 14.90\tabularnewline
$h_{2,1:T}$ & 2.03 & 8.22 & 1.98 & 7.24 & 3.87 & 37.82 & 9.25 & 25.22\tabularnewline
$h_{10,1:T}$ & 4.12 & 24.54 & 3.58 & 17.34 & 4.28 & 39.29 & 7.53 & 15.85\tabularnewline
$h_{11,1:T}$ & 1.99 & 8.34 & 2.15 & 8.810.94 & 7.94 & 96.06 & 8.73 & 33.27\tabularnewline
$h_{12,1:T}$ & 1.62 & 7.40 & 1.88 & 7.75 & 2.80 & 16.70 & 6.81 & 43.58\tabularnewline
$\lambda_{1,1:T}$ & 1.46 & 5.27 & 1.34 & 4.12 & 1.69 & 6.79 & 7.75 & 21.07\tabularnewline
\hline 
\end{tabular}
\end{table}

\begin{table}[H]
\caption{Inefficiency factor (IACT) of the parameters of the factor SV model
with leverage for US stock return data with $T=1000$ observations, $S=26$ stock returns, and
$K=1$ factor. Sampler I: $\textrm{\corrPMMHPG}\left({\rho}_{\epsilon},{\tau}_{\epsilon}^{2},{\tau}_{f}^{2};{f}_{1:T},{\beta},{\mu}_{\epsilon},\phi_{\epsilon},\phi_{f}\right)$ with $N=100$, Sampler II: $\textrm{PHS \ensuremath{\left(\tau_{f}^{2},\tau_{\epsilon}^{2},\rho_{\epsilon};{f}_{1:T}, \mu_{\epsilon},\phi_{\epsilon},\phi_{f},\beta\right)}}$,
Sampler III: $\textrm{PGBS}\left({f}_{1:T},\mu_{\epsilon},\phi_{\epsilon},\phi_{f},\beta,\tau_{f}^{2},\tau_{\epsilon}^{2}\right)$,
and Sampler IV: $\textrm{PGDA}\left({f}_{1:T},\mu_{\epsilon},\phi_{\epsilon},\phi_{f},\beta,\tau_{f}^{2},\tau_{\epsilon}^{2}\right)$with
$N=500$ particles. The table shows the mean and maximum IACT values for the parameters
and the factor and idiosyncratic log-volatilities. The IACT is defined in section \ref{SS: preliminaries}. \label{tab:Inefficiency-factor-(IACT) T=00003D1000 N500}}

\centering{}%
\begin{tabular}{ccccccccc}
\hline 
 & \multicolumn{2}{c}{I} & \multicolumn{2}{c}{II} & \multicolumn{2}{c}{III } & \multicolumn{2}{c}{IV}\tabularnewline
\cline{2-9} \cline{3-9} \cline{4-9} \cline{5-9} \cline{6-9} \cline{7-9} \cline{8-9} \cline{9-9} 
 & Mean & Max & Mean & Max & Mean & Max & Mean & Max\tabularnewline
\hline 
$\beta_{1}$ & 1.70 & 1.77 & 1.74 & 1.84 & 1.77 & 1.89 & 2.69 & 2.73\tabularnewline
$\mu$ & 2.43 & 6.89 & 2.64 & 7.51 & 3.32 & 9.71 & 21.80 & 54.10\tabularnewline
$\tau^{2}$ & 39.69 & 63.46 & 45.15 & 88.33 & 675.79 & 2016.98 & 42.00 & 121.80\tabularnewline
$\phi$ & 37.55 & 69.00 & 41.33 & 72.59 & 208.45 & 451.59 & 57.57 & 192.41\tabularnewline
$\rho$ & 14.49 & 29.60 & 19.12 & 51.10 & 284.35 & 703.52 & 20.61 & 60.60\tabularnewline
$h_{1,1:T}$ & 2.30 & 11.32 & 2.19 & 10.46 & 7.12 & 77.47 & 4.45 & 10.43\tabularnewline
$h_{2,1:T}$ & 2.03 & 8.22 & 1.78 & 6.54 & 3.96 & 41.94 & 4.85 & 10.92\tabularnewline
$h_{10,1:T}$ & 4.12 & 24.54 & 3.11 & 13.18 & 2.60 & 13.39 & 3.91 & 7.51\tabularnewline
$h_{11,1:T}$ & 1.99 & 8.34 & 2.19 & 9.01 & 3.90 & 34.44 & 4.88 & 16.06\tabularnewline
$h_{12,1:T}$ & 1.62 & 7.40 & 1.68 & 6.34 & 2.36 & 13.52 & 3.78 & 22.50\tabularnewline
$\lambda_{1,1:T}$ & 1.46 & 5.27 & 1.27 & 2.67 & 1.66 & 5.95 & 4.76 & 12.81\tabularnewline
\hline 
\end{tabular}
\end{table}

\begin{table}[H]
\caption{Inefficiency factor (IACT) of the parameters of the factor SV model
with leverage for US stock return data with $T=1000$ observations, $S=26$ stock returns, and
$K=1$ factor. Sampler I: $\textrm{\corrPMMHPG}\left({\rho}_{\epsilon},{\tau}_{\epsilon}^{2},{\tau}_{f}^{2};{f}_{1:T},{\beta},{\mu}_{\epsilon},\phi_{\epsilon},\phi_{f}\right)$ with $N=100$, Sampler II: $\textrm{PHS \ensuremath{\left(\tau_{f}^{2},\tau_{\epsilon}^{2},\rho_{\epsilon};{f}_{1:T}, \mu_{\epsilon},\phi_{\epsilon},\phi_{f},\beta\right)}}$,
Sampler III: $\textrm{PGBS}\left({f}_{1:T},\mu_{\epsilon},\phi_{\epsilon},\phi_{f},\beta,\tau_{f}^{2},\tau_{\epsilon}^{2}\right)$,
and Sampler IV: $\textrm{PGDA}\left({f}_{1:T},\mu_{\epsilon},\phi_{\epsilon},\phi_{f},\beta,\tau_{f}^{2},\tau_{\epsilon}^{2}\right)$
with $N=1000$ particles. The table shows the mean and maximum IACT values for the
parameters and the factor and idiosyncratic log-volatilities. The IACT is defined in section \ref{SS: preliminaries}.\label{tab:Inefficiency-factor-(IACT) T1000 N1000}}

\centering{}%
\begin{tabular}{ccccccccc}
\hline 
 & \multicolumn{2}{c}{I} & \multicolumn{2}{c}{II} & \multicolumn{2}{c}{III} & \multicolumn{2}{c}{IV}\tabularnewline
\cline{2-9} \cline{3-9} \cline{4-9} \cline{5-9} \cline{6-9} \cline{7-9} \cline{8-9} \cline{9-9}  
 & Mean & Max & Mean & Max & Mean & Max & Mean & Max\tabularnewline
\hline 
$\beta_{1}$ & 1.70 & 1.77 & 1.71 & 1.78 & 1.79 & 1.86 & 2.44 & 2.54\tabularnewline
$\mu$ & 2.43 & 6.89 & 2.27 & 6.69 & 3.54 & 11.06 & 12.52 & 34.59\tabularnewline
$\tau^{2}$ & 39.69 & 63.46 & 41.15 & 73.33 & 731.78 & 1781.30 & 26.91 & 91.04\tabularnewline
$\phi$ & 37.55 & 69.00 & 38.19 & 75.54 & 219.92 & 596.81 & 34.65 & 98.03\tabularnewline
$\rho$ & 14.49 & 29.60 & 15.06 & 26.79 & 252.63 & 424.56 & 11.36 & 24.36\tabularnewline
$h_{1,1:T}$ & 2.30 & 11.32 & 2.01 & 9.29 & 4.30 & 43.27 & 3.01 & 7.42\tabularnewline
$h_{2,1:T}$ & 2.03 & 8.22 & 1.67 & 5.62 & 5.51 & 65.39 & 3.09 & 6.78\tabularnewline
$h_{10,1:T}$ & 4.12 & 24.54 & 2.94 & 12.91 & 6.54 & 63.92 & 2.78 & 6.43\tabularnewline
$h_{11,1:T}$ & 1.99 & 8.34 & 1.85 & 6.96 & 5.57 & 51.36 & 3.38 & 9.40\tabularnewline
$h_{12,1:T}$ & 1.62 & 7.40 & 1.46 & 4.58 & 2.03 & 12.49 & 2.26 & 9.80\tabularnewline
$\lambda_{1,1:T}$ & 1.46 & 5.27 & 1.23 & 2.34 & 1.78 & 6.25 & 3.12 & 6.74\tabularnewline
\hline 
\end{tabular}
\end{table}

\begin{table}[H]
\caption{Inefficiency factor (IACT) of the parameters of the factor SV model
with leverage for US stock return data with $T=3001$ observations, $S=26$ stock returns, and
$K=4$ factors. Sampler I: $\textrm{\corrPMMHPG}\left({\rho}_{\epsilon},{\tau}_{\epsilon}^{2},{\tau}_{f}^{2};{f}_{1:T},{\beta},{\mu}_{\epsilon},\phi_{\epsilon},\phi_{f}\right)$
with $N=100$ particles, Sampler II: $\textrm{PHS \ensuremath{\left(\tau_{f}^{2},\tau_{\epsilon}^{2},\rho_{\epsilon};{f}_{1:T}, \mu_{\epsilon},\phi_{\epsilon},\phi_{f},\beta\right)}}$,
Sampler III: $\textrm{PGBS}\left({f}_{1:T},\mu_{\epsilon},\phi_{\epsilon},\phi_{f},\beta,\tau_{f}^{2},\tau_{\epsilon}^{2}\right)$,
and Sampler IV: $\textrm{PGDA}\left({f}_{1:T},\mu_{\epsilon},\phi_{\epsilon},\phi_{f},\beta,\tau_{f}^{2},\tau_{\epsilon}^{2}\right)$
with $N=500$ particles. The table shows the mean and maximum IACT values for the
parameters and the factor and idiosyncratic log-volatilities. The IACT is defined in section \ref{SS: preliminaries}. \label{tab:Inefficiency-factor-(IACT) N500 T3000}}

\centering{}%
\begin{tabular}{ccccccccc}
\hline 
 & \multicolumn{2}{c}{I } & \multicolumn{2}{c}{II } & \multicolumn{2}{c}{III } & \multicolumn{2}{c}{IV}\tabularnewline
\cline{2-9} \cline{3-9} \cline{4-9} \cline{5-9} \cline{6-9} \cline{7-9} \cline{8-9} \cline{9-9} 
 & Mean & Max & Mean & Max & Mean & Max & Mean & Max\tabularnewline
\hline 
$\beta_{1}$ & 2.10 & 2.94 & 2.30 & 3.16 & 2.18 & 3.18 & 22.31 & 33.64\tabularnewline
$\beta_{2}$ & 21.79 & 22.62 & 27.11 & 29.41 & 25.83 & 28.32 & 48.26 & 53.79\tabularnewline
$\beta_{3}$ & 41.71 & 57.64 & 25.64 & 32.89 & 33.22 & 42.64 & 46.28 & 89.75\tabularnewline
$\beta_{4}$ & 25.35 & 72.44 & 30.76 & 78.58 & 35.49 & 95.55 & 83.74 & 211.21\tabularnewline
$\mu$ & 3.51 & 36.74 & 3.93 & 32.92 & 4.19 & 30.60 & 481.95 & 3534.25\tabularnewline
$\phi$ & 33.92 & 99.00 & 487.81 & 5003.03 & 851.10 & 2517.54 & 860.24 & 4507.77\tabularnewline
$\tau^{2}$ & 42.10 & 119.37 & 91.85 & 839.92 & 329.45 & 929.22 & 818.95 & 4330.93\tabularnewline
$\rho$ & 18.67 & 42.93 & 364.29 & 2083.37 & 301.97 & 1249.85 & 585.77 & 4815.88\tabularnewline
$h_{1,1:T}$ & 2.05 & 4.91 & 1.93 & 4.55 & 2.22 & 9.84 & 52.15 & 425.00\tabularnewline
$h_{2,1:T}$ & 2.84 & 11.97 & 2.75 & 9.75 & 3.14 & 23.59 & 25.90 & 81.61\tabularnewline
$h_{10,1:T}$ & 2.95 & 12.87 & 2.94 & 17.37 & 3.66 & 57.20 & 28.40 & 70.90\tabularnewline
$h_{11,1:T}$ & 1.57 & 8.19 & 1.75 & 45.49 & 1.54 & 12.55 & 844.88 & 3756.39\tabularnewline
$h_{12,1:T}$ & 1.80 & 43.77 & 5.36 & 417.19 & 2.57 & 113.63 & 1602.90 & 5088.59\tabularnewline
$\lambda_{1,1:T}$ & 2.27 & 6.79 & 1.89 & 3.63 & 1.86 & 3.62 & 29.23 & 62.14\tabularnewline
\hline 
\end{tabular}
\end{table}

\begin{table}[H]
\caption{Inefficiency factor (IACT) of the parameters of the factor SV model
with leverage for US stock return data with $T=3001$ observations, $S=26$ stock returns, and
$K=4$ factors. Sampler I: $\textrm{\corrPMMHPG}\left({\rho}_{\epsilon},{\tau}_{\epsilon}^{2},{\tau}_{f}^{2};{f}_{1:T},{\beta},{\mu}_{\epsilon},\phi_{\epsilon},\phi_{f}\right)$
with $N=100$ particles, Sampler II: $\textrm{PHS \ensuremath{\left(\tau_{f}^{2},\tau_{\epsilon}^{2},\rho_{\epsilon};{f}_{1:T}, \mu_{\epsilon},\phi_{\epsilon},\phi_{f},\beta\right)}}$,
Sampler III: $\textrm{PGBS}\left({f}_{1:T},\mu_{\epsilon},\phi_{\epsilon},\phi_{f},\beta,\tau_{f}^{2},\tau_{\epsilon}^{2}\right)$,
and Sampler IV: $\textrm{PGDA}\left({f}_{1:T},\mu_{\epsilon},\phi_{\epsilon},\phi_{f},\beta,\tau_{f}^{2},\tau_{\epsilon}^{2}\right)$
with $N=1000$ particles. The table shows the mean and maximum IACT values for the
parameters and the factor and idiosyncratic log-volatilities. The IACT is defined in section \ref{SS: preliminaries}. \label{tab:Inefficiency-factor-(IACT) N1000 T3001}}

\centering{}%
\begin{tabular}{ccccccccc}
\hline 
 & \multicolumn{2}{c}{I} & \multicolumn{2}{c}{II } & \multicolumn{2}{c}{III } & \multicolumn{2}{c}{IV}\tabularnewline
\cline{2-9} \cline{3-9} \cline{4-9} \cline{5-9} \cline{6-9} \cline{7-9} \cline{8-9} \cline{9-9}  
 & Mean & Max & Mean & Max & Mean & Max & Mean & Max\tabularnewline
\hline 
$\beta_{1}$ & 2.10 & 2.94 & 1.92 & 2.25 & 2.20 & 2.73 & 6.39 & 9.24\tabularnewline
$\beta_{2}$ & 21.79 & 22.62 & 24.96 & 26.79 & 21.28 & 22.55 & 33.73 & 36.62\tabularnewline
$\beta_{3}$ & 41.71 & 57.64 & 27.67 & 42.14 & 37.96 & 55.49 & 57.38 & 97.75\tabularnewline
$\beta_{4}$ & 25.35 & 72.44 & 36.24 & 98.53 & 36.91 & 91.91 & 50.54 & 142.85\tabularnewline
$\mu$ & 3.51 & 36.74 & 3.61 & 32.47 & 3.90 & 33.35 & 193.95 & 2269.26\tabularnewline
$\phi$ & 33.92 & 99.00 & 180.66 & 2333.81 & 839.33 & 2233.80 & 319.60 & 1789.54\tabularnewline
$\tau^{2}$ & 42.10 & 119.37 & 56.82 & 301.98 & 341.68 & 907.59 & 231.05 & 876.51\tabularnewline
$\rho$ & 18.67 & 42.93 & 331.17 & 2467.29 & 281.52 & 900.63 & 167.79 & 1535.67\tabularnewline
$h_{1,1:T}$ & 2.05 & 4.91 & 1.88 & 4.90 & 2.19 & 8.72 & 14.65 & 60.42\tabularnewline
$h_{2,1:T}$ & 2.84 & 11.97 & 2.69 & 12.73 & 2.93 & 25.47 & 11.33 & 43.50\tabularnewline
$h_{10,1:T}$ & 2.95 & 12.87 & 2.15 & 8.41 & 3.01 & 27.36 & 13.14 & 31.69\tabularnewline
$h_{11,1:T}$ & 1.57 & 8.19 & 1.18 & 2.72 & 1.53 & 10.69 & 187.96 & 776.01\tabularnewline
$h_{12,1:T}$ & 1.80 & 43.77 & 5.27 & 404.64 & 3.14 & 148.20 & 391.46 & 991.11\tabularnewline
$\lambda_{1,1:T}$ & 2.27 & 6.79 & 1.80 & 3.47 & 1.88 & 4.08 & 11.86 & 24.52\tabularnewline
\hline 
\end{tabular}
\end{table}

\begin{table}[H]
\caption{Inefficiency factor (IACT) of the parameters of the factor SV model
with leverage for US stock return data with $T=3001$ observations, $S=26$ stock returns, and
$K=4$ factors. Sampler I: $\textrm{\corrPMMHPG}\left({\rho}_{\epsilon},{\tau}_{\epsilon}^{2},{\tau}_{f}^{2};{f}_{1:T},{\beta},{\mu}_{\epsilon},\phi_{\epsilon},\phi_{f}\right)$
with $N=100$ particles, Sampler II: $\textrm{PHS \ensuremath{\left(\tau_{f}^{2},\tau_{\epsilon}^{2},\rho_{\epsilon};{f}_{1:T}, \mu_{\epsilon},\phi_{\epsilon},\phi_{f},\beta\right)}}$,
Sampler III: $\textrm{PGBS}\left({f}_{1:T},\mu_{\epsilon},\phi_{\epsilon},\phi_{f},\beta,\tau_{f}^{2},\tau_{\epsilon}^{2}\right)$,
and Sampler IV: $\textrm{PGDA}\left({f}_{1:T},\mu_{\epsilon},\phi_{\epsilon},\phi_{f},\beta,\tau_{f}^{2},\tau_{\epsilon}^{2}\right)$
with $N=2000$ particles. The table shows the mean and maximum IACT values for the
parameters and the factor and idiosyncratic log-volatilities. The IACT is defined in section \ref{SS: preliminaries}. \label{tab:Inefficiency-factor-(IACT) N2000 T3001}}

\centering{}%
\begin{tabular}{ccccccccc}
\hline 
 & \multicolumn{2}{c}{I} & \multicolumn{2}{c}{II} & \multicolumn{2}{c}{III} & \multicolumn{2}{c}{IV}\tabularnewline
\cline{2-9} \cline{3-9} \cline{4-9} \cline{5-9} \cline{6-9} \cline{7-9} \cline{8-9} \cline{9-9}  
 & Mean & Max & Mean & Max & Mean & Max & Mean & Max\tabularnewline
\hline 
$\beta_{1}$ & 2.10 & 2.94 & 2.22 & 2.83 & 2.03 & 2.95 & 3.96 & 5.75\tabularnewline
$\beta_{2}$ & 21.79 & 22.62 & 25.99 & 28.77 & 28.71 & 30.87 & 26.85 & 32.07\tabularnewline
$\beta_{3}$ & 41.71 & 57.64 & 27.97 & 41.74 & 29.53 & 38.56 & 43.13 & 53.90\tabularnewline
$\beta_{4}$ & 25.35 & 72.44 & 37.43 & 103.88 & 31.17 & 75.47 & 39.38 & 101.97\tabularnewline
$\mu$ & 3.51 & 36.74 & 2.99 & 24.55 & 4.04 & 33.44 & 73.71 & 650.02\tabularnewline
$\tau^{2}$ & 42.10 & 119.37 & 86.37 & 676.53 & 977.36 & 1914.02 & 173.70 & 1379.56\tabularnewline
$\phi$ & 33.92 & 99.00 & 44.30 & 267.85 & 423.88 & 1170.47 & 122.16 & 690.24\tabularnewline
$\rho$ & 18.67 & 42.93 & 92.40 & 908.42 & 293.47 & 1167.13 & 66.55 & 342.51\tabularnewline
$h_{1,1:T}$ & 2.05 & 4.91 & 1.83 & 4.08 & 2.13 & 8.58 & 7.29 & 30.69\tabularnewline
$h_{2,1:T}$ & 2.84 & 11.97 & 2.38 & 9.65 & 2.03 & 10.24 & 5.93 & 13.46\tabularnewline
$h_{10,1:T}$ & 2.95 & 12.87 & 2.22 & 13.18 & 3.51 & 40.56 & 9.27 & 14.81\tabularnewline
$h_{11,1:T}$ & 1.57 & 8.19 & 1.41 & 8.49 & 1.44 & 8.74 & 82.24 & 273.58\tabularnewline
$h_{12,1:T}$ & 1.80 & 43.77 & 5.35 & 410.38 & 2.90 & 64.23 & 158.85 & 308.91\tabularnewline
$\lambda_{1,1:T}$ & 2.27 & 6.79 & 1.81 & 3.61 & 1.88 & 5.30 & 6.59 & 13.54\tabularnewline
\hline 
\end{tabular}
\end{table}

\begin{table}[H]
\caption{Inefficiency factor (IACT) of the parameters of the factor SV model
with GARCH diffusion processes for the idiosyncratic volatility for
US stock return data with $T=3001$, $S=26$, and $K=4$. Sampler
I:~$\textrm{CPHS \ensuremath{\left(\tau_{\epsilon}^{2},\tau_{f}^{2},\mu_{\epsilon},\alpha_{\epsilon};\phi_{f},f_{1:T},\beta\right)}}$
with $N=100$, Sampler II: $\textrm{\textrm{PG}\ensuremath{\left(\tau_{\epsilon}^{2},\tau_{f}^{2},\mu_{\epsilon},\alpha_{\epsilon},\phi_{f},f_{1:T},\beta\right)}}$
with $N=500$, 
Sampler III:~$\textrm{PHS}\left(\tau_{\epsilon}^{2},\tau_{f}^{2},\mu_{\epsilon},\alpha_{\epsilon};\phi_{f},f_{1:T},\beta\right)$
with $N=500$. The table shows the mean and maximum IACT values for the
parameters. The IACT is defined in section \ref{SS: preliminaries}. \label{tab:Inefficiency-factor-(IACT) N500 T3000-diffusion}}

\centering{}%
\begin{tabular}{ccccccc}
\hline 
 & \multicolumn{2}{c}{I} & \multicolumn{2}{c}{II} & \multicolumn{2}{c}{III}\tabularnewline
\cline{2-7} \cline{3-7} \cline{4-7} \cline{5-7} \cline{6-7} \cline{7-7} 
 & Mean & Max & Mean & Max & Mean & Max\tabularnewline
\hline 
$\beta_{1}$ & 2.42 & 3.53 & 2.36 & 3.19 & 2.37 & 3.33\tabularnewline
$\beta_{2}$ & 19.49 & 22.97 & 17.04 & 22.03 & 19.51 & 24.22\tabularnewline
$\beta_{3}$ & 33.95 & 42.48 & 35.62 & 47.86 & 36.56 & 45.38\tabularnewline
$\beta_{4}$ & 31.86 & 77.27 & 35.39 & 78.63 & 26.67 & 44.04\tabularnewline
$\alpha$ & 18.67 & 57.59 & 183.96 & 1365.97 & 162.16 & 445.83\tabularnewline
$\mu$ & 18.00 & 75.67 & 109.82 & 385.13 & 115.98 & 225.77\tabularnewline
$\tau^{2}$ & 18.38 & 52.19 & 2344.98 & 6610.81 & 2853.09 & 9535.97\tabularnewline
$\phi$ & 8.14 & 13.39 & 46.68 & 111.56 & 26.12 & 58.64\tabularnewline
\hline 
\end{tabular}
\end{table}

\begin{table}[H]
\caption{Inefficiency factor (IACT) of the parameters of the factor SV model
with GARCH diffusion processes for the idiosyncratic volatility for
US stock return data with $T=3001$, $S=26$, and $K=4$. Sampler
I: $\textrm{CPHS \ensuremath{\left(\tau_{\epsilon}^{2},\tau_{f}^{2},\mu_{\epsilon},\alpha_{\epsilon};\phi_{f},f_{1:T},\beta\right)}}$
with $N=100$, Sampler II: $\textrm{\textrm{PG}\ensuremath{\left(\tau_{\epsilon}^{2},\tau_{f}^{2},\mu_{\epsilon},\alpha_{\epsilon},\phi_{f},f_{1:T},\beta\right)}}$
with $N=1000$, Sampler III: $\textrm{PHS}\left(\tau_{\epsilon}^{2},\tau_{f}^{2},\mu_{\epsilon},\alpha_{\epsilon};\phi_{f},f_{1:T},\beta\right)$
with $N=1000$. The table shows the mean and maximum IACT values for the
parameters. The IACT is defined in section \ref{SS: preliminaries}. \label{tab:Inefficiency-factor-(IACT) N1000 T3000-diffusion-1}}

\centering{}%
\begin{tabular}{ccccccc}
\hline 
 & \multicolumn{2}{c}{I} & \multicolumn{2}{c}{II} & \multicolumn{2}{c}{III}\tabularnewline
\cline{2-7} \cline{3-7} \cline{4-7} \cline{5-7} \cline{6-7} \cline{7-7}
 & Mean & Max & Mean & Max & Mean & Max\tabularnewline
\hline 
$\beta_{1}$  & 2.42 & 3.53 & 2.37 & 3.33 & 2.22 & 3.37\tabularnewline
$\beta_{2}$ & 19.49 & 22.97 & 19.51 & 24.22 & 21.36 & 23.52\tabularnewline
$\beta_{3}$ & 33.95 & 42.48 & 36.56 & 45.38 & 39.62 & 54.27\tabularnewline
$\beta_{4}$ & 31.86 & 77.27 & 26.67 & 44.04 & 31.03 & 69.16\tabularnewline
$\alpha$ & 18.67 & 57.59 & 162.16 & 445.83 & 35.57 & 175.38\tabularnewline
$\mu$ & 18.00 & 75.67 & 115.98 & 225.77 & 35.27 & 168.14\tabularnewline
$\tau^{2}$ & 18.38 & 52.19 & 2853.09 & 9535.97 & 35.68 & 107.18\tabularnewline
$\phi$ & 8.14 & 13.39 & 26.12 & 58.64 & 13.28 & 22.99\tabularnewline
\hline 
\end{tabular}
\end{table}

\section{The list of industry portfolios\label{S:industryportfolio}}
\begin{table}[H]
\caption{The list of industry portfolios\label{tab:The-list-of industry portfolios}}

\centering{}%
\begin{tabular}{cc}
\hline
 & Stocks\tabularnewline
\hline
1 &  Coal \tabularnewline
2 &  Health Care and Equipment\tabularnewline
3 &  Retail\tabularnewline
4 & Tobacco\tabularnewline
5 & Steel Works\tabularnewline
6 & Food Products\tabularnewline
7 & Recreation\tabularnewline
8 & Printing and Publishing\tabularnewline
9 & Consumer Goods\tabularnewline
10 & Apparel\tabularnewline
11 & Chemicals\tabularnewline
12 & Textiles\tabularnewline
13 & Fabricated Products\tabularnewline
14 & Electrical Equipment\tabularnewline
15 & Automobiles and Trucks\tabularnewline
16 & Aircraft, ships, and Railroad Equipment\tabularnewline
17 & Industrial Mining\tabularnewline
18 & Petroleum and Natural Gas\tabularnewline
19 & Utilities\tabularnewline
20 & Telecommunication\tabularnewline
21 & Personal and Business Services\tabularnewline
22 & Business Equipment\tabularnewline
23 & Transportation\tabularnewline
24 & Wholesale\tabularnewline
25 & Restaurants, Hotels, and Motels\tabularnewline
26 & Banking, Insurance, Real Estate\tabularnewline
\hline
\end{tabular}
\end{table}

\section{Ergodicity of the \corrPMMHPG \label{SSS: ergodidity}}
Proposition~\ref{ergodicity of SS1} shows that the \corrPMMHPG{} converges to $\wt \pi^N$ in total variation norm
if assumption~\ref{ass: ass for ergodicity} holds;
the conditions of assumption~\ref{ass: ass for ergodicity} hold for most applications, and in particular for the applications in our article.
This section discusses the ergodicity of the \corrPMMHPG{} under conditions
that hold for our applications. Define,
\begin{align*}
w_1^\theta(x_1) &:= \frac{g_1^\theta(y_1|x_1)f_1^\theta(x_1)}{m_1^\theta(x_1)} \quad \text{and} \quad
w_t^\theta(x_t,x_{t-1}) := \frac{g_t^\theta(y_t|x_t)f_t^\theta(x_t|x_{t-1})}{m_t^\theta(x_t|x_{t-1})} \quad  \text{for $t \geq 2$},
\end{align*}
and assume that,
\begin{assumption} \label{ass: ass for ergodicity}
\begin{enumerate}
\item [(i)]
$0< w_1^\theta (x_1) < \infty $ and  $0 < w_t^\theta(x_t,x_{t-1}) < \infty$ for $ t \geq 2$,
for all $\theta \in \Theta$ and $x_t,x_{t-1} \in \mathcal{X}$.
\item [(ii)] For $i=1, \dots, p_1$,
$0< q_i(\theta_i|v_{x,1:T}^{1:N}, v_{A,1:T-1}, \theta_{-i},\theta_i^*) < \infty $
for all $(\theta_i, \theta_{-i}), (\theta_i^\ast, \theta_{-i}) \in \Theta$, and for all $v_{x,1:T}^{1:N}, v_{A,1:T-1}$.
\item [(iii)]
For $i=p_1+1, \dots, p$,
$0< q_i(\theta_i|x_{1:T},j_{1:T} , \theta_{-i}, \theta_i^\ast) < \infty $
for all $(\theta_i, \theta_{-i}), (\theta_i^\ast, \theta_i) \in \Theta$, and for all $x_{1:T} \in \mathcal{X}^{1:T} $
and $j_{1:T} \in \{1,\dots, N\}^T$.
\end{enumerate}
\end{assumption}

\begin{proposition}\label{ergodicity of SS1}
Suppose that assumptions~\ref{assu:propstatespace} to
\ref{ass: ass for ergodicity} hold.
Then \corrPMMHPG{} ( Algorithm \ref{alg:Sampling-Scheme:-The correlated PMMH+PG})
converges to the target distribution $\widetilde\pi^{N}$ (\eqref{eq:Target distribution}) in total variation norm.

\begin{proof}%[Proof of Proposition~\ref{ergodicity of SS1}]
It is convenient to use the following notation
\begin{eqnarray*}
\widetilde{\pi}^{N}\left(dv_{x,1:T}^{1:N},dv_{A,1:T-1}^{1:N},j_{1:T},d\theta\right)
=
\widetilde{\pi}^{N}\left(dv_{x,1:T}^{j_{1:T}},dv_{x,1:T}^{-j_{1:T}},dv_{A,1:T-1}^{1:N},j_{1:T},d\theta\right)
\end{eqnarray*}
to partition the variables $v_{x,1:T}^{1:N}$ into $v_{x,1:T}^{j_{1:T}}$ and $v_{x,1:T}^{-j_{1:T}}$,
which are the variables selected and not selected by the indices $j_{1:T}$ respectively.

Without loss of generality,
let $D \in {\cal B}\left( \mathfrak{V}_{x}^{T} \right)$,
$E \in {\cal B}\left( \mathfrak{V}_{x}^{(N-1)T} \right)$,
$F \in {\cal B} \left( \mathfrak{V}_{A}^{T}  \right)$,
$j_{1:T}^{'} \in \{1, \ldots, N\}^T$
and $G \in {\cal B} \left( \Theta \right)$ be such that
$\widetilde{\pi}_{N}\left(  D \times E \times F \times \{j_{1:T}^{'}\} \times G \right) > 0$.
Denote the law of the process defined by 
Algorithm~\ref{alg:Sampling-Scheme:-The correlated PMMH+PG} 
by ${\cal L}_{E}(\cdot)$ and note that 
assumptions~\ref{assu:propstatespace} and \ref{ass: ass for ergodicity} parts (i), (ii) and (iii) imply that
\begin{eqnarray*}
{\cal L}_{E}\left( \left( V_{x,1:T}^{-j_{1:T}}(k), V_{A,1:T-1}^{1:N},j_{1:T}(k), J_{1:T}(k), \theta(k)  \right) \in E \times F \times \{j_{1:T}\} \times G \right) > 0
\end{eqnarray*}
for all $k > 0$ and all $j_{1:T} \in \{1, \ldots, N\}^T$.
Applying this result repeatedly shows that
\begin{eqnarray*}
{\cal L}_{E}\left( \left( V_{x,1:T}^{j_{1:T}}(k), V_{x,1:T}^{-j_{1:T}}(k), V_{A,1:T-1}^{1:N},j_{1:T}(k), J_{1:T}(k), \theta(k)  \right) \in D \times E \times F \times \{j_{1:T}\} \times G \right) > 0
\end{eqnarray*}
for all $k > 1$ and all $j_{1:T} \in \{1, \ldots, N\}^T$,
which proves the irreducibility and aperiodicity of the Markov chain defined by Algorithm~\ref{alg:Sampling-Scheme:-The correlated PMMH+PG}.
\end{proof}
\end{proposition}

\subsection{Ergodicity of the \corrPMMHPG{}  for the factor SV model\label{SSS: ergodidity}}
Sampling scheme~\ref{alg:Sampling-Scheme:-The correlated PMMH+PG for factor SV 
model } in section~\ref{subsec:Correlated-PMMH+PG-sampling}
has the stationary distribution~\eqref{eq:Target distribution factor model} by 
construction.
The transition kernel of sampling scheme~\ref{alg:Sampling-Scheme:-The correlated PMMH+PG for factor SV model } is a composite of the transition kernels
discussed in the proof of proposition~\ref{ergodicity of SS1} together with the transition kernels for $ \beta$ and $ f_{1:T}$ which are positive;
conditioning on the values of $ \beta$ and $ f_{1:T}$ does not change the accessible sets of the remaining variables.
Therefore, 
sampling scheme~\ref{alg:Sampling-Scheme:-The correlated PMMH+PG for factor SV model } is ergodic
using a similar proof to proposition~\ref{ergodicity of SS1}.

\end{document}